\tikzstyle{basic}=[fill=white, draw=black, shape=circle]
\tikzstyle{square}=[fill=white, draw=black, shape=rectangle]
\tikzstyle{big dashed}=[fill=white, draw=black, shape=circle, minimum width=1cm, dashed]
\tikzstyle{vertical ellipse dashed}=[fill=none, draw=blue, minimum width=0.75cm, minimum height=3cm, ellipse, dashed, tikzit shape=rectangle, tikzit draw=blue, tikzit fill=white]
\tikzstyle{small vertical ellipse dashed}=[fill=none, draw=blue, shape=circle, tikzit fill=white, tikzit draw=blue, dashed, minimum width=0.75cm, minimum height=1.5cm, tikzit shape=rectangle, ellipse]
\tikzstyle{tiny vertical ellipse dashed}=[fill=none, draw=blue, shape=circle, tikzit fill=white, ellipse, dashed, minimum width=0.75cm, minimum height=1cm, tikzit shape=rectangle]
\tikzstyle{red}=[fill=red, draw=black, shape=circle]
\tikzstyle{green}=[fill={rgb,255: red,0; green,128; blue,128}, draw=black, shape=circle]
\tikzstyle{blue}=[fill=blue, draw=black, shape=circle]
\tikzstyle{huge dashed}=[fill=white, draw=black, shape=circle, dashed, minimum width=2cm]
\tikzstyle{medium}=[fill=white, draw=black, shape=circle, minimum width=1cm]
\tikzstyle{pale green}=[fill={rgb,255: red,173; green,231; blue,0}, draw=black, shape=circle, minimum width=1cm]
\tikzstyle{horizontal ellipse dashed}=[fill=white, draw=black, tikzit draw=magenta, tikzit shape=rectangle, minimum width=3cm, minimum height=0.75cm, ellipse, dashed]
\tikzstyle{minsize}=[fill=white, draw=black, shape=circle, minimum width=0.75cm]
\tikzstyle{horizontal ellipse green}=[fill={rgb,255: red,191; green,255; blue,0}, draw=black, tikzit draw={rgb,255: red,191; green,255; blue,0}, tikzit shape=rectangle, minimum width=3cm, minimum height=0.75cm, ellipse, dashed]
\tikzstyle{horizontal ellipse blue}=[fill={rgb,255: red,107; green,203; blue,255}, draw=black, tikzit draw=blue, tikzit shape=rectangle, minimum width=3cm, minimum height=0.75cm, ellipse, dashed]
\tikzstyle{smallblack}=[fill=black, draw=black, shape=circle, inner sep=0 pt, minimum size=5 pt]
\tikzstyle{smallSquare}=[fill=white, draw=black, shape=rectangle, inner sep=0 pt, minimum size=6 pt]
\tikzstyle{smallCircle}=[fill=white, draw=black, shape=circle, inner sep=0 pt, minimum size=20 pt]
\tikzstyle{big vertical ellipse dashed}=[fill=none, draw=blue, shape=circle, tikzit shape=rectangle, ellipse, dashed, minimum width=0.95cm, minimum height=3.7cm]
\tikzstyle{smallred}=[fill={rgb,255: red,222; green,45; blue,38}, draw={rgb,255: red,222; green,45; blue,38}, shape=circle, inner sep=0 pt, minimum size=5 pt]
\tikzstyle{smallblue}=[fill=blue, draw=blue, shape=circle, inner sep=0pt, minimum size=5pt]
\tikzstyle{small green}=[fill={rgb,255: red,0; green,107; blue,61}, draw={rgb,255: red,0; green,107; blue,61}, shape=circle, inner sep=0pt, minimum size=5pt]
\tikzstyle{med red}=[fill=red, draw=red, shape=circle, inner sep=0pt, minimum size=5pt]
\tikzstyle{med blue}=[fill=blue, draw=blue, shape=circle, inner sep=0pt, minimum size=5pt]
\tikzstyle{med green}=[fill={rgb,255: red,0; green,107; blue,61}, draw={rgb,255: red,0; green,107; blue,61}, shape=circle, inner sep=0pt, minimum size=5pt]
\tikzstyle{caterpillar_nodes}=[fill={rgb,255: red,255; green,200; blue,112}, draw=black, shape=circle, minimum width=1.2cm]
\tikzstyle{clique}=[fill=black, draw=black, shape=circle, minimum width=1.3cm]
\tikzstyle{expander}=[fill={rgb,255: red,175; green,183; blue,255}, draw=black, shape=circle, minimum width=4cm]
\tikzstyle{P}=[fill={rgb,255: red,255; green,123; blue,123}, draw=black, shape=circle, minimum width=3cm]
\tikzstyle{largeblack}=[fill=black, draw=black, shape=circle, minimum size=10pt]
\tikzstyle{directed}=[->, line width=1pt]
\tikzstyle{undirected}=[-, line width=1pt]
\tikzstyle{directed red}=[draw=red, ->, line width=1pt]
\tikzstyle{directed green}=[draw={rgb,255: red,0; green,128; blue,128}, ->, line width=1pt]
\tikzstyle{directed blue}=[draw=blue, ->, line width=1pt]
\tikzstyle{directed purple}=[draw={rgb,255: red,128; green,0; blue,128}, ->, line width=1pt]
\tikzstyle{undirected red}=[-, draw={rgb,255: red,222; green,45; blue,38}, line width=1pt]
\tikzstyle{undirected green}=[-, draw={rgb,255: red,0; green,107; blue,61}, line width=1pt]
\tikzstyle{undirected blue}=[-, draw=blue, line width=1pt]
\tikzstyle{undirected purple}=[-, draw={rgb,255: red,128; green,0; blue,128}, line width=1pt]
\tikzstyle{undirected dashed}=[-, line width=1pt, dashed]
\tikzstyle{orange dashed}=[-, draw={rgb,255: red,255; green,128; blue,0}, dashed, line width=1.5pt]
\tikzstyle{directed dash}=[->, dashed]
\tikzstyle{blue dashed}=[-, draw=blue, dashed, line width=1pt]
\tikzstyle{green dashed}=[-, draw={rgb,255: red,0; green,162; blue,0}, dashed, line width=1pt]
\tikzstyle{blue filled}=[-, fill={blue!20}, draw=blue, line width=1pt, opacity=0.5, tikzit fill=white]
\tikzstyle{red filled}=[-, fill={red!20}, line width=1pt, draw=red, opacity=0.5, tikzit fill=white]
\tikzstyle{green filled}=[-, line width=1pt, draw={rgb,255: red,0; green,107; blue,61}, opacity=0.5, tikzit fill=white, fill={rgb,255: red,149; green,255; blue,179}]
\tikzstyle{orange filled}=[-, fill={orange!20}, draw=orange, line width=1pt, opacity=0.5, tikzit fill=white]
\tikzstyle{undirected dashed}=[-, draw=black, dashed, line width=1.3pt]
\tikzstyle{thick}=[-, line width=3pt]
\tikzstyle{red dashed}=[-, line width=1.3pt, dashed, draw={rgb,255: red,222; green,45; blue,38}]
\theoremstyle{plain}
\newtheorem{theorem}{Theorem}[section]
\newtheorem{lemma}[theorem]{Lemma}
\newtheorem{corollary}[theorem]{Corollary}
\theoremstyle{definition}
\newtheorem{definition}[theorem]{Definition}
\theoremstyle{remark}
\newcommand{\vol}{\mathrm{vol}}
\newcommand{\lp}{\left (}
\newcommand{\rp}{\right )}
\newcommand{\barg}{\Bar{g}}
\newcommand{\hatg}{\widehat{g}}
\newcommand{\hatf}{\widehat{f}}
\newcommand{\norm}[1]{\left\| #1\right\|}   
\newcommand{\transpose}{\intercal}
\newcommand{\ex}[1]{\mathbb{E}[\,#1\,]}
\newcommand{\poly}{\operatorname{poly}}
\renewcommand{\deg}{\mathrm{deg}}
\newcommand{\deglist}{\mathbf{sp}^*}
\DeclareMathOperator{\spn}{span}
\DeclareMathOperator{\dmn}{dim}
\renewcommand{\leq}{\leqslant}
\renewcommand{\geq}{\geqslant}
\renewcommand{\tilde}{\widetilde}
\renewcommand{\epsilon}{\varepsilon}
\newcommand{\CGap}{C_{\ref{lem:MS22+}}}
\newcommand{\Enew}{E_\mathrm{new}}
\newcommand{\Vnew}{V_\mathrm{new}}
\newtheorem{claim}{Claim}[theorem]
\icmltitlerunning{Dynamic Spectral Clustering with Provable 
Approximation Guarantee}
\begin{document}

\twocolumn[
\icmltitle{Dynamic Spectral Clustering with Provable Approximation Guarantee}


\begin{icmlauthorlist}
\icmlauthor{Steinar Laenen}{yyy}
\icmlauthor{He Sun}{yyy}
\end{icmlauthorlist}

\icmlaffiliation{yyy}{School of Informatics, University of Edinburgh, United Kingdom}

\icmlcorrespondingauthor{Steinar Laenen}{steinar.laenen@ed.ac.uk}
\icmlcorrespondingauthor{He Sun}{h.sun@ed.ac.uk}

\icmlkeywords{graph clustering, spectral graph theory}

\vskip 0.3in
]



\printAffiliationsAndNotice{}  


\begin{abstract}
 
This paper studies clustering   algorithms for dynamically evolving graphs $\{G_t\}$, in which new edges (and potential new vertices) are added into a graph, and the underlying cluster structure of the graph can gradually change. The paper  proves that, under some mild condition on the cluster-structure, 
the clusters of the final graph $G_T$ of $n_T$ vertices at time $T$ can be well approximated by a  dynamic variant of the spectral clustering algorithm. The algorithm runs in  amortised update time $O(1)$ and  query time $o(n_T)$. 
 Experimental studies on both synthetic and real-world datasets  further  confirm the practicality of our designed algorithm.
 \end{abstract}

\section{Introduction}

For any graph $G=(V,E)$ and parameter $k\in\mathbb{N}$ as input, the objective of graph clustering is to partition the vertex set of $G$ into $k$ clusters such that vertices within each cluster are better connected than to the rest of the graph. Since large-scale graphs are commonly used to model practical datasets, designing efficient graph clustering algorithms  is an important problem in machine learning and related fields.

In practice, these large-scale graphs usually evolve over time: not only are new vertices and edges added into a graph, but the graph's clusters  could also change gradually, resulting in a new cluster-structure in the long  term.   Instead of periodically running a   clustering algorithm from scratch,  it is important to design algorithms that can quickly identify and return the new clusters in dynamically evolving graphs.

In this paper we study clustering for dynamically evolving graphs, and obtain the following results. As the first and conceptual contribution, we propose a model for dynamic graph clustering.  In contrast to the classical model for  dynamic graph algorithms~\cite{Thorup07,stoc/BeimelKMNSS22}, our proposed model considers not only edge insertions but also vertex insertions; as such the underlying graph can gradually form a new cluster-structure with a different number of clusters from the initial graph.

As the second and algorithmic result, we design a randomised  graph clustering algorithm that works in the above-mentioned model,
and our result is as follows:

\begin{theorem}[Informal statement of Theorem~\ref{thm:main_dynamic_SC}]\label{thm:informal} Let $G_1=(V_1,E_1)$ be a graph of $n_1$ vertices and $k=\widetilde{O}(1)$ clusters.\footnote{We use $\widetilde{O}(n)$ to represent $O(n \cdot\log^c
(n))$ for  constant 
$c$.}    Assume that new edges, which could be   adjacent to new vertices, are added to $G_t$ at each time $t$ to obtain $G_{t+1}$, and there are  $O(\poly(n_1))$   added  edges in total   at time $T=O(\poly(n_1))$ to form $G_T$ of  $n_T$ vertices and $k'$ clusters. Then, there is a randomised algorithm  such that    the following hold with high probability:
\begin{itemize}\itemsep 0.06cm
\item The initial $k$ clusters of $G_1$ can be approximately computed in  $\widetilde{O}(|E_1|)$ time.
\item The  new $k'$ clusters of $G_T$  can be approximately computed  with amortised  update time  $O(1)$ and query time $o(n_T)$. 
\end{itemize}
\end{theorem}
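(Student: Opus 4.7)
My plan is to combine a fast static spectral-clustering subroutine, a lazy-rebuild dynamic maintenance scheme, and a sublinear local query procedure.

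For the \emph{initial clustering} of $G_1$, I would compute an approximate spectral embedding $\widehat{F}:V_1\to\R^{k}$ by running $\widetilde{O}(1)$ steps of power iteration on the normalised Laplacian $\lapn$, warm-started by a Johnson--Lindenstrauss sketch so that only $O(\log n_1)$ vectors need to be tracked; since each matrix--vector product costs $O(|E_1|)$, the total cost is $\widetilde{O}(|E_1|)$. Under a mild eigengap assumption on $G_1$, a Davis--Kahan bound guarantees that $\widehat{F}$ is $\ell_2$-close to the true bottom-$k$ spectral embedding of $\lapn$, after which an approximate $k$-means on $\widehat{F}$ returns $k$ clusters that approximate the optimal ones in the Peng--Sun--Zanetti framework. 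This handles the first bullet.

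For the \emph{dynamic maintenance}, I would carry a small amount of state: the most recent embedding matrix $Y_t$, the corresponding cluster centres $\{c_i^{(t)}\}$, degree counters, and a list of vertices inserted since the last rebuild. Processing an insertion $(u,v)$ is $O(1)$: update the degrees, and, if an endpoint is new, either assign it to the cluster of its neighbour via the stored centres or defer it as ``unresolved''. To control the spectral drift between $Y_t$ and the true embedding, I would trigger a full rebuild (i.e.\ rerun the static algorithm) whenever the number of inserted edges since the last rebuild exceeds a constant fraction of the current $|E_t|$. A geometric-doubling argument over the $\poly(n_1)$ time horizon caps the total rebuild work at $\widetilde{O}(|E_T|)$, which divided by the $\Omega(|E_T|)$ insertions gives amortised update time $O(1)$.

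For the \emph{query step}, given a vertex $v$, I would run a truncated lazy random walk from $v$ of length and support $\polylog(n_T)$, project the resulting distribution onto the stored centres $\{c_i^{(t)}\}$, and return the nearest one; since the walk visits only $\polylog(n_T)$ vertices, the query costs $o(n_T)$. Correctness reduces to showing that every intermediate $G_t$ inherits a sufficiently strong gap condition from $G_1$, so that the walk distribution concentrates inside the true cluster of $v$.

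The \emph{main obstacle} is the gradual transition from $k$ to $k'$ clusters. Between rebuilds I must argue that at most $o(n_t)$ vertices can drift across the correct boundary, and at each rebuild I must detect when a cluster has split or a new one has formed. My plan is to anchor clusters across consecutive rebuilds by matching new centres to old ones (e.g.\ via a minimum-cost assignment in the embedding), and to invoke a sin-$\Theta$ perturbation bound to charge misclassifications to the incremental spectral drift, which is itself controlled by the rebuild schedule.
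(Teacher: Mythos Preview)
Your proposal diverges substantially from the paper's approach and contains a genuine gap in the query step.

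The paper's core data structure is a \emph{contracted graph} $\widetilde{G}_t$ on $k+O(\log^{\gamma}n_t)=\widetilde{O}(1)$ vertices: one super-vertex per cluster of the last full spectral clustering, plus the handful of ``non-contracted'' vertices whose degree has doubled or which are newly inserted. The query runs ordinary spectral clustering on $\widetilde{G}_T$ in $\widetilde{O}(1)$ time, and the resulting partition of the super-vertices immediately induces a partition of \emph{all} of $V_T$ because each vertex's super-vertex membership is stored explicitly. A full re-clustering on the sparsifier $H_T$ is triggered only every $\log^{\gamma}n$ insertions, which is what drives the amortised query cost down to $o(n_T)$. Crucially, the new number of clusters $k'$ is read off from the eigengap of $\mathcal{L}_{\widetilde{G}_T}$, which Lemmas~5.3 and~5.4 show tracks the eigengap of $\mathcal{L}_{G_T}$.

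Your plan misses this mechanism on two counts. First, the query semantics: the theorem asks for the \emph{entire} $k'$-way partition in $o(n_T)$ time, not the label of a single vertex. Your truncated random walk is a per-vertex procedure; producing the full partition would require $n_T$ such walks and hence $n_T\cdot\polylog(n_T)$ time, which is not $o(n_T)$. The paper sidesteps this by keeping every vertex tagged with a super-vertex, so that clustering the $\widetilde{O}(1)$-vertex contracted graph yields the global partition for free. Second, detecting the change $k\to k'$: matching new centres to old ones via a min-cost assignment cannot detect a cluster that forms entirely out of newly inserted vertices (the paper's model explicitly allows this), nor can a sin-$\Theta$ bound, since the perturbation is not small when the eigenspace dimension itself changes. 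The paper handles this by computing the eigengap of the tiny contracted Laplacian, which is cheap precisely because $|\widetilde{V}_T|=\widetilde{O}(1)$; your data structures contain nothing of comparable size on which an eigengap can be tested.

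A secondary issue: your correctness argument for the walk assumes ``every intermediate $G_t$ inherits a sufficiently strong gap condition from $G_1$'', but the paper's dynamic gap assumption only guarantees a good gap at times spaced $\Theta(\log^{\gamma}n_t)$ apart, not at every step.
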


To   examine  the result, we   notice that, although the   number of clusters $k$ in   $G_1$ can be    identified with the classical  eigen-gap heuristic~\cite{nips/NgJW01,sac/Luxburg07}, computing an eigen-gap is   expensive and cannot be directly applied to  determine the   change of $k$ in  dynamically evolving graphs. Our result   shows that 
the new number of clusters $k'$  can be    computed   by a dynamic clustering algorithm with sublinear  query time. Secondly, as the running time of a clustering algorithm is at least linear in the number of edges in  $G_T$ and it takes $\Omega(n_T)$ time to output the cluster membership of all the vertices,
obtaining an $o(n_T)$ amortised query time\footnote{Throughout the paper we use $T$ to denote  query time, and $t$ as   arbitrary time throughout the sequence of graphs $\{G_t\}$.} is significant. To the best of our knowledge, our work presents the first  such result with respect to theoretical guarantees of the output clusters, and   time complexity.

Our algorithm not  only achieves strong theoretical guarantees, but also works very well in practice.   For instance, for  input graphs with 300,000 vertices and up to  490,000,000 edges generated from the stochastic block model, our algorithm runs more than 100 times faster than repeated execution of spectral clustering on the updated graphs, while obtaining a comparable clustering result.

\subsection{Overview of the Algorithm}
For any input graph $G_1$ with a well-defined cluster structure, we first construct a \emph{cluster-preserving sparsifier} $H_1$ of $G_1$, which is a sparse subgraph of $G_1$ that maintains its cluster-structure, and employ spectral clustering on $H_1$ to obtain the initial $k$ clusters of $G_1$.  After this, with a new edge arriving at  every time $t$, our designed algorithm applies two  components to track the cluster-structure of  $G_t$.

The first component is a dynamic algorithm that maintains a cluster-preserving sparsifier $H_t$ for $G_t$. Our designed algorithm is based on sampling edges with probability proportional to the degrees of their endpoints, and these edges get resampled if their degrees have significantly changed. We show that $H_t$ always 
  preserves the cluster-structure of $G_t$, and the algorithm's amortised update time  complexity is $O(1)$.

Our second component  is an algorithm that dynamically maintains a \emph{contracted graph} $\widetilde{G}_t$ of  $G_t$, and this contracted graph is used to sketch the cluster-structure of $G_t$. For the first input graph $G_1$ and the output of spectral clustering on $H_1$, our initial contracted graph $\widetilde{G}_1$ consists of $k$  super vertices with self-loops: these super  vertices correspond to the $k$ clusters in $G_1$, and  are connected by edges with weight equal to the cut values of the corresponding clusters in $H_1$. After that, when new edges (and potentially new vertices) arrive  over time, our algorithm updates $\widetilde{G}_t$ such that (new) clusters are represented by either the same super vertices, newly added vertices, or a combination of both. The algorithm further updates the edge weights between the super vertices.
With  slight increase in the number of vertices of $\widetilde{G}_t$ over time, we prove that the cluster-structure in $G_t$ is approximately preserved in $\widetilde{G}_t$. In particular, when new clusters are formed in $G_t$, this new cluster-structure of $G_t$   can be identified by the eigen-gap of $\widetilde{G}_t$'s Laplacian matrix. See Figure~\ref{fig:bucketing_main} for the illustration of our approach.

\subsection{Related work}

Our work   directly relates  to a number of works on   incremental spectral clustering algorithms~(e.g.,~\cite{DHANJAL2014440,MartinLV18,Ning}).  These works   usually   rely on analysing the change of approximate eigenvectors and don't show the approximation guarantee of the returned clusters.  Many works along this direction further employ matrix  perturbation theory in their analysis, requiring    that the total number of vertices in a graph is fixed.

Our work is also linked to   related dynamic graph algorithm problems~(e.g.,~\cite{BernsteinBGNSS022,SaranurakW19}). However, most works in   dynamic graph algorithms   focus on the design of dynamic algorithms in a \emph{general} graph, while for dynamic clustering one needs to assume the presence of cluster-structures in the initial and final graphs, 
 such that the algorithm's performance can be rigorously analysed.  Nevertheless, some of our presented techniques, like the adaptive sampling, are inspired by the dynamic graph algorithms literature.

\section{Preliminaries\label{sec:preliminaries}}

\begin{figure*}[t]
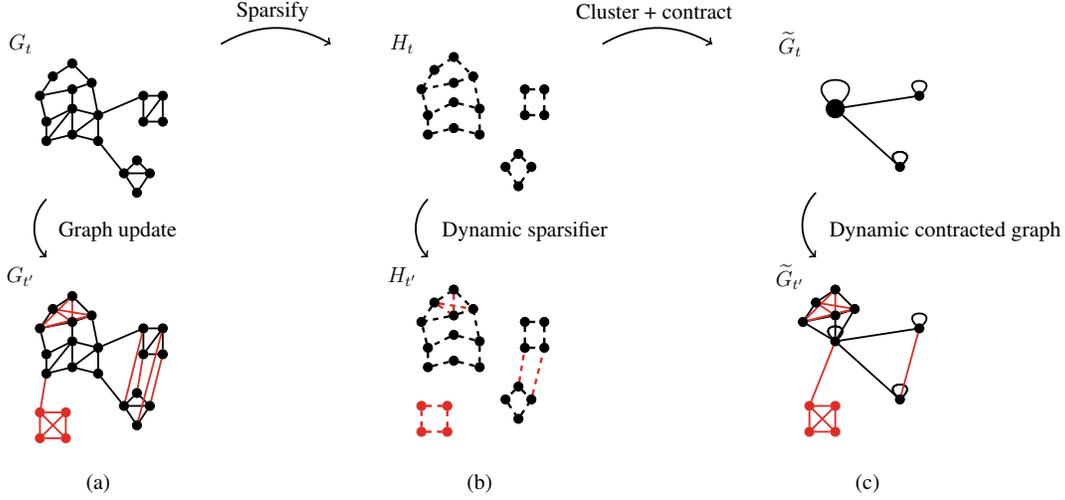

\vskip 0.2in
    \begin{center}
    \centerline{\resizebox{14.2cm}{!}{%
    \tikzfig{figures/dynamic_clustering_high_level_overview_wide}
    }}
\caption{Illustration of our technique. The black and red edges in  Figure~(a) are the   edges  in $G_t$ and the added ones in $G_{t'}$; the dashed black and red edges in  Figure~(b) are the ones added in $H_t$ and $H_{t'}$; the black and red edges in Figure~(c) are the ones in $\widetilde{G}_t$ and $\widetilde{G}_{t'}$.} \label{fig:bucketing_main}
 \end{center}
\vskip -0.2in
\end{figure*}

\subsection{Notation}

Let $G=(V,E,w)$ be an undirected graph with $|V| = n$ vertices, $|E| = m$ edges, and weight function $w: V\times V\rightarrow \mathbb{R}_{\geq 0}$. 
For any edge $e = \{u, v\} \in E$, we write $w_G(u,v)$ or $w_G(e)$ to express the   weight of $e$. For a vertex $u \in V$, we denote its  \emph{degree}  by $\deg_G(u) \triangleq \sum_{v \in V} w_G(u,v)$, and the volume for any $S \subseteq V$ is defined as $\vol_G(S)\triangleq \sum_{u\in S} \deg_{G}(u)$. For any   $S, T \subset V$, we define the \emph{cut value} between $S$ and $T$ by  $w_G(S, T) \triangleq \sum_{e \in E_G(S, T)} w_G(e)$, where $E_G(S, T)$ is the set of edges between $S$ and $T$. Moreover, for any $S\subset V$,  the conductance of $S$ is defined as 
\[
\Phi_G(S) \triangleq \frac{w_G(S, V\setminus S) }{\min\{\vol_G(S), \vol_{G}(V \setminus S)\}}
\]
if $S \neq \emptyset$, and $\Phi_G(S) = 1$ if $S = \emptyset$.  For any integer $k\geq 2$, we call 
 subsets of vertices $A_1,\ldots, A_k$ a $k$-way partition of $G$ if $\bigcup_{i=1}^k A_i = V$ and 
 $A_i\cap A_j=\emptyset$ for different $i$ and $j$. We   define the \emph{$k$-way expansion} of $G$ by
\[
    \rho_G(k) \triangleq \min_{\mathrm{partitions \:} A_1, \dots, A_k} \max_{1\leq i \leq k} \Phi_G(A_i). \]

 Our analysis is based on the spectral properties of graphs, and we list the basics of spectral graph theory. 
For a graph $G = (V, E, w)$,  let $D_G\in\mathbb{R}^{n\times n}$ be the diagonal matrix defined by $D_G(u,u) = \deg_G(u)$ for all $u \in V$. We  denote by  $A_G\in\mathbb{R}^{n\times n}$  the  \emph{adjacency matrix}  of $G$, where $A_G(u,v) = w_G(u,v)$ for all $u, v \in V$. The \emph{normalised Laplacian matrix} of $G$ is defined as $\mathcal{L}_G \triangleq I - D_G^{-1/2} A_G D_G^{-1/2}$, where $I$ is the $n \times n$ identity matrix.  The normalised Laplacian $\mathcal{L}_G$ is symmetric and real-valued, and   has $n$ real eigenvalues which we   write as  $0=\lambda_1(\mathcal{L}_G) \leq \ldots \leq \lambda_n(\mathcal{L}_G)\leq 2$; we use $f_i\in\mathbb{R}^{n}(1\leq i \leq n )$ to express the eigenvector of $\mathcal{L}_G$ corresponding to $\lambda_i$.

\begin{lemma}[Higher-order Cheeger inequality, \cite{higherCheeg}]\label{lem:Higher Cheeger}
 There is an absolute constant $C_{\ref{lem:Higher Cheeger}}$ such that it holds for any graph $G$ and $k \geq 2$ that 
\begin{equation}
    \frac{\lambda_k(\mathcal{L}_G)}{2} \leq \rho_G(k) \leq C_{\ref{lem:Higher Cheeger}} \cdot k^3 \sqrt{\lambda_k(\mathcal{L}_G)}. \label{eq:Higher Cheeger}
\end{equation}
\end{lemma}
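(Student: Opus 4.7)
The plan is to prove the two inequalities separately: the lower bound $\lambda_k(\mathcal{L}_G)/2 \leq \rho_G(k)$ is an easy variational argument, whereas the upper bound $\rho_G(k) \leq C_{\ref{lem:Higher Cheeger}}\cdot k^3\sqrt{\lambda_k(\mathcal{L}_G)}$ is the main technical content and requires a spectral-embedding-plus-partitioning strategy.

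For the lower bound, I would fix a $k$-way partition $A_1,\ldots,A_k$ realising $\rho_G(k)$, and for each $i$ introduce the test vector $g_i \triangleq D_G^{1/2}\mathbf{1}_{A_i}$. Direct computation gives $g_i^\top\mathcal{L}_G g_i = w_G(A_i,V\setminus A_i)$, $\|g_i\|^2=\vol_G(A_i)$, and the cross terms $g_i^\top\mathcal{L}_G g_j = -w_G(A_i,A_j)$ for $i\neq j$. The $g_i$'s have disjoint supports and are therefore orthogonal; for any $x=\sum_i c_i g_i$ in their span, bounding $-2\sum_{i<j}c_i c_j w_G(A_i,A_j)$ by the AM--GM inequality yields
\[
\frac{x^\top\mathcal{L}_G x}{\|x\|^2} \;\leq\; \frac{2\sum_i c_i^2 w_G(A_i,V\setminus A_i)}{\sum_i c_i^2\vol_G(A_i)} \;\leq\; 2\max_i\Phi_G(A_i)\;\leq\;2\rho_G(k),
\]
so Courant--Fischer yields $\lambda_k(\mathcal{L}_G)\leq 2\rho_G(k)$, which is the stated bound.

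For the upper bound, I would use the bottom $k$ eigenvectors $f_1,\ldots,f_k$ of $\mathcal{L}_G$ to build the spectral embedding $F:V\to\mathbb{R}^k$ with $F_j(u)\triangleq f_j(u)/\sqrt{\deg_G(u)}$. Two global identities control this embedding: the total mass $\sum_{u\in V}\deg_G(u)\|F(u)\|^2 = k$, and the total Dirichlet energy $\sum_{\{u,v\}\in E}w_G(u,v)\|F(u)-F(v)\|^2 = \sum_{i=1}^k\lambda_i(\mathcal{L}_G)\leq k\lambda_k(\mathcal{L}_G)$. The heart of the argument is to convert $F$ into $k$ \emph{disjointly supported} non-negative functions $\psi_1,\ldots,\psi_k$ whose individual Rayleigh quotients are all $O(k^2\lambda_k(\mathcal{L}_G))$. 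Once this is achieved, applying the standard single-set Cheeger rounding to each $\psi_i$ produces $k$ disjoint sets $S_1,\ldots,S_k$ of conductance $O(k\sqrt{\lambda_k(\mathcal{L}_G)})$, and one final factor of $k$ in the accounting yields the claimed $k^3\sqrt{\lambda_k(\mathcal{L}_G)}$ bound.

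The main obstacle is constructing the $k$ disjointly supported ``smooth'' functions without losing large polynomial factors in $k$. A naive cone partition of $\mathbb{R}^k$ fails because vertices near the partition boundaries contribute heavily to the Dirichlet energy of the restricted functions, so any localisation procedure must be smoothness-preserving. The resolution in \cite{higherCheeg} uses a careful random-projection and localisation scheme together with a smoothness-preserving truncation, and the resulting $k^3\sqrt{\lambda_k}$ dependence is currently the best known. Since our paper invokes \lemref{Higher Cheeger} as a black-box prerequisite, I would cite \cite{higherCheeg} for this core step and present only the easy direction in full detail.
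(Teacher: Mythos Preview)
Your proposal is correct, and in fact goes further than the paper itself: \lemref{Higher Cheeger} is stated in the paper purely as a cited background result from \cite{higherCheeg}, with no proof or sketch provided. Your variational argument for the lower bound is sound (the AM--GM handling of the cross terms and the mediant inequality give exactly the factor of $2$), and your plan to defer the hard direction to \cite{higherCheeg} while outlining the spectral-embedding-plus-localisation strategy matches how the result is actually established there. Since the paper treats the lemma as a black box, there is nothing to compare against; your write-up would simply be supplementary exposition.
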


\subsection{Spectral Clustering}  
 Spectral clustering is a popular clustering algorithm used in practice~\cite{nips/NgJW01}, and it can be described with   a few lines of code (Algorithm~\ref{alg:spectralclustering}).

\begin{algorithm}[ht]
    \caption{$\mathsf{SpectralClustering}(G, k)$}\label{alg:spectralclustering}
    
    \begin{algorithmic}[1]
    
    \STATE \textbf{Input:} {Graph $G=(V,E, w)$, number of clusters $k\in\mathbb{N}$}
    
    \STATE \textbf{Output:} Partitioning $P_1, \ldots, P_k$
    
    \STATE Compute eigenvectors $f_1, \ldots, f_k$ of $\mathcal{L}_G$
    
    \FOR{$u \in V$}
        \STATE $F(u) \leftarrow \frac{1}{\sqrt{\deg_G(u)}}\cdot (f_1(u),\ldots, f_k(u))^{\intercal}$
    \ENDFOR
    
    \STATE $P_1, \ldots, P_k \leftarrow k\mbox{-means} (\{F(u)\}_{u\in V}, k)$
    
    \STATE \textbf{Return} $P_1, \ldots, P_k$
    
    \end{algorithmic}
\end{algorithm}

 To analyse the   performance of spectral clustering, we examine the scenario in which there is a large gap between $\lambda_{k+1}(\mathcal{L}_G)$ and $\rho_G(k)$. By the higher-order Cheeger inequality,   a low value of $\rho_G(k)$ ensures that   $V$  can be partitioned into $k$ clusters, each of which has conductance at most $\rho_G(k)$; on the other hand, a large value of $\lambda_{k+1}(\mathcal{L}_G)$ implies  that  any $(k+1)$ partition of $V$ would introduce some $A\subset V$ with   $\Phi_G(A)\geq \rho_G(k+1)\geq \lambda_{k+1}(\mathcal{L}_G)/2$.  
 Based on this,  \citet{peng_partitioning_2017} introduced the parameter  
\begin{equation}\label{eq:Upsilon}
\Upsilon_G(k) \triangleq \frac{\lambda_{k+1}(\mathcal{L}_G)}{\rho_G(k)},
\end{equation}
and showed that a large value of $\Upsilon_G(k)$ is sufficient to guarantee a good performance of   spectral clustering. They further showed that, for a graph $G$ with $m$ edges, spectral clustering runs in $O(m\cdot\log^{\beta} m)$ time for constant $\beta\in\mathbb{R}^+$.

For convenience of notation, we always order the output of spectral clustering by $P_1, \ldots, P_k$ such that $\vol_{G}(P_1) \leq \ldots \leq \vol_{G}(P_k)$. 

\subsection{Model for Dynamic Graph Clustering}

We assume that the initial graph $G_1=(V_1,E_1)$ with $n_1$ vertices satisfies $\lambda_{k+1}(\mathcal{L}_{G_1})=\Omega(1)$ 
and $\rho_{G_1}(k)=O(k^{-8}\log^{-2\gamma}(n_1))$   for some constant $\gamma\in\mathbb{R}^+$. This condition is similar to lower bounding 
 $\Upsilon_{G_1}(k)$, and 
ensures that the initial input graph $G_1$ has $k$ well-defined clusters. After this, the underlying graph is updated through an edge insertion at each time, and let $G_t=(V_t, E_t)$ be the graph constructed at time   $t$.
We assume that  every edge insertion introduces at most one new vertex; as such the underlying graph is always connected, and the  number of vertices $n_t\triangleq |V_t|$ could increase over time. We further assume that,  after every $\Theta(\log^{\gamma}(n_t))$ steps, there is  time $t'$ such that 
$G_{t'} = (V_{t'}, E_{t'})$ presents a   well-defined structure of $k'$ clusters, which is characterised by 
  $\lambda_{k'+1}(\mathcal{L}_{G_{t'}}) = \Omega(1)$ and  $\rho_{G_{t'}}(k') = O(k'^{-8} \cdot \log^{-2\gamma}(n_{t'}))$ for some $k' \in \mathbb{N}$.
  
  Notice that, since both the number of vertices $n_t$ in time $t$ and the number of clusters could change, our above-defined \emph{dynamic gap assumption} allows the underlying graph to  gradually form a new cluster structure, e.g.,   $O(\log^{\gamma}(n_1))$ newly added vertices and their adjacent edges could initially form a small new cluster which gradually ``grows'' into a large one. On the other side, our assumption prevents the disappearance of the underlying graph's cluster-structure throughout the edge updates, which would make the objective function of a   clustering algorithm ill-defined.

\section{Dynamic Cluster-Preserving Sparsifiers\label{sec:dynamic_CPS}}

 A graph sparsifier is a sparse representation of an input graph that inherits certain properties of the
original dense graph, and their efficient construction  plays a key  role in designing a
number of nearly-linear time graph algorithms. However,  typical constructions of graph sparsifiers are based on  fast Laplacian solvers, making them difficult to implement in practice. To overcome this, \citet{SZ19} 
  studied a variant of graph sparsifiers for   graph clustering,  and introduced the notion of a cluster-preserving sparsifier:

\begin{definition}[Cluster-preserving sparsifier]\label{def:CPS}
     Let $G=(V, E)$ be any graph with $k$ clusters, and $\{S_i\}^{k}_{i=1}$ a $k$-way partition of $G$ corresponding to $\rho_G(k)$. We call a re-weighted subgraph $H=(V, F \subset E, w_H)$ a cluster-preserving sparsifier of $G$ if (i) $\Phi_{H}(S_i) = O(k \cdot \Phi_{G}(S_i))$ for $1 \leq i \leq k$, and (ii) $\lambda_{k+1}(\mathcal{L}_H) = \Omega(\lambda_{k+1}(\mathcal{L}_G))$.
\end{definition}

To examine the two conditions of Definition~\ref{def:CPS}, notice that 
 graph $G=(V,E)$ has exactly $k$ clusters if (i) $G$ has $k$ disjoint subsets $S_1,\ldots, S_k$ of low conductance, and (ii) any $(k + 1)$-way partition of $G$ would include some 
$A \subset V$ of high conductance,
which would be implied by a lower bound on $\lambda_{k+1}(\mathcal{L}_G)$ due to \eqref{eq:Higher Cheeger}.
With the well-known eigen-gap heuristic  and theoretical analysis on spectral clustering~\cite{peng_partitioning_2017}, these two conditions  ensure that the $k$ optimal clusters in $G$ have low conductance in $H$ as well.

\subsection{The \textsf{SZ} Algorithm}
  
We first present the algorithm in \cite{SZ19} for constructing a cluster-preserving sparsifier;  we call it the 
\textsf{SZ}  algorithm for simplicity. Given any input graph $G=(V,E)$, the algorithm computes  
\[
p_u(v)  \triangleq \min\left\{ C\cdot \frac{1}{\lambda_{k+1}(\mathcal{L}_G)}\cdot \frac{\log n}{\deg_{G}(u)},1 \right\}\]
\[  
p_v(u)  \triangleq \min\left\{ C\cdot \frac{1}{\lambda_{k+1}(\mathcal{L}_G)}\cdot \frac{\log n}{\deg_{G}(v)},1 \right\},
\]
for every   $e=\{u,v\}$, where $C\in\mathbb{R}^+$ is some constant.

Then, the algorithm samples      $e=\{u,v\}$ with probability 
$
p_e \triangleq p_u(v) + p_v(u) - p_u(v)\cdot p_v(u)$,
and sets the weight of every sampled $e=\{u,v\}$ in $H$ as 
$
w_H(u,v)\triangleq 1/p_e$.
By setting  $F$ as the set  of the sampled edges, the algorithm returns $H=(V, F, w_H)$.   \citet{SZ19} proved that, with high probability,  $H$ has $\widetilde{O}(n)$ edges and is a   cluster-preserving sparsifier of $G$.

On the other side, while Definition~\ref{def:CPS} shows that the optimal clusters $S_i~(1\leq i\leq k)$ of $G$ have low conductance in $H$, it doesn't build the connection \emph{from} the vertex sets of low conductance in $H$ \emph{to} the ones in $G$. In this paper, we prove that such a connection holds as well; this   allows us to apply spectral clustering on  $H$, and reason about the conductance of its returned clusters in $G$.

\begin{lemma}\label{lem:MS22+_GvsH}

 Let $G$ be a graph with $\Upsilon_G(k) = \Omega(k)$ for some $k\in\mathbb{N}$ with optimal clusters $\{S_i\}_{i=1}^k$, and $H$   its cluster preserving sparsifier. Let $\{P_i\}_{i=1}^k$ be the output of spectral clustering on $H$, and without loss of generality let the optimal correspondence of $P_i$ be $S_i$ for any $1\leq i \leq k$. Then, it holds with high probability for any $1\leq i\leq k$ that 
\[
  \vol_G(P_i\triangle S_i)  
  = O \lp \frac{k^2}{\Upsilon_G(k)} \rp \cdot \vol_G(S_i),
\]
\[
 \Phi_{G}(P_i) = O\lp \Phi_{G}(S_i) + \frac{k^2}{\Upsilon_G(k)} \rp,
 \]
where  $A\triangle B\triangleq (A\setminus B)\cup (B\setminus A)$.
\end{lemma}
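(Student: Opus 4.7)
The plan is to compose two ingredients: the sparsifier guarantees from Definition~\ref{def:CPS} and the Peng--Sun--Zanetti analysis of spectral clustering, applied on $H$ rather than on $G$, and then to transfer the resulting bounds back to $G$.

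First, I would use Definition~\ref{def:CPS} to control $\Upsilon_H(k)$. Since $\Phi_H(S_i) = O(k\cdot \Phi_G(S_i))$ for every $i$, taking the maximum gives $\rho_H(k) = O(k\cdot\rho_G(k))$; together with $\lambda_{k+1}(\mathcal{L}_H) = \Omega(\lambda_{k+1}(\mathcal{L}_G))$ this yields $\Upsilon_H(k) = \Omega\!\left(\Upsilon_G(k)/k\right)$. Under the hypothesis $\Upsilon_G(k) = \Omega(k)$, this is $\Omega(1)$ with a controllable constant, so the PSZ analysis applies to spectral clustering executed on $H$ and delivers, for the natural optimal correspondence, a misclassification bound of the form
\[
\vol_H(P_i \triangle S_i) \;=\; O\!\left(\frac{k}{\Upsilon_H(k)}\right)\cdot \vol_H(S_i) \;=\; O\!\left(\frac{k^2}{\Upsilon_G(k)}\right)\cdot \vol_H(S_i).
\]

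Second, I would transfer the volume estimate from $H$ to $G$. Because $H$ samples each edge $e$ with probability $p_e$ and reweights it by $1/p_e$, we have $\Ex{\deg_H(u)} = \deg_G(u)$, and standard Chernoff/Bernstein concentration (as already used inside the \textsf{SZ} analysis of the sparsifier) gives $\vol_H(T) = \Theta(\vol_G(T))$ with high probability whenever $\vol_G(T)$ is not too small. The cluster-structure assumption on $G$ implies each $\vol_G(S_i)$ is a substantial fraction of $\vol_G(V)$, so a union bound over the $k$ clusters yields $\vol_H(S_i) = \Theta(\vol_G(S_i))$. For the symmetric difference, either a direct application of the same concentration or a lower-order slack term absorbed into the already present $O(k^2/\Upsilon_G(k))$ factor yields $\vol_G(P_i \triangle S_i) = O(k^2/\Upsilon_G(k))\cdot \vol_G(S_i)$.

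Third, for the conductance statement, I would write $P_i = (S_i \setminus M_i) \cup N_i$ with $M_i \subset S_i$ and $N_i \subset V \setminus S_i$, so that $M_i \cup N_i = P_i \triangle S_i$. Then
\[
w_G(P_i, V \setminus P_i) \;\leq\; w_G(S_i, V \setminus S_i) + \vol_G(M_i) + \vol_G(N_i) \;\leq\; w_G(S_i, V \setminus S_i) + \vol_G(P_i \triangle S_i),
\]
and $\vol_G(P_i) \geq \vol_G(S_i) - \vol_G(P_i \triangle S_i) = (1-o(1))\vol_G(S_i)$, since the volume bound above is $o(1)\cdot \vol_G(S_i)$ whenever $\Upsilon_G(k)$ is sufficiently large. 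Dividing and using $\Phi_G(S_i) = w_G(S_i, V\setminus S_i)/\vol_G(S_i)$ gives $\Phi_G(P_i) = O\!\left(\Phi_G(S_i) + k^2/\Upsilon_G(k)\right)$.

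The main obstacle is the second step: transferring the misclassification bound from $H$ to $G$, because $P_i \triangle S_i$ may be too small for vertex-wise degree concentration to apply directly, and the reweighting $1/p_e$ can make $\vol_H$ and $\vol_G$ differ on small sets. The resolution is to phrase the PSZ bound through the normalised spectral-embedding distance $\|F(u)\|_2$ used in Algorithm~\ref{alg:spectralclustering}, whose dependence on $\deg_H$ is through ratios that concentrate even on small sets, or alternatively to observe that any residual mismatch is additively at most $O(k/\lambda_{k+1}(\mathcal{L}_G))$ which is already dominated by $k^2/\Upsilon_G(k)$. Everything else is a clean bookkeeping exercise once these two estimates are in place.
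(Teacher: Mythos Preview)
Your overall strategy matches the paper's: show $\Upsilon_H(k)=\Omega(\Upsilon_G(k)/k)$ from Definition~\ref{def:CPS}, apply the Peng--Sun--Zanetti machinery on $H$, and transfer back to $G$. However, there is a subtle gap in the first step. A black-box application of PSZ to $H$ yields $\vol_H(P_i\triangle S_i^H)=O(k/\Upsilon_H(k))\vol_H(S_i^H)$ where $\{S_i^H\}$ is the partition \emph{achieving $\rho_H(k)$}, not the $G$-optimal $\{S_i\}$. You silently replace $S_i^H$ by $S_i$, but nothing you have written justifies that these coincide (and in general they need not). The paper handles exactly this point by reproving the structure theorem with the $G$-optimal indicators but the $H$-eigenvectors (Theorem~\ref{thm:struc}): since $\bar g_i^\intercal \mathcal{L}_H \bar g_i = \Phi_H(S_i) = O(k\rho_G(k))$ and $\lambda_{k+1}(\mathcal{L}_H)=\Omega(\lambda_{k+1}(\mathcal{L}_G))$, one gets $\|\bar g_i-\hat f_i\|^2=O(k/\Upsilon_G(k))$ directly for the $G$-optimal $S_i$, and the rest of the PSZ/MS22 argument then runs unchanged. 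The fix on your side is either to invoke such an adjusted structure theorem, or to state explicitly that the PSZ structure theorem only uses $\max_i\Phi_H(S_i)/\lambda_{k+1}(\mathcal{L}_H)$ and therefore applies to any low-conductance $k$-partition, not just the optimal one.

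Your ``main obstacle'' paragraph worries about the wrong thing. The volume transfer from $H$ to $G$ is not delicate: the \textsf{SZ} analysis (see the proof of Theorem~\ref{thm:dynamic_cp_sparsifier}) already gives $\deg_H(u)=\Theta(\deg_G(u))$ for \emph{every} vertex $u$ with high probability, so $\vol_H(T)=\Theta(\vol_G(T))$ for all $T$ simultaneously, including $P_i\triangle S_i$; no further concentration argument or slack term is needed. Your conductance computation in the third step is fine once the symmetric-difference bound in $G$ is in hand.
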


\subsection{Construction of Dynamic Cluster-Preserving Sparsifiers}

Now we design an algorithm that constructs a cluster-preserving sparsifier under edge and vertex insertions, and   our algorithm works as follows. Initially, for the   input   $G_1$ with $n_1$ vertices,  a well-defined structure of  $k$ clusters and  
\begin{equation}\label{eq:def_tau}
\tau \geq  \frac{C}{\lambda_{k+1}(\mathcal{L}_{G_1})}
\end{equation} for some constant $C\in\mathbb{R}^+$, we run the \textsf{SZ} algorithm and obtain a cluster-preserving sparsifier of $G_1$. In addition to storing the sparsifier $H_1$ of $G_1$, the algorithm employs the vector $\deglist_1$ to store the values $\log{n_1}/\deg_{G_1}(u)$ for every vertex $u$, which are   used to sample adjacent edges of vertex $u$.  See Algorithms~\ref{alg:static_SZ} and \ref{alg:SZ_algorithm} for formal description.

\begin{algorithm}[h]
    \caption{$\mathsf{SampleEdge}(e, G, \tau)$}\label{alg:static_SZ}
    
    \begin{algorithmic}[1]
    
    \STATE \textbf{Input:} {edge $e=\{u,v\}$, graph $G=(V,E)$ of $n$ vertices, parameter $\tau\in\mathbb{R}^+$}
    
    \textbf{Output:} edge $e'$ with weight $w(e')$
    
         $p(u,v) \leftarrow p_u(v) + p_v(u) - p_u(v)\cdot p_v(u)$

        Sample $e$ with probability $p(u,v)$
    
        \IF{$e$ is sampled}
         \STATE  $e' \leftarrow e$,  $w(e') \leftarrow 1/p(u,v)$ 
        
        \ELSE
          \STATE $e' \leftarrow \emptyset$, $w(e') \leftarrow 0$ 
        \ENDIF
        
        \STATE \textbf{Return}  $e'$, $w(e')$
        \end{algorithmic}
 \end{algorithm}

\begin{algorithm}
\caption{$\mathsf{StaticSZSparsifier}(G, \tau)$}\label{alg:SZ_algorithm}
\begin{algorithmic}[1]
    \STATE \textbf{Input:} $G=(V,E)$ of $n$ vertices, parameter $\tau\in\mathbb{R}^+$
    \STATE \textbf{Output:} Cluster preserving sparsifier $H = (V, F, w_H)$,  degree list $\deglist$
    
    \STATE  $F \leftarrow \emptyset$
    
    \FOR{$e \in E$}{
       
        \STATE $e', w(e') \leftarrow \mathsf{SampleEdge}(e, G, \tau)$
    
        \STATE $F \leftarrow F \cup  e'$, $w_H(e) \leftarrow w(e')$ 
    }
    \ENDFOR
   \STATE  $\deglist \leftarrow \left\{\frac{\log{n}}{\deg_G(u)} \mid u \in V\right\}$
    
   \STATE \textbf{Return} $H$, $\deglist$
\end{algorithmic}
\end{algorithm}

\begin{algorithm}
\textbf{ \caption{$\mathsf{UpdateSparsifier}(G_t, H_t, \deglist_t, e, \tau)$}\label{alg:dynamic_update_cp_sparsifier}}
\begin{algorithmic}[1]
    
    \STATE \textbf{Input:} $G_t=(V_t,E_t)$, $H_t=(V_t, F_t, w_{H_t})$, $\deglist_t$, incoming edge $e=\{u,v\}$, parameter $\tau$
    
    \STATE \textbf{Output:}  $H_{t+1} = (V_{t+1}, F_{t+1}, w_{H_{t+1}})$, $\deglist_{t+1}$

    \STATE $V_\mathrm{new} \leftarrow \{u,v\} \setminus V_t$
    
    \STATE $G_{t+1} \leftarrow (V_t \cup V_\mathrm{new}, E_t \cup e)$
    
    \STATE $H_{t+1} \leftarrow (V_t \cup V_\mathrm{new}, F_t, w_{H_t})$

    \STATE $\deglist_{t+1} \leftarrow \deglist_t$ 

    \IF{$V_\mathrm{new} \neq \emptyset$}{
     \STATE  $e', w(e') \leftarrow \mathsf{SampleEdge}(e, G_{t+1}, \tau)$ 
    
     \STATE $F_{t+1} \leftarrow F_{t+1} \cup  e'$,  $w_{H_{t+1}}(e) \leftarrow w(e')$

                \IF{$u\in V_{\mathrm{new}}$}{ \STATE $\deglist_{t+1}(u) \leftarrow \frac{\log{n_{t+1}}}{\deg_{G_{t+1}}(u)}$}
                \ENDIF

                \IF{$v\in V_{\mathrm{new}}$}{
                \STATE $\deglist_{t+1}(u) \leftarrow \frac{\log{n_{t+1}}}{\deg_{G_{t+1}}(v)}$}
                \ENDIF
    }
    \ENDIF

     $V_\mathrm{doubled} \leftarrow \Big\{\hat{v} \in V_{t+1} \setminus V_\mathrm{new} \mid \frac{\log{n_{t+1}}}{\deg_{G_{t+1}}(\hat{v}) } > 2 \cdot \deglist_{t}(\hat{v}) \text{ or } \frac{\log{n_{t+1}}}{\deg_{G_{t+1}}(\hat{v})} < \frac{\deglist_{t}(\hat{v})}{2}\Big\}$ \label{alg:sz_dynamic:line:doubled}

    \IF{ $|V_\mathrm{doubled}|> 0$}{
    \FOR{$\hat{u} \in V_\mathrm{doubled}$}

      \STATE  $F_{t+1} \leftarrow F_{t+1} \setminus E_{H_{t+1}}(\hat{u})$

            \FOR{$\hat{e} \in E_{G_{t+1}} \mathrm{\ adjacent\ to\ } \hat{u}$}
              
                \STATE $\hat{e}', w(\hat{e}') \leftarrow \mathsf{SampleEdge}(\hat{e}, G_{t+1}, \tau)$
    
                \STATE  $F_{t+1} \leftarrow F_{t+1} \cup  \hat{e}'$, $w_{H_{t+1}}(\hat{e}) \leftarrow w(\hat{e}')$ 
            
            \ENDFOR
        \STATE $\deglist_{t+1}(\hat{u}) \leftarrow \frac{\log{n_{t+1}}}{\deg_{G_{t+1}}(\hat{u})}$ 
         
        \ENDFOR 
}
\ELSE{
    \STATE  $e', w(e') \leftarrow \mathsf{SampleEdge}(e, G_{t+1}, \tau)$
    
        \STATE       $F_{t+1} \leftarrow F_{t+1} \cup  e'$,         $w_{H_{t+1}}(e) \leftarrow w(e')$
 
}
\ENDIF

\STATE \textbf{Return} $H_{t+1}$, $\deglist_{t+1}$
\end{algorithmic}
    
\end{algorithm}

Next, given the graph $G_t$ currently constructed at time $t$, its   sparsifier $H_t$, and   edge insertion $e=\{u,v\}$, the algorithm  
compares for every vertex $w$ the parameter   $\log{n_{t+1}}/\deg_{G_{t+1}}(w)$  with $\deglist_t(w)$, the quantity used to sample the adjacent edges of $w$ the last time, and checks whether the two values change significantly. If
it is the case, then the used sampling probability is too far from the ``correct'' one when running the static \textsf{SZ}  algorithm on $G_{t+1}$, and hence we   resample all the edges adjacent to $w$ with the right sampling probability. Otherwise, we simply use the values stored in $\deglist_t$ to sample the upcoming edge $e$, and include it in $H_{t+1}$ if $e$ is sampled. See Algorithm~\ref{alg:dynamic_update_cp_sparsifier} for formal description\footnote{Notice that, since  $\lambda_{k+1}(\mathcal{L}_{G_t})=\Omega(1)$ for any graph $G_t$ exhibiting a well-defined structure of $k$ clusters and it holds for   $G_T$ at time $T=O(\poly(n_1))$   that $n_T=O(\poly(n_1))$, i.e., $\log{n_T} =O(\log{n_1})$, by setting $C$ to be a sufficiently large constant,    $\tau\cdot\log{n_1}$ is the right parameter  for defining the sampling probability at time $T=O(\poly(n_1))$.}, and Theorem~\ref{thm:dynamic_cp_sparsifier} for its performance:

\begin{theorem} \label{thm:dynamic_cp_sparsifier}
    Let $G_1=(V_1, E_1)$ be a   graph with $n_1$ vertices and a well-defined structure of $k=\widetilde{O}(1)$ clusters, and  $\{G_t\}$  the sequence of  graphs  of $\{n_t\}$ vertices constructed sequentially through an edge insertion at each time. Assuming   graph $G_T$  at time $T=O(\poly(n_1))$ has a well-defined structure of $\widetilde{O}(1)$ clusters  and $n_T=O(\mathrm{poly}(n_1))$, Algorithm~\ref{alg:dynamic_update_cp_sparsifier} returns a cluster-preserving sparsifier $H_T=(V_T, F_T, w_{H_T})$ of $G_T$ with high probability, and $|F_T| = \widetilde{O}(n_T)$. The algorithm's amortised running time is $O(1)$ per edge update.   
\end{theorem}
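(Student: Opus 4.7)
The plan is to establish three claims: (i) with high probability $H_T$ is distributed as a constant-factor perturbation of the output of the static \textsf{SZ} algorithm applied to $G_T$, so it inherits the cluster-preservation guarantee; (ii) $|F_T| = \widetilde{O}(n_T)$ w.h.p.; and (iii) the total work across all $T$ updates is $O(T + |E_1|)$, giving amortised $O(1)$ per edge. The unifying tool is the \emph{lazy doubling invariant}: for every time $t$ and every $u \in V_t$,
\[
\deglist_t(u) \in \left[\tfrac{1}{2}\cdot\tfrac{\log n_t}{\deg_{G_t}(u)},\ 2\cdot\tfrac{\log n_t}{\deg_{G_t}(u)}\right],
\]
which is maintained precisely by the resample trigger on line~\ref{alg:sz_dynamic:line:doubled} and the handling of newly inserted vertices.

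\textbf{From the invariant to cluster preservation.}
Under this invariant, every edge $e=\{u,v\}$ presently in $F_T$ was included with sampling probability $\widetilde{p}_e = \Theta(p_e^{\star})$, where $p_e^{\star}$ is the probability that the static \textsf{SZ} algorithm would assign on $G_T$, and the stored weight $w_{H_T}(e)=1/\widetilde{p}_e$ matches $\Theta(1/p_e^{\star})$. One then re-runs the analysis of \citet{SZ19} with sampling probabilities perturbed by constant factors; this affects only the hidden constants in the conductance and eigenvalue bounds that define a cluster-preserving sparsifier. A subtle point is that the same $\tau$ is used throughout the run: this is legitimate because $n_T=O(\poly(n_1))$ gives $\log n_T = O(\log n_1)$ and the dynamic gap assumption gives $\lambda_{k+1}(\mathcal{L}_{G_T})=\Omega(1)$, so $\tau$ as defined in \eqref{eq:def_tau} remains a valid upper bound on $C/\lambda_{k+1}(\mathcal{L}_{G_T})$ up to constants, making the final sampling probabilities $\Theta(p_e^\star)$ with the \textsf{SZ}-required constant. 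The size bound $|F_T|=\widetilde{O}(n_T)$ then follows because $\ex{|F_T|}$ matches the static bound up to constants, and a Chernoff bound gives concentration.

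\textbf{Amortised $O(1)$ update time.}
This is the main technical obstacle, and I would handle it via a charging argument. Between two consecutive resamples of a vertex $u$ at times $t<t'$, the trigger together with the invariant forces $\deg_{G_{t'}}(u)$ to differ from $\deg_{G_t}(u)$ by at least a factor $2$, modulo a possible $\log n_{t'}/\log n_t$ factor. Since edges are only inserted, degrees are monotone non-decreasing; and since $\log n_{t'}=O(\log n_1)$ throughout, the $\log n$ factor changes by at most a constant over the entire horizon, which can be absorbed into the constant $C$ in \eqref{eq:def_tau} so that it never triggers resampling on its own. Hence each resample of $u$ at time $t'$ is paid for by the $\Omega(\deg_{G_{t'}}(u))$ fresh edges incident to $u$ that were inserted since its previous resample. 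The resample itself costs $O(\deg_{G_{t'}}(u))$ because we iterate over $u$'s current neighbours in $G_{t'}$. Summing a geometric series over the at most $O(\log\deg_{G_T}(u))$ resamples of $u$ yields total work $O(\deg_{G_T}(u))$ per vertex, hence $O(|E_T|)$ in total; amortised over the $|E_T|-|E_1|$ edge insertions (plus the $\widetilde{O}(|E_1|)$ cost of the initial call to $\mathsf{StaticSZSparsifier}$), this gives the claimed amortised $O(1)$ per update. The careful bookkeeping — simultaneously handling the two directions of the trigger, new-vertex insertions, and the implicit global dependence on $\log n_t$ — is where the proof has to be written with the most care, but once this charging scheme is in place the theorem follows.
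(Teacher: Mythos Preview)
Your proposal is correct and follows the same approach as the paper: the lazy-doubling invariant keeps every edge's effective sampling probability within constant factors of the static \textsf{SZ} value on $G_T$, from which cluster preservation follows (the paper carries this out explicitly via Bernstein for degrees/cuts and a matrix Chernoff bound on $\sum_e w_H(e)\,\overline{\mathcal{L}}_G^{-1/2}b_eb_e^\intercal\overline{\mathcal{L}}_G^{-1/2}$ rather than deferring to \citet{SZ19}), and the amortised $O(1)$ bound comes from the same degree-doubling charging argument you give.

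One small imprecision worth flagging: the $\log n$ contribution to the trigger cannot literally be ``absorbed into the constant $C$ in~\eqref{eq:def_tau},'' since the resample trigger on line~\ref{alg:sz_dynamic:line:doubled} compares $\log n_{t+1}/\deg_{G_{t+1}}(u)$ to $\deglist_t(u)$ and does not involve $\tau$ or $C$ at all. The paper handles this case separately: a doubling of $\log n_t$ requires $\Omega(n_t^2)$ new vertices and hence $\Omega(n_t^2)$ edge insertions, against which a full resample of cost at most $|E_t|\le n_t^2$ amortises to $O(1)$. Your charging argument for the degree-doubling direction is identical to the paper's.
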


\section{Dynamic Spectral Clustering Algorithm\label{sec:algorithm}}

 This section presents our main dynamic spectral clustering algorithm, and is organised as follows: In Section~\ref{sec:const_cp}, we present the construction and update procedure of a contracted graph,  which is  the data structure that summarises the cluster structure of an underlying input graph and allows for quick updates to the clusters. The properties of dynamic contracted graphs are analysed in Section~\ref{sec:property_CG}. We present the main algorithm and analyse its performance in Section~\ref{sec:main_algo}.

\subsection{Construction and Update of Contracted Graphs\label{sec:const_cp}}

For any   input graph $G_t=(V_t, E_t)$ of $n_t$ vertices, its dynamic cluster-preserving sparsifier $H_t=(V_t,F_t, w_{H_t})$, and its $k$ clusters $P_1,\ldots, P_k$ returned from  running spectral clustering on $H_t$, we apply Algorithm~\ref{alg:construct_contracted_graph} to construct a contracted graph  $\widetilde{G}_t=(\widetilde{V}_t, \widetilde{E}_t, w_{\widetilde{G}_t})$ of $G_t$. Notice that we introduce the set of \emph{non-contracted vertices} $\widetilde{V}_t^\mathrm{nc} = \emptyset$, which will be used later.

\begin{algorithm}
   \caption{$\mathsf{ContractGraph}(H_t, \mathcal{P})$}\label{alg:construct_contracted_graph}
\begin{algorithmic}[1]

 \STATE \textbf{Input:} Cluster preserving sparsifier $H_t = (V_t, F_t, w_{H_t})$, partition  $\mathcal{P} = \{P_1, \ldots P_k \}$
     
    \STATE \textbf{Output:}
    Contracted graph $\tilde{G}_t=(\tilde{V}_t, \tilde{E}_t, w_{\tilde{G}_t})$

     \STATE Let $p_i$ be a representative super vertex for each cluster $P_i \in \mathcal{P}$.
    
     \STATE $\tilde{V}_t^\mathrm{c} \leftarrow \{p_i \mid P_i \in \mathcal{P}\}$, $\tilde{V}_t^\mathrm{nc} \leftarrow \emptyset$ 

     \STATE $\tilde{V}_t \leftarrow \tilde{V}_t^\mathrm{nc} \cup \tilde{V}_t^\mathrm{c}$   \label{algcontractgraph:line:end_add_vertices}

    \STATE $\tilde{E}_t \leftarrow \emptyset$

   \FOR{$\{p_i,p_j\} \in \tilde{V}_t^\mathrm{c} \times \tilde{V}_t^\mathrm{c}$}{ \label{algcontractgraph:line:start_add_edges_big_clusters}
     \STATE  $\tilde{E}_t \leftarrow \tilde{E}_t \cup \{p_i,p_j\}$
     \STATE $w_{\tilde{G}_t}(p_i,p_j) \leftarrow w_{H_t}(P_i, P_j)$}
    \ENDFOR 

    \label{algcontractgraph:line:end_add_edges_big_clusters}

    \STATE \textbf{Return} $\tilde{G}_t = (\tilde{V}_t, \tilde{E}_t, w_{\tilde{G}_t})$

\end{algorithmic}
\end{algorithm}

\begin{lemma}\label{lem:time_complexity_contract_graph}
     The algorithm
    $\mathsf{ContractGraph}(H_t, \mathcal{P})$ returns   $\tilde{G}_t = (\tilde{V}_t, \tilde{E}_t, w_{\tilde{G}_t})$ in $O\lp |F_t| \rp$ time.
\end{lemma}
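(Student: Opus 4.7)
The plan is to walk through the lines of \textsf{ContractGraph} and charge each operation to either a vertex of $V_t$ or, crucially, an edge of $F_t$, so that the total work is $O(|F_t|)$. A naive reading of the loop on Lines~\ref{algcontractgraph:line:start_add_edges_big_clusters}--\ref{algcontractgraph:line:end_add_edges_big_clusters} suggests an $O(k^2)$ or even $O(k^2 \cdot |F_t|)$ cost (computing each $w_{H_t}(P_i, P_j)$ from scratch), so the main point to argue is that this loop can be realised by a single pass over $F_t$ using a cluster-membership lookup table.

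First I would dispose of the setup: choosing one representative $p_i$ per cluster and forming $\tilde{V}_t^{\mathrm{c}}$, $\tilde{V}_t^{\mathrm{nc}} = \emptyset$, and $\tilde{V}_t$ is done in $O(k)$ time. From the input partition $\mathcal{P}$ I would, in a preprocessing step, build a cluster-membership map $\ell : V_t \to \{1, \ldots, k\}$ in $O(\sum_{i=1}^{k} |P_i|) = O(n_t)$ time. Since $H_t$ is a cluster-preserving sparsifier of a connected graph $G_t$, it is connected and hence $|F_t| \geq n_t - 1$, so this preprocessing fits within the claimed $O(|F_t|)$ budget.

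Next I would replace the double loop on Lines~\ref{algcontractgraph:line:start_add_edges_big_clusters}--\ref{algcontractgraph:line:end_add_edges_big_clusters} by the following equivalent computation: initialise $w_{\tilde{G}_t}(p_i, p_j) \leftarrow 0$ for all $1 \leq i, j \leq k$, then for each edge $\{u,v\} \in F_t$ perform the constant-time update
\[
w_{\tilde{G}_t}\bigl(p_{\ell(u)}, p_{\ell(v)}\bigr) \;\mathrel{+}=\; w_{H_t}(u,v).
\]
By definition of the cut weight, after processing all of $F_t$ this yields exactly $w_{\tilde{G}_t}(p_i, p_j) = w_{H_t}(P_i, P_j)$ for every pair. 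The initialisation step costs $O(k^2)$, and the edge pass costs $O(|F_t|)$. Under the standing assumption that $k = \widetilde{O}(1)$ we have $k^2 = \widetilde{O}(1) = O(|F_t|)$, so the total cost for this phase is $O(|F_t|)$.

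Summing the contributions, the algorithm runs in $O(k) + O(n_t) + O(k^2) + O(|F_t|) = O(|F_t|)$ time, which is the claimed bound. The only subtle step is the reorganisation of the nominal $k^2$-loop into a single pass over $F_t$ via the lookup table $\ell$, which is the standard graph-contraction trick; everything else is bookkeeping justified by $|F_t| \geq n_t - 1$ and $k = \widetilde{O}(1)$.
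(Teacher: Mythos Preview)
Your proposal is correct and follows the same approach as the paper's one-sentence proof, which simply observes that computing all the weights $w_{H_t}(P_i,P_j)$ is the dominant cost and takes $O(|F_t|)$ time via a single pass over the edges of $H_t$. One minor point: the assertion that $H_t$ is connected does not follow directly from Definition~\ref{def:CPS}; the cleaner justification for $|F_t|=\Omega(n_t)$ is the degree-preservation property (event $\mathcal{E}_2$ in the paper), which guarantees that no vertex of $H_t$ is isolated and hence $|F_t|\geq n_t/2$.
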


Next we discuss how the contracted graph is updated under   edge and vertex insertions.
Given the graph $G_t=(V_t, E_t)$  with $n_t$ vertices that satisfies $\lambda_{k+1}(\mathcal{L}_{G_t})=\Omega(1)$ 
and $\rho_{G_t}(k)=O(k^{-8}\log^{-2\gamma}(n_t))$   for some constant $\gamma\in\mathbb{R}^+$, its cluster-preserving sparsifier $H_t=(V_t, F_t, w_{H_t})$,   the corresponding contracted graph   $\widetilde{G}_t=(\widetilde{V}_t,\widetilde{E}_t, w_{\widetilde{G}_t})$, and the upcoming edge insertion $e=\{u,v\}$, we construct $\widetilde{G}_{t+1}$ from  $\widetilde{G}_t$  as follows:
\begin{itemize}\itemsep 0.06cm
\item If either $u$ or $v$ is a new vertex, the algorithm adds the vertex to $\widetilde{G}_t$ as a non-contracted vertex. The algorithm sets $V_{\mathrm{new}}= \{u,v\}\setminus V_t$, and $V_{t+1} = V_t\cup V_{\mathrm{new}}$.
\item  For every existing vertex $w\in\{u,v\}\setminus V_{\mathrm{new}}$ that belongs to some $P_i$,  the algorithm checks whether $\deg_{G_{t+1}}(w)>2\cdot \deg_{G_r}(w)$, where $\deg_{G_r}(w)$ for $r \leq t$ is the degree of $w$ when the contracted graph was constructed. If it is the case, the algorithm pulls  $w$ out of $p_i$, and adds it to $\widetilde{V}_{t+1}$, i.e., the uses a  single vertex in $\widetilde{G}_{t+1}$ to represent   $w$.    
\item The algorithm adjusts the edge weights in the contracted graph based on the type of the vertices. For  instance, the algorithm sets $w_{\widetilde{G}_{t+1}}(u,v)=1$
if both of $u$ and $v$ are non-contracted vertices,  and   decreases the value of $w_{\widetilde{G}_{t+1}}(P_u,P_u)$ if vertex $u$ pulls out of   $P_u\in\mathcal{P}$.
\end{itemize}
See Algorithm~\ref{alg:update_contracted_graph} in the appendix for the formal description of the algorithm $\mathsf{UpdateContractedGraph}(G_t, \widetilde{G}_t, e)$.

\begin{lemma}\label{lem:running_time_dynamic_update_contracted_graph}
    The amortised time complexity of $\mathsf{UpdateContractedGraph}(G_t, \widetilde{G}_t, e)$ is $O(1)$.
\end{lemma}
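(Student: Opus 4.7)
The plan is to isolate a potentially expensive ``pull-out'' step inside $\mathsf{UpdateContractedGraph}$ and amortise its cost against a doubling-of-degree argument, while verifying that every other piece of work done per edge is $O(1)$.

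First I would inspect the operations performed on an incoming edge $e=\{u,v\}$, using the informal description above together with Algorithm~\ref{alg:update_contracted_graph} in the appendix. Adding a new endpoint to $\widetilde{V}_{t+1}^{\mathrm{nc}}$ is $O(1)$; for each existing endpoint $w$, evaluating the doubling condition $\deg_{G_{t+1}}(w)>2\deg_{G_r}(w)$ is $O(1)$; and when no pull-out is triggered, adjusting the single affected edge weight in $\widetilde{G}_{t+1}$---whether between two super vertices $p_i,p_j$, between a super vertex and a non-contracted vertex, or between two non-contracted vertices---is also $O(1)$. Hence the only operation that can exceed constant time is the pull-out of an endpoint $w$ from the super vertex $p_i$ representing its current cluster $P_i$: the algorithm iterates over each sparsifier edge $\{w,v'\}$ incident to $w$, subtracts its weight from the corresponding inter-super-vertex entry $w_{\widetilde{G}_{t+1}}(p_i,p_j)$ (with $v'\in P_j$), and adds it to the new edge at $w$ in $\widetilde{G}_{t+1}$. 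Since the number of sparsifier edges incident to $w$ is at most $\deg_{G_{t+1}}(w)$, the cost of the pull-out is $O(\deg_{G_{t+1}}(w))$.

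The amortisation is then straightforward. Once $w$ has been pulled out it becomes a non-contracted vertex, and the doubling condition is never re-evaluated for it; hence every vertex is pulled out \emph{at most once}. If the pull-out of $w$ fires at time $t+1$, then by the condition $\deg_{G_{t+1}}(w)>2\deg_{G_r}(w)$ at least $\Delta_w \triangleq \deg_{G_{t+1}}(w)-\deg_{G_r}(w)>\deg_{G_r}(w)$ edge insertions incident to $w$ must have occurred since $w$'s reference degree was set at time $r$, and therefore $\deg_{G_{t+1}}(w)<2\Delta_w$. I would then use a token argument, depositing $O(1)$ tokens on each endpoint of every edge insertion; the $\Omega(\Delta_w)$ tokens accumulated on $w$ during the intervening insertions pay for its $O(\Delta_w)=O(\deg_{G_{t+1}}(w))$ pull-out cost. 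Combined with the $O(1)$ direct work per insertion, the total work over any sequence of $N$ edge updates is $O(N)$, which yields the claimed amortised update time of $O(1)$.

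The main technical subtlety I anticipate is confirming via the appendix pseudocode that all bookkeeping performed during a pull-out---namely, the removal of $w$'s contribution from \emph{every} inter-cluster weight $w_{\widetilde{G}_{t+1}}(p_i,p_j)$ and the simultaneous insertion of the corresponding new edges incident to $w$---is fully absorbed into the $O(\deg_{G_{t+1}}(w))$ budget, so that no hidden super-constant term leaks out and breaks the token accounting.
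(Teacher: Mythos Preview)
Your proposal is correct and follows essentially the same amortised doubling argument as the paper: isolate the pull-out as the only non-constant step, bound its cost by $O(\deg_{G_{t+1}}(w))$, and charge it to the $\Omega(\deg_{G_{t+1}}(w))$ edge insertions incident to $w$ that were required to trigger the doubling condition. One minor imprecision: the pull-out in Algorithm~\ref{alg:update_contracted_graph} actually iterates over the edges of $G_{t+1}$ (and of $H_r$) incident to $w$, not just the sparsifier edges, but since both sets have size at most $\deg_{G_{t+1}}(w)$ your cost bound and token accounting are unaffected.
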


\subsection{Properties of the Contracted Graph \label{sec:property_CG}}
Now we analyse the properties of the contracted graph.   Since the amortised time complexity for every edge update~(Theorem~\ref{thm:dynamic_cp_sparsifier} and Lemma~\ref{lem:running_time_dynamic_update_contracted_graph}) remains valid when we consider a sequence of edge updates at every time, without loss of generality let $G_{t'} = (V_t \cup \Vnew, E_t \cup \Enew)$ be the graph after a sequence of edge updates from $G_t=(V_t, E_t)$ with $n_t$ vertices, and $\tilde{G}_{t'}$ be the contracted graph of $G_{t'}$ constructed by sequentially running $\mathsf{UpdateContractedGraph}$ for each   $e \in \Enew$.   
We assume that $|\Enew| \leq \log^{\gamma}(n_t)$ for some $\gamma\in\mathbb{R}^+$.

 We first prove that the clusters returned by spectral clustering on $H_t$ also have low conductance in $G_t$. Notice that, as the underlying graph $G_t$ could be dense over time, running a clustering algorithm on its sparsifier $H_t$ with $\widetilde{O}(n_t)$ edges is crucial   to achieve the algorithm's quick update time.

\begin{lemma}\label{lem:conductanceHvsG}
     It holds with high probability that  
    $
    \Phi_{H_t}(P_i) = O \lp k^2 \cdot \rho_{G_t}(k) \rp$
    and
      $
        \Phi_{G_t}(P_i) =  O \lp k^2 \cdot \rho_{G_t}(k) \rp$ for all $P_i \in \mathcal{P}$.
\end{lemma}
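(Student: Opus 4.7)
The plan is to establish both bounds by combining Lemma~\ref{lem:MS22+_GvsH}, Definition~\ref{def:CPS}, and a mild concentration argument on the \textsf{SZ} sampling scheme. As a preliminary, I note that the dynamic gap assumption gives $\lambda_{k+1}(\mathcal{L}_{G_t})=\Omega(1)$, so $\Upsilon_{G_t}(k)=\Omega(1/\rho_{G_t}(k))$ and therefore $k^2/\Upsilon_{G_t}(k)=O(k^2\rho_{G_t}(k))$. The assumption $\rho_{G_t}(k)=O(k^{-8}\log^{-2\gamma}(n_t))$ further gives $k^2\rho_{G_t}(k)=o(1)$, which will be useful when bounding denominators.

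The bound on $\Phi_{G_t}(P_i)$ follows immediately from Lemma~\ref{lem:MS22+_GvsH}: since $\Phi_{G_t}(S_i)\leq \rho_{G_t}(k)$ by the definition of $\rho_{G_t}(k)$, we have
\[
\Phi_{G_t}(P_i)=O\!\lp \Phi_{G_t}(S_i)+\frac{k^2}{\Upsilon_{G_t}(k)} \rp = O\!\lp k^2\rho_{G_t}(k) \rp.
\]

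For $\Phi_{H_t}(P_i)$, I would use a cut decomposition: every edge in the cut $(P_i,V_t\setminus P_i)$ either lies in the cut $(S_i,V_t\setminus S_i)$ or is incident to a vertex in $P_i\triangle S_i$, so
\[
w_{H_t}(P_i,V_t\setminus P_i)\leq w_{H_t}(S_i,V_t\setminus S_i)+\vol_{H_t}(P_i\triangle S_i).
\]
The first term is controlled by Definition~\ref{def:CPS}: $\Phi_{H_t}(S_i)=O(k\cdot\Phi_{G_t}(S_i))=O(k\rho_{G_t}(k))$, giving $w_{H_t}(S_i,V_t\setminus S_i)=O(k\rho_{G_t}(k))\vol_{H_t}(S_i)$. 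For the second term, I would invoke the fact that each sampled edge in \textsf{SZ} contributes $w_{H_t}(e)=1/p_e$ with sampling probability $p_e$, so $\Ex{w_{H_t}(e)}=1=w_{G_t}(e)$ and hence $\Ex{\vol_{H_t}(S)}=\vol_{G_t}(S)$ for every $S$; a Chernoff-type argument (using the lower bound $p_e=\Omega(\log n_t/\deg_{G_t}(u))$ given by the \textsf{SZ} formulas) gives $\vol_{H_t}(S)=\Theta(\vol_{G_t}(S))$ with high probability whenever $\vol_{G_t}(S)$ is large enough. Combined with Lemma~\ref{lem:MS22+_GvsH}, which supplies $\vol_{G_t}(P_i\triangle S_i)=O(k^2/\Upsilon_{G_t}(k))\cdot\vol_{G_t}(S_i)=O(k^2\rho_{G_t}(k))\vol_{G_t}(S_i)$, this yields $\vol_{H_t}(P_i\triangle S_i)=O(k^2\rho_{G_t}(k))\vol_{H_t}(S_i)$.

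Summing the two contributions gives $w_{H_t}(P_i,V_t\setminus P_i)=O(k^2\rho_{G_t}(k))\vol_{H_t}(S_i)$. For the denominator, $\min(\vol_{H_t}(P_i),\vol_{H_t}(V_t\setminus P_i))\geq \vol_{H_t}(S_i)-\vol_{H_t}(P_i\triangle S_i)=(1-o(1))\vol_{H_t}(S_i)$, where the last equality uses $k^2\rho_{G_t}(k)=o(1)$. Dividing yields $\Phi_{H_t}(P_i)=O(k^2\rho_{G_t}(k))$, completing the plan. The main obstacle is the volume concentration step: the set $P_i\triangle S_i$ may be substantially smaller than $S_i$, so the Chernoff bounds must be applied to a slightly weighted sum of independent Bernoulli contributions with per-edge variance controlled by $1/p_e$; this is where the lower bound on $p_e$ provided by the \textsf{SZ} sampling formula is crucial, and is the one piece of the argument that cannot be read off directly from the earlier lemmas.
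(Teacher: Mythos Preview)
Your bound on $\Phi_{G_t}(P_i)$ via Lemma~\ref{lem:MS22+_GvsH} is exactly what the paper does. For $\Phi_{H_t}(P_i)$, however, the paper takes a much shorter route that avoids your cut decomposition and concentration argument altogether: since spectral clustering is run on $H_t$, Lemma~\ref{lem:MS22+} applied \emph{directly to $H_t$} already gives $\Phi_{H_t}(P_i)=O\bigl(k\cdot\rho_{H_t}(k)\bigr)$; then the cluster-preserving sparsifier property (Theorem~\ref{thm:dynamic_cp_sparsifier}/Definition~\ref{def:CPS}) yields $\rho_{H_t}(k)\le\max_i\Phi_{H_t}(S_i)=O(k\cdot\rho_{G_t}(k))$, and the claim follows. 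No transfer from $G_t$ back to $H_t$ is needed.

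Your route also works, but there is a subtlety you gloss over: the set $P_i\triangle S_i$ is determined by $H_t$ itself, so a set-level Chernoff bound on $\vol_{H_t}(P_i\triangle S_i)$ cannot be applied directly (the ``set'' is adaptive to the randomness). The correct fix is to use \emph{per-vertex} degree concentration, i.e.\ $\deg_{H_t}(u)=\Theta(\deg_{G_t}(u))$ for every $u$ with high probability, which is exactly the event $\mathcal{E}_2$ established in the proof of Theorem~\ref{thm:dynamic_cp_sparsifier}. Once degrees are preserved pointwise, $\vol_{H_t}(S)=\Theta(\vol_{G_t}(S))$ for every set $S$ simultaneously, including adaptive ones, and your argument goes through. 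So the obstacle you flag is real but already handled elsewhere in the paper; the paper's direct approach simply sidesteps it.
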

 Next, we define the event $\mathcal{E}_1$ that
\[
\Phi_{H_t}(P_i) = O \lp k^{-6} \cdot \log^{-2\gamma}(n_t) \rp
\]
and 
\[
\Phi_{G_t}(P_i) =  O \lp k^{-6} \cdot \log^{-2\gamma}(n_t) \rp 
\]
hold for all $P_i\in \mathcal{P}$.
By  the fact that  $\lambda_{k+1}(G_t) = \Omega(1)$, $\rho_{G_t}(k) = O(k^{-8} \cdot \log^{-2\gamma}(n_t))$ and Lemma~\ref{lem:conductanceHvsG},   $\mathcal{E}_1$ holds with high probability.  We further define the event $\mathcal{E}_2$ that
\[
(1/2)\cdot \deg_{G_t} (u) \leq \deg_{H_t}(u) \leq (3/2)\cdot \deg_{G_t}(u) 
\]
hold for all $u\in V_t$,
and know from the proof of Theorem~\ref{thm:dynamic_cp_sparsifier} that $\mathcal{E}_2$ holds with high probability. In the following   we assume  that both of $\mathcal{E}_1$ and  $\mathcal{E}_2$ happen.

Next, we study  the relationship between the cluster-structure in $G_{t'}$ and the one in $\widetilde{G}_{t'}$. 
Recall that the number of vertices in  $\widetilde{G}_{t'}$ is   much smaller than the one in $G_{t'}$. We first prove that there are $\ell$ disjoint vertex sets of low conductance in $G_{t'}$ if and only if there are $\ell$ such vertex sets  in $\widetilde{G}_{t'}$.

\begin{lemma}\label{lem:conductance_full>contract}  The following statements hold:   
\begin{itemize}\itemsep 0.06cm
\item 
If $\rho_{G_{t'}}(\ell) \leq \log^{-\alpha}(n_{t'})$ holds for some $\ell \in \mathbb{N}$  and $\alpha > 0$, then $\rho_{\tilde{G}_{t'}}(\ell) = \max \left\{O \lp  \log^{-0.9\alpha}(n_{t'}) \rp, O\lp k^{-6} \cdot \log^{-\gamma}(n_{t'}) \rp\right\}$.
\item If $\rho_{\tilde{G}_{t'}}(\ell) \leq \log^{-\delta}(n_{t'})$  holds for some $\ell \in \mathbb{N}$ and  $\delta > 0$ , then  $\rho_{G_{t'}}(\ell) = \max \left\{O \lp \log^{-\delta}(n_{t'}) \rp, O\lp k^{-6} \cdot \log^{-\gamma}(n_{t'})\rp \right\}$. 
\end{itemize}
\end{lemma}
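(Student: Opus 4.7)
The plan is to establish a two-way correspondence between $\ell$-way partitions of $G_{t'}$ and of $\widetilde{G}_{t'}$ that approximately preserves conductance. The key structural fact to exploit is that, under events $\mathcal{E}_1$ and $\mathcal{E}_2$, every cluster $P_i$ returned by spectral clustering on $H_t$ satisfies $\Phi_{G_t}(P_i) = O(k^{-6}\log^{-2\gamma}(n_{t'}))$ and $\vol_{H_t}(P_i) = \Theta(\vol_{G_t}(P_i))$. Thus a super vertex $p_i$ in $\widetilde{G}_{t'}$ behaves almost like a single rigid block of $G_{t'}$: any low-conductance partition of $G_{t'}$ can only split $P_i$ slightly, and contracting $P_i$ back to $p_i$ only perturbs cuts by $O(k^{-6}\log^{-2\gamma}(n_{t'}))\cdot \vol_{G_t}(P_i)$. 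Also, since $|\Enew|\le \log^{\gamma}(n_t)$, the total weight of added edges contributes only a low-order term.

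For the first bullet (from $G_{t'}$ to $\widetilde{G}_{t'}$), suppose $A_1,\ldots,A_\ell$ is a partition of $V_{t'}$ achieving $\rho_{G_{t'}}(\ell)\le \log^{-\alpha}(n_{t'})$. I would define a partition $\tilde{A}_1,\ldots,\tilde{A}_\ell$ of $\widetilde{V}_{t'}$ by rounding: each non-contracted vertex $u\in \widetilde{V}_{t'}^{\mathrm{nc}}$ inherits its assignment from $\{A_j\}$, and each super vertex $p_i$ is assigned to the $A_j$ that captures the largest volume of $P_i\setminus \widetilde{V}_{t'}^{\mathrm{nc}}$. Because $\Phi_{G_{t'}}(P_i)$ is small, the mis-assigned portion of $P_i$ has $G_{t'}$-volume at most $O(k^{-6}\log^{-2\gamma}(n_{t'}))\cdot \vol_{G_{t'}}(P_i)$, so the boundary $w_{\widetilde{G}_{t'}}(\tilde{A}_j,\widetilde{V}_{t'}\setminus\tilde{A}_j)$ differs from $w_{G_{t'}}(A_j,V_{t'}\setminus A_j)$ by at most this mis-assigned volume plus the $|\Enew|$ newly inserted edges. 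Dividing by $\min(\vol_{\widetilde{G}_{t'}}(\tilde{A}_j),\vol_{\widetilde{G}_{t'}}(\widetilde{V}_{t'}\setminus \tilde{A}_j))$, and using that volumes in $\widetilde{G}_{t'}$ are within a constant factor of the corresponding volumes in $G_{t'}$ (via $\mathcal{E}_2$ and the negligible mass from $\Enew$), one extracts the stated bound in which the $\log^{-0.9\alpha}$ term absorbs the $\log^{-\alpha}$ conductance from $\{A_j\}$ slightly weakened by rounding losses, and the $k^{-6}\log^{-\gamma}$ term accounts for the conductance of $P_i$.

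For the second bullet (from $\widetilde{G}_{t'}$ to $G_{t'}$), given a partition $\tilde{A}_1,\ldots,\tilde{A}_\ell$ of $\widetilde{V}_{t'}$ with $\rho_{\widetilde{G}_{t'}}(\ell)\le \log^{-\delta}(n_{t'})$, I would expand it to a partition of $V_{t'}$ by replacing each super vertex $p_i\in \tilde{A}_j$ with all vertices of $P_i$ that are still contracted into $p_i$. The cut $w_{G_{t'}}(A_j,V_{t'}\setminus A_j)$ then decomposes into (i) edges whose contraction yielded an edge of $\widetilde{G}_{t'}$ crossing the partition, which (up to the sparsifier reweighting controlled by $\mathcal{E}_2$) is bounded by $w_{\widetilde{G}_{t'}}(\tilde{A}_j,\widetilde{V}_{t'}\setminus\tilde{A}_j)$, plus (ii) edges internal to some $P_i$ that are cut only when $P_i$ happens to be split across several $\tilde{A}_j$'s, which is at most $\Phi_{G_{t'}}(P_i)\cdot\vol_{G_{t'}}(P_i)=O(k^{-6}\log^{-2\gamma}(n_{t'}))\cdot\vol_{G_{t'}}(P_i)$, plus (iii) the newly inserted edges, contributing at most $|\Enew|$. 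Normalising by $\min(\vol_{G_{t'}}(A_j),\vol_{G_{t'}}(V_{t'}\setminus A_j))$ and summing gives the bound.

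The main obstacle is making the rounding-and-uncontracting argument tight in both directions under the complication that (a) $\widetilde{G}_{t'}$ is built from $H_t$ but we need statements about $G_{t'}$, so every translation must go through $\mathcal{E}_1$ and $\mathcal{E}_2$ together with Lemma~\ref{lem:MS22+_GvsH}; (b) super vertices can be extracted into $\widetilde{V}_{t'}^{\mathrm{nc}}$ as their degree doubles, so the correspondence $p_i\leftrightarrow P_i$ shifts over time and the remaining contracted mass of $P_i$ shrinks — yet its conductance bound still applies to the residual set; and (c) the small cushion between the exponent $\alpha$ and $0.9\alpha$ must be chosen to absorb the polylogarithmic factors lost when balancing cluster sizes, which is what forces the weakening from $\log^{-\alpha}$ to $\log^{-0.9\alpha}$ in the forward direction. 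Once these bookkeeping details are settled, both bullets follow from the same two-sided rounding argument.
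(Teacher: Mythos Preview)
Your approach to the second bullet is essentially the paper's: expand each $\tilde A_j$ to $\hat A_j\subset V_{t'}$ by blowing up every super vertex $p_i\in\tilde A_j$ to the remaining contracted part $P_i^{(t')}$, and bound $\Phi_{G_{t'}}(\hat A_j)$ using that each $p_i$ individually has low conductance (property~(C3)). Your item~(ii), however, is spurious: in this direction every $P_i^{(t')}$ lands entirely in a single $\hat A_j$, so no internal edges of $P_i$ are ever cut. The real work is comparing $w_{G_{t'}}(\hat A_j,\cdot)$ to $w_{\tilde G_{t'}}(\tilde A_j,\cdot)$ and then to the $\Phi_{\tilde G_{t'}}(p_i)$ bounds; your reference to $\mathcal E_2$ alone is not enough since $\mathcal E_2$ controls degrees, not arbitrary cuts.

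For the first bullet your rounding argument has a genuine gap. The claim ``because $\Phi_{G_{t'}}(P_i)$ is small, the mis-assigned portion of $P_i$ has volume at most $O(k^{-6}\log^{-2\gamma}(n_{t'}))\cdot\vol_{G_{t'}}(P_i)$'' does not follow: low \emph{outer} conductance of $P_i$ says nothing about how $P_i$ can be split by the partition $\{A_j\}$. Concretely, nothing you wrote prevents some $A_j$ from lying entirely inside the contracted part of a single $P_i$ while $p_i$ is assigned to a different $A_{j'}$; your rounded partition $\{\tilde A_j\}$ could then have an empty part and fail to be an $\ell$-way partition at all. The actual obstruction to this scenario is not the outer conductance of $P_i$ but rather that a low-conductance set in $G_{t'}$ with small volume must have most of its mass in the \emph{non-contracted} vertices (this is the content of property~(C2) and Lemma~\ref{lem:low_conductance_full_vs_contract_graph}), and it is precisely the quantitative form of that statement, via Corollary~\ref{corr:vol_change_subset} with $c_1=\tfrac14\log^{0.1\alpha}(n_{t'})$, that produces the $0.9\alpha$ exponent. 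Your proposal treats the $0.9\alpha$ as slack to absorb bookkeeping, but it is a hard output of this argument.

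The paper therefore does \emph{not} round $\{A_j\}$ onto super vertices. Instead it builds an $r$-way partition of $\tilde G_{t'}$ directly: the non-contracted representations $\tilde S_j$ of each \emph{small} optimal cluster $S_j$ (volume $\le k^6\log^{2\gamma}(n_t)$), together with the super vertices $p_1,\ldots,p_{k-1}$ as singletons and a final part $p_k^\ast$ collecting $p_k$ with all leftover non-contracted vertices. Lemma~\ref{lem:low_conductance_full_vs_contract_graph} handles the $\tilde S_j$, property~(C3) handles each $p_i$, and a separate counting claim shows $r\ge\ell$ by arguing that the number of \emph{large} clusters in $\{S_j\}$ cannot exceed $k$ (otherwise one would get $k{+}1$ disjoint low-conductance sets in $G_t$, contradicting $\lambda_{k+1}(\mathcal L_{G_t})=\Omega(1)$). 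This sidesteps the mis-assignment issue entirely.
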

 
Secondly, we show that there is a close connection between $\lambda_{\ell+1}(\mathcal{L}_{G_{t'}})$ and $\lambda_{\ell+1}(\mathcal{L}_{\widetilde{G}_{t'}})$ for any $\ell\in\mathbb{N}$.

\begin{lemma}\label{lem:lambda_k+1 from tilde(G) to G} The following statements hold:
\vspace{-0.1cm}
\begin{itemize}\itemsep 0.06cm
\item 
   If $\lambda_{\ell+1}(\mathcal{L}_{\tilde{G}_{t'}}) = \Omega(1)$ for some $\ell \in \mathbb{N}$, then  $\lambda_{\ell+1}\lp \mathcal{L}_{G_{t'}}\rp = \Omega \lp \log^{-\alpha}(n_{t'})/{\ell^6} \rp$ for constant $\alpha > 0$.
\item  If   
      $\lambda_{\ell+1}\lp \mathcal{L}_{G_{t'}} \rp= \Omega(1)$ holds for some $\ell\in\mathbb{N}$, then $\lambda_{\ell+1}(\mathcal{L}_{\tilde{G}_{t'}}) = \Omega(1)$.
\end{itemize}
\end{lemma}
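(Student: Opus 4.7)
The plan is to derive both bullets by combining the higher-order Cheeger inequality (\lemref{Higher Cheeger}) with the contrapositive of \lemref{conductance_full>contract}, using the $k$-way conductance $\rho$ as the bridge between eigenvalues of $G_{t'}$ and $\tilde{G}_{t'}$. Throughout, I work under the two events $\mathcal{E}_1$ and $\mathcal{E}_2$ already conditioned on, so that the conductance approximations of \lemref{conductanceHvsG} are freely available.

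For the first bullet, I would start from $\lambda_{\ell+1}(\mathcal{L}_{\tilde{G}_{t'}}) = \Omega(1)$ and apply the Cheeger lower bound to conclude $\rho_{\tilde{G}_{t'}}(\ell+1) \geq \lambda_{\ell+1}(\mathcal{L}_{\tilde{G}_{t'}})/2 = \Omega(1)$. The first part of \lemref{conductance_full>contract} says that whenever $\rho_{G_{t'}}(\ell+1) \leq \log^{-\alpha}(n_{t'})$, the contracted conductance $\rho_{\tilde{G}_{t'}}(\ell+1)$ is at most $\max\{O(\log^{-0.9\alpha}(n_{t'})), O(k^{-6}\log^{-\gamma}(n_{t'}))\}$, which is $o(1)$ for every constant $\alpha > 0$. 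The contrapositive, applied against our $\Omega(1)$ lower bound on $\rho_{\tilde{G}_{t'}}(\ell+1)$, therefore gives $\rho_{G_{t'}}(\ell+1) > \log^{-\alpha}(n_{t'})$ for some constant $\alpha > 0$. Plugging this into the Cheeger upper bound $\rho_{G_{t'}}(\ell+1) \leq C_{\ref{lem:Higher Cheeger}} (\ell+1)^3 \sqrt{\lambda_{\ell+1}(\mathcal{L}_{G_{t'}})}$ and rearranging yields $\lambda_{\ell+1}(\mathcal{L}_{G_{t'}}) = \Omega(\log^{-2\alpha}(n_{t'})/\ell^6)$, matching the claim after renaming $2\alpha$ to the $\alpha$ appearing in the statement.

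The second bullet follows the same template in the reverse direction. From $\lambda_{\ell+1}(\mathcal{L}_{G_{t'}}) = \Omega(1)$ and Cheeger's lower bound, $\rho_{G_{t'}}(\ell+1) = \Omega(1)$. The contrapositive of the second part of \lemref{conductance_full>contract} then forces $\rho_{\tilde{G}_{t'}}(\ell+1) > \log^{-\delta}(n_{t'})$ for every constant $\delta > 0$, and the Cheeger upper bound on $\tilde{G}_{t'}$ gives $\lambda_{\ell+1}(\mathcal{L}_{\tilde{G}_{t'}}) \geq \rho_{\tilde{G}_{t'}}(\ell+1)^2 / (C_{\ref{lem:Higher Cheeger}}^2 (\ell+1)^6)$. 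Because $\tilde{G}_{t'}$ consists of only $k = \tilde{O}(1)$ super vertices plus $|\Vnew| = O(\log^{\gamma}(n_t))$ non-contracted vertices, $\ell+1$ is $\tilde{O}(1)$, so in the paper's asymptotic convention the resulting bound reads as $\Omega(1)$.

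The main obstacle I anticipate is landing the second bullet precisely at $\Omega(1)$ rather than at $\Omega(1/\polylog(n_{t'}))$: the chain of ``Cheeger LB $\to$ contrapositive of \lemref{conductance_full>contract} $\to$ Cheeger UB'' accumulates a $\log^{-2\delta}$ factor from the conductance conversion together with a $(\ell+1)^6$ factor from the upper Cheeger direction, so both must be absorbed into the $\Omega(1)$ using the smallness of $\tilde{V}_{t'}$. A cleaner alternative would be a direct Rayleigh-quotient/lifting argument: given $\ell+1$ near-null vectors of $\mathcal{L}_{\tilde{G}_{t'}}$, lift each to $V_{t'}$ by assigning a common value to all vertices of a contracted cluster $P_i$; since the cluster-preserving sparsifier keeps cuts between $P_i$ and $P_j$ comparable to those in $G_{t'}$ (via $\mathcal{E}_1,\mathcal{E}_2$), these lifts would certify small-Rayleigh-quotient directions for $\mathcal{L}_{G_{t'}}$ and contradict $\lambda_{\ell+1}(\mathcal{L}_{G_{t'}}) = \Omega(1)$. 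This direct route avoids the $\ell^6$ loss but requires carefully tracking the normalized-Laplacian weighting and the relation between $w_{H_{t'}}$ (used inside $\tilde{G}_{t'}$) and $w_{G_{t'}}$.
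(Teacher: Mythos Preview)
Your argument for the first bullet is exactly the paper's: assume $\lambda_{\ell+1}(\mathcal{L}_{G_{t'}})$ is small, use the Cheeger upper bound to get $\rho_{G_{t'}}(\ell+1)=O(\log^{-0.5\alpha}(n_{t'}))$, transfer to $\tilde G_{t'}$ via \lemref{conductance_full>contract}, and contradict $\rho_{\tilde G_{t'}}(\ell+1)\ge \lambda_{\ell+1}(\mathcal{L}_{\tilde G_{t'}})/2=\Omega(1)$.

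For the second bullet, your primary route (Cheeger $\to$ contrapositive of \lemref{conductance_full>contract} $\to$ Cheeger) does not reach $\Omega(1)$, and your attempted rescue is incorrect: in the paper's convention $\tilde O(1)$ means $O(\polylog(n))$, so ``$\ell+1=\tilde O(1)$'' still leaves a $1/\polylog$ factor, which is not $\Omega(1)$. You are right that this chain is the wrong tool here.

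Your ``cleaner alternative'' is precisely what the paper does, with one refinement you should make explicit. The edge weights of $\tilde G_{t'}$ are built from $H_t$, not from $G_{t'}$; consequently $\tilde G_{t'}$ is a quotient of the auxiliary sparsifier $H'_{t'}$ of \lemref{special_sparsifier}, not of $G_{t'}$ itself. The paper lifts each eigenvector $g_i$ of $\mathcal{L}_{\tilde G_{t'}}$ to $\bar g_i\in\mathbb{R}^{n_{t'}}$ by setting $\bar g_i$ constant (after degree normalisation) on each block $A_j\subset V(H'_{t'})$, checks that $\{\bar g_i\}$ is orthonormal, and verifies $\bar g_i^{\intercal}\mathcal{L}_{H'_{t'}}\bar g_i \le 2\, g_i^{\intercal}\mathcal{L}_{\tilde G_{t'}} g_i$. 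Courant--Fischer then gives $\lambda_{\ell+1}(\mathcal{L}_{H'_{t'}})\le 2\,\lambda_{\ell+1}(\mathcal{L}_{\tilde G_{t'}})$, and the cluster-preserving sparsifier property of $H'_{t'}$ (\lemref{special_sparsifier}) yields $\lambda_{\ell+1}(\mathcal{L}_{H'_{t'}})=\Omega(\lambda_{\ell+1}(\mathcal{L}_{G_{t'}}))=\Omega(1)$. Lifting directly to $\mathcal{L}_{G_{t'}}$ as you sketched would not collapse cleanly, because the numerator of the Rayleigh quotient then involves $w_{G_{t'}}(A_x,A_y)$ rather than $w_{\tilde G_{t'}}(p_x,p_y)=w_{H'_{t'}}(A_x,A_y)$, and the sparsifier does not control arbitrary pairwise cuts.
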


  Lemmas~\ref{lem:conductance_full>contract} and \ref{lem:lambda_k+1 from tilde(G) to G} imply that the cluster-structures in $G_{t'}$ and $\widetilde{G}_{t'}$ are approximately preserved.

\subsection{Main Algorithm \label{sec:main_algo}}
Our main algorithm consists of the preprocessing stage, update stage, and query stage. They are   described as follows:
 
 \emph{Preprocessing Stage.} For the initial input graph $G_1=(V_1, E_1)$, we
apply (i)  $\mathsf{StaticSZSparsifier}(G_1, \tau)$ to obtain $H_1 = (V_1, F_1, w_{H_1})$,  (ii) $\mathsf{SpectralClustering}(H_1, k)$ to obtain initial partition $\mathcal{P} = \{P_1, \ldots P_k\}$, and (iii) $\mathsf{ContractGraph}(H_1, \mathcal{P})$ to obtain $\widetilde{G}_1 = (\widetilde{V}_1, \tilde{E}_1)$.

\emph{Update Stage.} 
 When a new edge arrives at time $t$, we apply Algorithm~\ref{alg:dynamic_update_cp_sparsifier} and the update procedure of the contracted graph (Section~\ref{sec:const_cp})    to dynamically maintain $H_t$ and $\widetilde{G}_t$.

\emph{Query Stage.}  When a query for a new clustering   starts at time $T$, the algorithm performs the following operations, where $\gamma$ is the constant satisfying $\gamma>\beta$ and $\gamma>0.9\alpha$: 
 \begin{itemize}\itemsep 0.06cm
\item For $r$ being the last time   at which $\widetilde{G}_t$ is recomputed, the algorithm checks if $T-r \leq \log^{\gamma}(n_r)$, i.e., the number of added edges after the last reconstruction of the contracted graph is less than $\log^{\gamma}(n_r)$. If it is the case,  then the algorithm  runs spectral clustering on the contracted graph $\widetilde{G}_T$.
\item Otherwise, the algorithm runs  spectral clustering on $H_T$. It also recomputes $\widetilde{G}_T$, by first computing $\widetilde{G}_{r'}$, where $r'$ is the last time at which the dynamic gap assumption holds, and updating $\widetilde{G}_{r'}$ to $\widetilde{G}_T$ with the edge updates between time $r'$ and $T$.
\end{itemize}
See Algorithm~\ref{alg:sc_query} for  formal description.

\begin{algorithm}
   \caption{$\mathsf{QuerySpecClustering}(G_T, H_T, 
   \widetilde{G}_T, \gamma, \ell)$}\label{alg:sc_query}
\begin{algorithmic}[1]

    \STATE \textbf{Input:} Graphs $G_T$,   $H_T$,  and   $\widetilde{G}_T$,     $\gamma \in \mathbb{R}^{+}$, and  $\ell \in \mathbb{N}$
     
    \STATE \textbf{Output:}
    Partition $\mathcal{P} = \{P_1, \ldots P_\ell \}$

    \STATE  Let $r$ be the last time   at which $\widetilde{G}_T$ is recomputed.

    \IF{ $T - r \leq \log^{\gamma}(n_r)$} \label{alg:line:querysc:start_cluster_contract} 
    {
    \STATE  $P_1, \ldots, P_\ell \leftarrow \mathsf{SpectralClustering}(\widetilde{G}_{T}, \ell)$
    
    \STATE \textbf{Return} $\{P_1, \ldots, P_\ell\}$ \label{alg:line:querysc:end_cluster_contract}
    }

    \ELSE\label{alg:line:querysc:start_recompute_contract}
    {

    \STATE $P_1, \ldots, P_\ell \leftarrow \mathsf{SpectralClustering}(H_{T}, \ell)$ \label{alg:line:querysc:cluster_H}

    \STATE  Recompute $\widetilde{G}_{r'}$, where $r'$ is the last time   at which the dynamic gap assumption holds
    
    \STATE  Update $\widetilde{G}_{r'}$ to $\widetilde{G}_T$ with the edge updates between time $r'$ and $T$

     \STATE \textbf{Return} $\{P_1, \ldots, P_\ell\}$
    
    } 
    \ENDIF 
    \label{alg:line:querysc:end_recompute_contract}
    
\end{algorithmic}
\end{algorithm}

\begin{theorem}\label{thm:main_dynamic_SC}  Let $G_1=(V_1, E_1)$ be a graph with $n_1$ vertices and $k=\widetilde{O}(1)$ clusters, and  $\{G_t\}$  the sequence of  graphs  of $\{n_t\}$ vertices constructed     through   an edge insertion at each time   satisfying the dynamic gap assumption.
Assume that $G_T$ at query time $T$ has   $\ell$ clusters, i.e.,  
$\lambda_{\ell+1}(\mathcal{L}_{G_{T}}) = \Omega(1)$ and $\rho_{G_{T}}(\ell) = O(\ell^{-1} \log^{-\alpha}(n_{T}))$ for $\alpha \in \mathbb{R}^+$. Then,   with high probability
Algorithm~\ref{alg:sc_query}
    returns   $P_1, \ldots P_\ell$ with $
 \Phi_{G_T}(P_i) = O\lp \ell \cdot \log^{-0.9\alpha}(n_{T}) \rp
 $     for every $1\leq i\leq \ell$.
 The algorithm's running time for returning the  clusters of $G_1$ is $\widetilde{O}(|E_1|)$. Afterwards, the algorithm's  amortised update time   is $O(1)$, and   amortised query time is $o(n_T)$. 
\end{theorem}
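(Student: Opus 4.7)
The plan is to decompose the proof into three parts corresponding to the three claims: (i) the approximation guarantee on the returned partition, (ii) the preprocessing bound of $\widetilde{O}(|E_1|)$, and (iii) the amortised update and query time bounds. The algorithm branches on whether $T - r \leq \log^{\gamma}(n_r)$, so the correctness argument must handle both the ``contracted graph'' branch and the ``sparsifier'' branch. Throughout, I would condition on the events $\mathcal{E}_1$ and $\mathcal{E}_2$ from Section~\ref{sec:property_CG}, both of which hold with high probability, and on the high-probability guarantee of Theorem~\ref{thm:dynamic_cp_sparsifier}.

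For the contracted-graph branch, by Lemma~\ref{lem:lambda_k+1 from tilde(G) to G} the hypothesis $\lambda_{\ell+1}(\mathcal{L}_{G_T}) = \Omega(1)$ upgrades to $\lambda_{\ell+1}(\mathcal{L}_{\widetilde{G}_T}) = \Omega(1)$, while the first part of Lemma~\ref{lem:conductance_full>contract} applied to $\rho_{G_T}(\ell) = O(\ell^{-1}\log^{-\alpha}(n_T))$ gives $\rho_{\widetilde{G}_T}(\ell) = O(\log^{-0.9\alpha}(n_T))$ (the second term in the max is absorbed since $\gamma > 0.9\alpha$). Hence $\Upsilon_{\widetilde{G}_T}(\ell)$ is large enough that running spectral clustering on $\widetilde{G}_T$ returns $\{\widetilde{P}_i\}$ with $\Phi_{\widetilde{G}_T}(\widetilde{P}_i) = O(\ell \cdot \rho_{\widetilde{G}_T}(\ell))$ by the guarantee of \citet{peng_partitioning_2017}. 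Expanding each super vertex back to its cluster in $G_T$ yields a partition of $G_T$; applying the second part of Lemma~\ref{lem:conductance_full>contract} (with $\delta = 0.9\alpha$) to this partition gives conductance $O(\ell \cdot \log^{-0.9\alpha}(n_T))$ in $G_T$. For the sparsifier branch, I would instead invoke Lemma~\ref{lem:MS22+_GvsH} on $H_T$ (which is a cluster-preserving sparsifier of $G_T$ by Theorem~\ref{thm:dynamic_cp_sparsifier}) together with the large-$\Upsilon_{G_T}(\ell)$ hypothesis to conclude the same conductance bound on $\{P_i\}$ in $G_T$.

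For the time complexity, the preprocessing cost decomposes as $O(|E_1|)$ for $\mathsf{StaticSZSparsifier}$, $\widetilde{O}(|F_1|) = \widetilde{O}(n_1)$ for $\mathsf{SpectralClustering}$ on $H_1$ (since $|F_1| = \widetilde{O}(n_1)$ by Theorem~\ref{thm:dynamic_cp_sparsifier}), and $O(|F_1|)$ for $\mathsf{ContractGraph}$ by Lemma~\ref{lem:time_complexity_contract_graph}, totalling $\widetilde{O}(|E_1|)$. The amortised $O(1)$ update time follows directly from combining Theorem~\ref{thm:dynamic_cp_sparsifier} and Lemma~\ref{lem:running_time_dynamic_update_contracted_graph}. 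For the query time, in the common branch the algorithm only runs spectral clustering on $\widetilde{G}_T$, whose number of vertices is $k$ plus the non-contracted vertices accumulated over the last $\log^{\gamma}(n_r)$ updates, hence at most $k + \log^{\gamma}(n_r) = o(n_T)$; the spectral clustering step then costs $\widetilde{O}(|\widetilde{V}_T|) = o(n_T)$. In the rare branch, the cost is $\widetilde{O}(n_T)$ (for spectral clustering on $H_T$ and recomputing $\widetilde{G}$), but this branch is triggered at most once per $\log^{\gamma}(n_r)$ time steps, so over a window of $\log^{\gamma}(n_r)$ consecutive queries the amortised cost is $\widetilde{O}(n_T)/\log^{\gamma}(n_T) = o(n_T)$ for $\gamma$ chosen large enough.

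The main obstacle is the correctness of the contracted-graph branch: one must verify that the partition obtained by lifting each super vertex of $\widetilde{G}_T$ back to its underlying cluster in $G_T$ is exactly a valid $\ell$-way partition of $V_T$ whose conductance can be controlled via Lemma~\ref{lem:conductance_full>contract}. Subtleties arise from the interplay between contracted and non-contracted vertices in $\widetilde{G}_T$, as well as from newly added vertices since time $r$, and from ensuring that the cluster-structure was not destroyed during the $\leq \log^{\gamma}(n_r)$ intermediate updates; the dynamic gap assumption together with the event $\mathcal{E}_1$ is what ultimately guarantees this stability.
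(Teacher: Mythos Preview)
Your proposal is correct and follows essentially the same decomposition and lemma invocations as the paper's own proof: both split into the contracted-graph branch (handled via Lemmas~\ref{lem:conductance_full>contract} and~\ref{lem:lambda_k+1 from tilde(G) to G} plus the spectral clustering guarantee) and the sparsifier branch (handled via Lemma~\ref{lem:MS22+_GvsH}), and both derive the time bounds from Theorem~\ref{thm:dynamic_cp_sparsifier} and Lemma~\ref{lem:running_time_dynamic_update_contracted_graph} together with the $\log^{\gamma}(n_r)$ amortisation of the expensive branch. The only point the paper additionally spells out, which you omit, is that the value of $\ell$ itself can be identified within the stated time budget by reading off the spectral gap of $\widetilde{G}_T$ (or, in the rare branch, by trying the $\widetilde{O}(1)$ candidate values); since Algorithm~\ref{alg:sc_query} takes $\ell$ as input this is arguably outside the theorem statement proper, but you may wish to add a sentence covering it.
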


\begin{proof}
The algorithm's running time and approximation guarantee on $G_1$ follows from \cite{MS22}, so we only need to analyse the dynamic update stage.
 We first analyse the conductance of every output $P_i$. Notice that, if Lines~\ref{alg:line:querysc:start_cluster_contract}--\ref{alg:line:querysc:end_cluster_contract} of Algorithm~\ref{alg:sc_query} are executed, then   by Lemmas~\ref{lem:MS22+_GvsH}, \ref{lem:conductance_full>contract} and \ref{lem:lambda_k+1 from tilde(G) to G} the approximation guarantee holds. Otherwise,  Lines~\ref{alg:line:querysc:start_recompute_contract}--\ref{alg:line:querysc:end_recompute_contract}  are executed, then by the dynamic gap condition and Lemma~\ref{lem:MS22+_GvsH} the approximation guarantee holds as well.

  Next, we prove the running time guarantee. The $O(1)$ amortised update time of $H_t$ and $\widetilde{G}_t$ follows by Theorem~\ref{thm:dynamic_cp_sparsifier} and Lemma~\ref{lem:running_time_dynamic_update_contracted_graph}. For the query at time $T$, notice that if Lines~\ref{alg:line:querysc:start_cluster_contract}--\ref{alg:line:querysc:end_cluster_contract} are executed, then the query time is at most $O(|\widetilde{V}_T|^3) = O((k + \log^{\gamma}(n_T))^3) = \widetilde{O}(1)$. Note, the super vertices are used as sketches to quickly update the cluster assignment of each vertex;  otherwise,    Lines~\ref{alg:line:querysc:start_recompute_contract}--\ref{alg:line:querysc:end_recompute_contract}   are executed, and the query time is dominated by spectral clustering's time complexity of   $O \lp n_T \cdot \log^{\beta}(n_T) \rp$. Since this only happens every $\log^{\gamma}(n_r) = O(\log^{\gamma}(n_T))$ edge updates, the amortised query time is $O(n_T \cdot \log^{\beta - \gamma}(n_T)) = o(n_T)$.
 
 Finally, we show that the number of clusters $\ell$ can be identified with our claimed time complexity. Notice that, if Lines~\ref{alg:line:querysc:start_cluster_contract}--\ref{alg:line:querysc:end_cluster_contract} of the algorithm are executed, then by Lemmas~\ref{lem:conductance_full>contract} and~\ref{lem:lambda_k+1 from tilde(G) to G} we can detect the spectral gap in $G_T$ using $\widetilde{G}_T$; hence we can choose $\ell$ in $o(n_t)$ time. Otherwise,    Lines~\ref{alg:line:querysc:start_recompute_contract}--\ref{alg:line:querysc:end_recompute_contract}   are executed. In this case, we   run spectral clustering with different  values of $\ell'$ and find the correct value of $\ell$ (The same procedure is done to recompute $\widetilde{G}_{r'}$). Since there are $\widetilde{O}(1)$ clusters in total,  we achieve the same query time guarantee. 
\end{proof}

\begin{figure*}[t]
\vskip 0in
\begin{center}
\centerline{\includegraphics[width=0.65\linewidth]{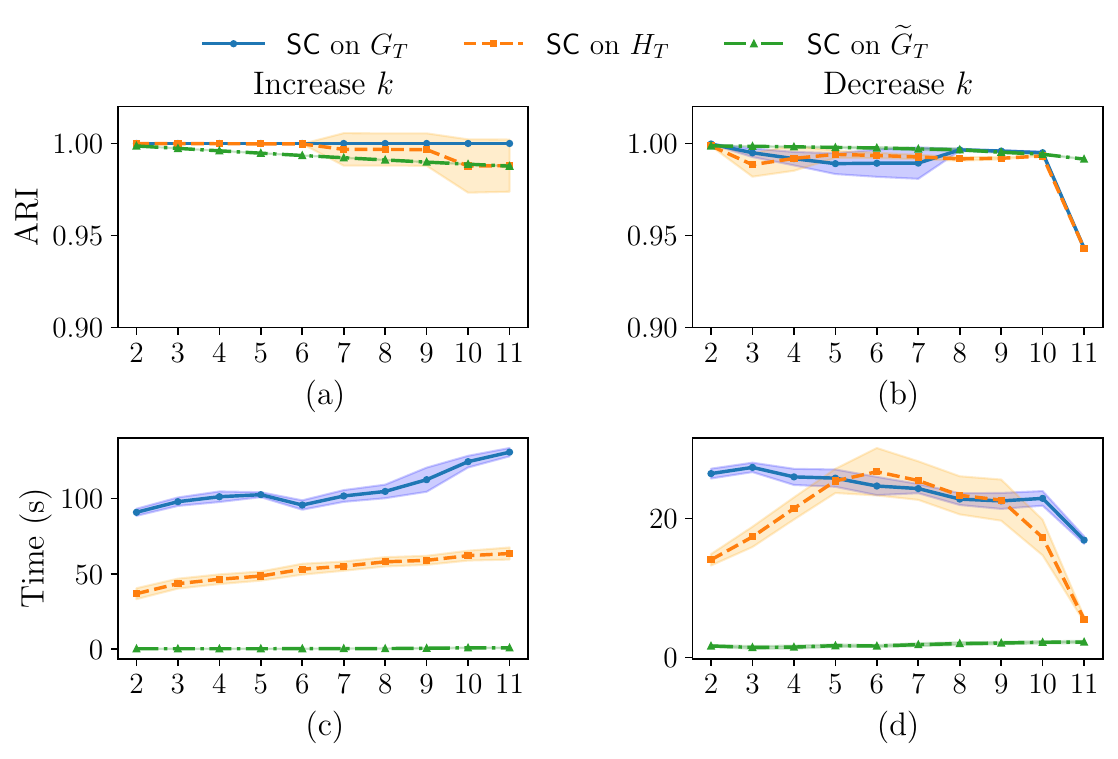}}
\caption{Results on the two versions of our dynamic SBM.  Figures (a) and (b) report the average ARI score  at each time $T$ for the clustering results on $G_T$, $H_T$, and $\widetilde{G}_T$;  Figures (c) and (d) report the running time in seconds  at each time $T$. Shaded regions indicate the standard deviation. 
\label{fig:results_sbm}}
\end{center}
\vskip -0.2in
\end{figure*}

\section{Experiments\label{sec:experiments} }

We experimentally evaluate the performance of our algorithm on synthetic and real-world datasets. We report the clustering accuracy of all tested algorithms using the Adjusted Rand Index~(ARI)~\cite{ari}, and compute the average and standard deviation over 10 independent runs.  Algorithms were implemented in Python 3.12.1 and experiments were performed using a Lenovo ThinkPad T15G, with an Intel(R) Xeon(R) W-10855M CPU@2.80GHz processor and 126 GB RAM.  Our code can be downloaded from \href{https://github.com/SteinarLaenen/Dynamic-Spectral-Clustering-With-Provable-Approximation-Guarantee}{https://github.com/steinarlaenen/Dynamic-Spectral-Clustering-With-Provable-Approximation-Guarantee}.

\subsection{Results on Synthetic Data}

We study graphs generated from the stochastic block model~(SBM), and introduce two dynamic extensions to generate new clusters and merge existing clusters.

\emph{SBM with increasing number of clusters.} We generate   the first graph $G_1$   based on the standard SBM, and  set $k=10$  and   the number of vertices in each cluster $\{S_i\}_{i=1}^k$ as $n_k = 10,000$. For every pair of $u \in S_i$ and $v \in S_j$ we include edge $\{u,v\}$ with probability $p$ if $i=j$, and with  probability $q$ if $i \neq j$.

 To update the graph, we generate a batch of edge updates in two steps: first, we randomly select a subset $Q \subset V(G_1)$ such that $|Q| = n_\mathrm{new} = 400$, and for any $u, v \in Q$ we include edge $e = \{u,v\}$ in the graph with probability $r_1$; setting $r_1$ sufficiently large ensures that the set $Q$ forms a new cluster in the graph. Second, for any $u,v \in V(G_1)$ we include edge $e = \{u,v\}$ with probability $s$. The edges sampled from   these two processes form one edge update batch. We sample 10 such batches (ensuring no new clusters overlap), each inducing a new cluster and additional noise.  

 To cluster each   $G_T$, we run spectral clustering (\textsf{SC}) on three   graphs: 
 \begin{enumerate}
     \item   We run spectral clustering on the full graph $G_T$.
     \item We construct the contracted graph $\widetilde{G}_1$ at time $T=1$, and incrementally update $\tilde{G}_1$ using the procedure described in Section~\ref{sec:const_cp}. Then, we run spectral clustering on each $\widetilde{G}_T$. 
     \item We construct a cluster-preserving sparsifier $H_1$ using Algorithm~\ref{alg:SZ_algorithm}, which we dynamically update using Algorithm~\ref{alg:dynamic_update_cp_sparsifier} with sampling parameter $\tau=3$, and cluster each subsequent $H_T$.  
\end{enumerate}
At each time $T$, we run spectral clustering with $k=10+T-1$ on all three graphs, and report the running times and ARI scores.
We set $p=0.1$, $q=0.01$, $r_1=0.95$, and $s=0.00001$, and plot the results in the left plots of Figure~\ref{fig:results_sbm}.
 We can see that at every time $T$,   spectral clustering on $G_T$ returns the perfect clustering, and   spectral clustering on $\widetilde{G}_T$ and $H_T$ returns marginally worse clustering results. On the running time, we see that running spectral clustering on $G_T$, $H_T$ and $\widetilde{G}_T$ takes around $100$ seconds, $50$ seconds, and less than $1$ second respectively.
  This highlights that our algorithm returns nearly-optimal clusters with much faster running time than running spectral clustering on $G_T$ or $H_T$.

\begin{figure*}[t]
\vskip 0in
\begin{center}
\centerline{\includegraphics[width=0.65\linewidth]{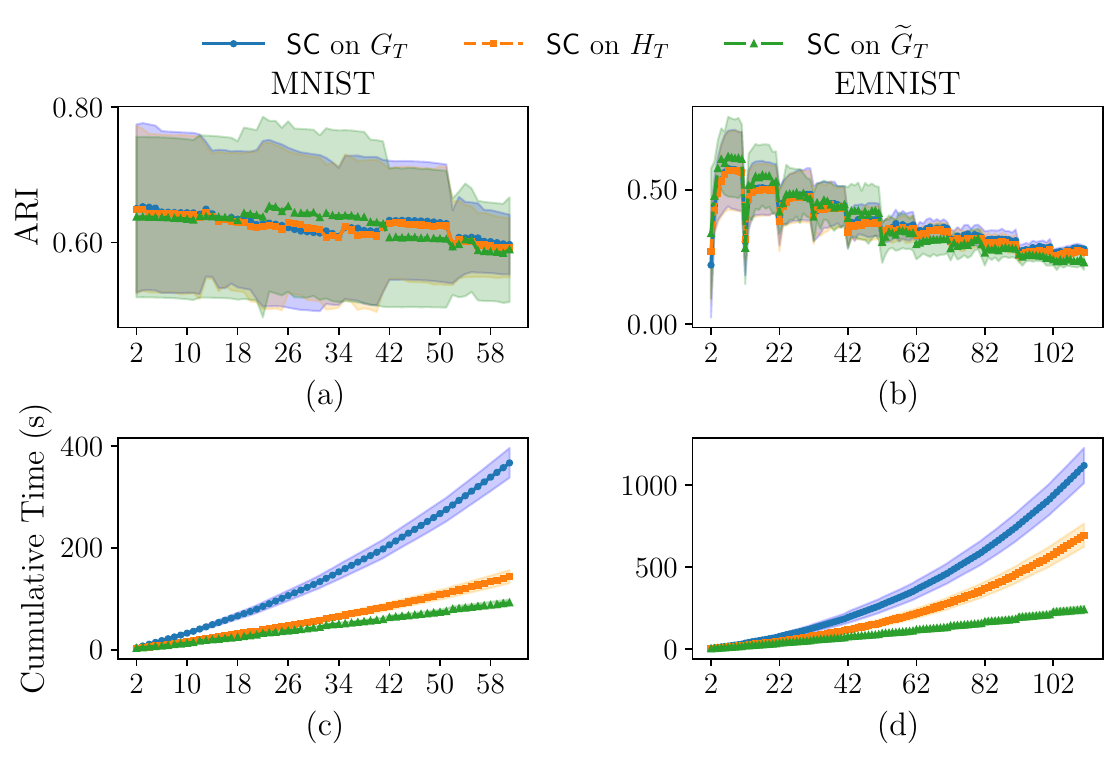}}
\caption{Results on MNIST and EMNIST.  Figures (a) and (b) report the average ARI scores  at each time $T$ for the clustering results on $G_T$, $H_T$, and $\widetilde{G}_T$; Figures (c) and (d) report the average cumulative running time in seconds at each time $T$. Shaded regions indicate the standard deviation.}\label{fig:results_mnist_emnist}
\end{center}
\vskip -0.25in
\end{figure*}

 Next, we compare the spectral gaps of $\mathcal{L}_{G_T}$ and $\mathcal{L}_{\widetilde{G}_T}$ for every $T$,   and Table~\ref{tab:spectral_gap_contract_vs_full_increase_k} reports that these   gaps are well preserved.   This demonstrates that, as what we prove earlier, the new cluster-structure of $G_T$ can be indeed identified from $\widetilde{G}_T$.

\begin{table}[h]
\caption{Spectral gaps in $\mathcal{L}_{G_T}$ and $\mathcal{L}_{\widetilde{G}_T}$ for SBM with increasing number of clusters. We report     $\lambda_{k+T}(\mathcal{L}_{G_T}) / \lambda_{k+T -1}(\mathcal{L}_{G_T})$ and $\lambda_{k+T}(\mathcal{L}_{\widetilde{G}_T}) / \lambda_{k+T -1}(\mathcal{L}_{\widetilde{G}_T})$ at each time $T$.}
\centering
\vskip 0.1in
\resizebox{\columnwidth}{!}{%
\begin{tabular}{@{}lllllllllll@{}}
\toprule
\textbf{$T$}    & 2 & 3     & 4     & 5     & 6     & 7     & 8     & 9     & 10     & 11    \\ \midrule
$G_T$             & $6.3$                                              & $5.8$ & $5.8$ & $5.7$ & $5.7$ & $5.7$ & $5.6$ & $5.6$ & $5.6$ & $5.5$ \\
$\widetilde{G}_T$ & $9.3$                                              & $9.2$ & $9.0$ & $8.7$ & $8.5$ & $8.1$ & $7.8$ & $7.5$ & $7.1$ & $6.8$   \\ \bottomrule
\end{tabular}%
}
\label{tab:spectral_gap_contract_vs_full_increase_k} 
\end{table}

 \emph{SBM with decreasing number of clusters.} We set $k=25$, and the first graph $G_1$ is generated based on the standard SBM with parameters $p$ and $q$. 
 For clusters $\{S_i\}_{i=1}^5$ we set $|S_i| = 20,000$, and for $\{S_i\}_{i=6}^{25}$ we set $|S_i| = 500$; hence  there are $5$ large and $20$ small clusters.

 To update the graph, we generate a batch of edge updates as follows: we randomly choose two clusters $S_i$ and $S_j$ such that $|S_i| = |S_j| = 500$, and for any $u \in S_i$ and $v \in S_j$ we include edge $e=\{u,v\}$ in the graph with probability $r_2$. Setting $r_2$ sufficiently large ensures that clusters $S_i$ and $S_j$ merge. Similarly as before, for any $u,v \in V(G_1)$ we also include edge $e = \{u,v\}$ with probability $s$. All the edges sampled by these two processes form a single batch update. We sample $10$ such batches, and there are $k=15$ clusters at final time  $T=11$. At each time $T$, we run spectral clustering with $k=25-T+1$ on all three graphs. We set $p=0.1$, $q=0.001$, $r_2=0.95$, and $s=0.00001$, and plot the results in the right plots of Figure~\ref{fig:results_sbm}.

Similar to the SBM with increasing number of clusters, at every time $T$,   spectral clustering   on all three graphs returns similar results.  We further see that spectral clustering on $\widetilde{G}_T$ has lower running time than the one  on $G_T$ and $H_T$. 
The spectral gaps in $G_T$ and $\widetilde{G}_T$ are reported  in Table~\ref{tab:spectral_gap_contract_vs_full_decrease_k}.

\begin{table}[h]
\caption{Spectral gaps in $\mathcal{L}_{G_T}$ and $\mathcal{L}_{\widetilde{G}_T}$ for SBM with decreasing number of clusters.  We report   $\lambda_{k-T+2}(\mathcal{L}_{G_T}) / \lambda_{k-T+1}(\mathcal{L}_{G_T})$ and $\lambda_{k-T+2}(\mathcal{L}_{\widetilde{G}_T}) / \lambda_{k-T+1}(\mathcal{L}_{\widetilde{G}_T})$ at each time $T$.}
\centering
\vskip 0.1in
\resizebox{\columnwidth}{!}{%
\begin{tabular}{@{}lllllllllll@{}}
\toprule
\textbf{$T$}    & 2 & 3     & 4     & 5     & 6     & 7     & 8     & 9     & 10     & 11    \\ \midrule
$G_T$                                  & $4.5$      & $4.4$ & $4.2$ & $4.0$ & $4.0$ & $3.9$ & $3.8$ & $3.8$ & $3.8$ & $3.6$ \\
$\widetilde{G}_T$                      & $8.3$      & $8.0$ & $7.5$ & $7.4$ & $7.4$ & $7.0$ & $6.8$ & $6.3$ & $5.8$ & $5.4$ \\ \bottomrule
\end{tabular}%
}
\vskip -0.05in
\label{tab:spectral_gap_contract_vs_full_decrease_k}
\end{table}

\subsection{Results on Real-World Data}

 We further evaluate our algorithm on 
the MNIST dataset~\cite{lecun_mnist_1998}, which consists of  10 classes of handwritten digits and has $70,000$ images,  and the ``letter'' subset of the EMNIST dataset~\cite{cohen_afshar_tapson_schaik_2017},  which consists of  26 classes of handwritten letters and has $145,600$ images.  We construct a $k$-nearest neighbour graph for each dataset, and set $k=100$~(resp. $k=200$) for   MNIST~(resp.  EMNIST).

 We   select four classes~(clusters) at random; the chosen vertices and  adjacent edges  in the $k$-nearest neighbour  graph form    $G_1$. To construct the sequence of updates, we select one new cluster~(resp. two) at random for   MNIST~(resp. EMNIST), and add the edges inside the new cluster as well as   the ones between the new and existing clusters. We randomly partition these new edges into 10 batches of equal size,  and add these to the graph sequentially. We recompute  $\tilde{G}_T$ after one  class~(resp. two) is streamed for  
MNIST~(resp. EMNIST), and report the results in  Figure~\ref{fig:results_mnist_emnist}.  The update/reconstruction time is  included in the running time.

Our experiments on real-world data  further confirm   that, as the size of the  underlying graph  and its   number of clusters increase over time, our designed algorithm has much lower running time  compared with   repeated execution  of spectral clustering, while producing   comparable clustering results.

\section*{Impact Statement}

This paper presents work whose goal is to advance the field of Machine Learning. There are many potential societal consequences of our work, none  of which we feel must be specifically highlighted here.

\section*{Acknowledgements}

This work is supported by  an EPSRC Early Career Fellowship~(EP/T00729X/1). Part of this work was done when He Sun was visiting the Simons
Institute for the Theory of Computing in Fall 2023.

\bibliography{reference}
\bibliographystyle{icml2024}

\newpage
\appendix
\onecolumn
\section{Omitted Details from Section~\ref{sec:dynamic_CPS}}

The section presents the details omitted from Section~\ref{sec:dynamic_CPS}, and is organised as follows. We prove Lemma~\ref{lem:MS22+_GvsH} in Section~\ref{sec:ms22+_proof}, and prove Theorem~\ref{thm:dynamic_cp_sparsifier} in Section~\ref{sec:dynamic_cp_proof}.

\subsection{Proof of Lemma~\ref{lem:MS22+_GvsH}\label{sec:ms22+_proof}}

 We first prove a structure theorem. We   define the vectors $\chi_1, \ldots \chi_k$ to be the indicator vectors of the optimal clusters $S_1, \ldots, S_k$ in $G$, where $\chi_i(u) = 1$ if $u \in S_i$, and $\chi_i = 0$ otherwise. We further use $\bar{g}_1, \ldots, \bar{g}_k$ to denote the indicator vectors of the optimal clusters $S_1, \ldots S_k$ in $G$, normalised by the degrees in $H$, i.e.,
\begin{equation}\label{eq:normalised_indicator_vector}
\bar{g}_i \triangleq \frac{D_H^{\frac{1}{2}}\chi_i}{\|D_H^{\frac{1}{2}}\chi_i\|}.
\end{equation}
\begin{theorem}\label{thm:struc}
 Let $S_1, \ldots, S_k$ be a $k$-way partition of $G$ achieving $\rho_{G}(k)$, and   $\Upsilon_G(k) = \Omega(k)$, and $\{f_i\}_{i=1}^k$ be first $k$ eigenvectors of $\mathcal{L}_H$ and let  $\{\bar{g}_i\}_{i=1}^k$ be defined as in~\eqref{eq:normalised_indicator_vector} above. Then, the following statements hold:
\begin{enumerate}
    \item  For any $i\in[k]$, there is $\hatf_i\in\mathbb{R}^n$, which is  
    a linear combination of $f_1,\ldots, f_k$, such that $\|\barg_i - \hatf_i\|^2 = O \lp k/\Upsilon_G(k) \rp$ .
    \item  There are vectors $\hatg_1,\ldots, \hatg_k$, each of which is a linear combination of $\barg_1,\ldots, \barg_k$, such that $\sum_{i = 1}^k \norm{f_i - \hatg_i}^2 = O \lp k^2 /\Upsilon_G(k) \rp$.
\end{enumerate}
\end{theorem}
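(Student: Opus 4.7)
The plan is to prove both statements by combining the defining properties of a cluster-preserving sparsifier (Definition~\ref{def:CPS}) with a standard spectral decomposition argument in the basis of $\{f_i\}_{i=1}^n$. The novelty compared to the one-graph structure theorem of \cite{peng_partitioning_2017} is that the normalised indicators $\bar{g}_i$ involve degrees of $H$ while the eigenvectors $f_i$ come from $\mathcal{L}_H$, but the optimal partition $\{S_i\}$ is an object of $G$; the cluster-preserving property is exactly what bridges this gap.

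For Part~1, I would first compute the Rayleigh quotient of each $\bar{g}_i$ with respect to $\mathcal{L}_H$. Using $D_H^{1/2}\mathcal{L}_H D_H^{1/2}=L_H$ (the unnormalised Laplacian of $H$), this yields
\[
\bar{g}_i^{\intercal}\mathcal{L}_H \bar{g}_i \;=\; \frac{w_H(S_i, V\setminus S_i)}{\vol_H(S_i)} \;\leq\; \Phi_H(S_i).
\]
The cluster-preserving sparsifier property gives $\Phi_H(S_i) = O(k\cdot \Phi_G(S_i)) = O(k\cdot \rho_G(k))$. Next, expand $\bar{g}_i = \sum_{j=1}^n \alpha_j^{(i)} f_j$ and use $\bar{g}_i^{\intercal}\mathcal{L}_H\bar{g}_i = \sum_j (\alpha_j^{(i)})^2\lambda_j(\mathcal{L}_H)$ together with the spectral-gap part of Definition~\ref{def:CPS}, namely $\lambda_{k+1}(\mathcal{L}_H)=\Omega(\lambda_{k+1}(\mathcal{L}_G))$, to bound the tail mass:
\[
\sum_{j>k}(\alpha_j^{(i)})^2 \;\leq\; \frac{\bar{g}_i^{\intercal}\mathcal{L}_H\bar{g}_i}{\lambda_{k+1}(\mathcal{L}_H)} \;=\; O\!\left(\frac{k\cdot\rho_G(k)}{\lambda_{k+1}(\mathcal{L}_G)}\right) \;=\; O\!\left(\frac{k}{\Upsilon_G(k)}\right).
\]
Setting $\hat{f}_i \triangleq \sum_{j=1}^k \alpha_j^{(i)} f_j$ then gives $\|\bar{g}_i - \hat{f}_i\|^2 = \sum_{j>k}(\alpha_j^{(i)})^2 = O(k/\Upsilon_G(k))$, which proves Part~1.

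For Part~2, I would deduce it from Part~1 via a short orthogonality counting argument. Since $\{\bar{g}_i\}_{i=1}^k$ are orthonormal (their supports $S_i$ are disjoint) and $\{f_i\}_{i=1}^k$ are orthonormal, Part~1 yields for each $i$ the bound $1-\sum_{j=1}^k \langle \bar{g}_i, f_j\rangle^2 = \|\bar{g}_i - \hat{f}_i\|^2 = O(k/\Upsilon_G(k))$, so summing over $i$,
\[
\sum_{i=1}^k\sum_{j=1}^k \langle \bar{g}_i, f_j\rangle^2 \;\geq\; k - O\!\left(\frac{k^2}{\Upsilon_G(k)}\right).
\]
Now define $\hat{g}_i \triangleq \sum_{j=1}^k \langle f_i, \bar{g}_j\rangle\,\bar{g}_j$, which is by construction a linear combination of $\bar{g}_1,\ldots,\bar{g}_k$. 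By orthonormality of $\{\bar{g}_j\}$, $\|f_i - \hat{g}_i\|^2 = 1-\sum_{j=1}^k\langle f_i,\bar{g}_j\rangle^2$, so
\[
\sum_{i=1}^k \|f_i-\hat{g}_i\|^2 \;=\; k - \sum_{i=1}^k\sum_{j=1}^k\langle f_i, \bar{g}_j\rangle^2 \;=\; O\!\left(\frac{k^2}{\Upsilon_G(k)}\right),
\]
which is exactly Part~2.

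The main obstacle is the first step: connecting $\bar{g}_i^{\intercal}\mathcal{L}_H\bar{g}_i$ to the conductance of $S_i$ in $H$ even though $S_i$ is defined as an optimal cluster of $G$. This is resolved by the conductance-preserving condition (i) of Definition~\ref{def:CPS}, together with the observation that the assumption $\Upsilon_G(k)=\Omega(k)$ forces the clusters $S_i$ to have volume bounded away from $\vol_H(V)/2$ (so that $\Phi_H(S_i)$ equals $w_H(S_i,V\setminus S_i)/\vol_H(S_i)$); once this identification is made, everything else is a one-line Rayleigh-quotient decomposition plus the elementary orthogonality counting above.
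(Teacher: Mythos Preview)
Your proof is correct and matches the paper's argument essentially line for line: the same projection $\hat f_i=\sum_{j\le k}\langle \bar g_i,f_j\rangle f_j$, the same Rayleigh-quotient bound $\bar g_i^{\intercal}\mathcal{L}_H\bar g_i = w_H(S_i,V\setminus S_i)/\vol_H(S_i)=O(k\cdot\rho_G(k))$ via Definition~\ref{def:CPS}, and the same orthogonality swap $\sum_i\|f_i-\hat g_i\|^2=\sum_j\|\bar g_j-\hat f_j\|^2$ for Part~2. Your closing remark about needing $\vol_H(S_i)\le \vol_H(V)/2$ to identify the ratio with $\Phi_H(S_i)$ is in fact slightly more careful than the paper, which simply asserts the bound ``because $H$ is a cluster preserving sparsifier of $G$.''
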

\begin{proof}
    Let $\hatf_i = \sum_{j=1}^k \langle \barg_i, f_j\rangle f_j $, and we write $\barg_i$ as a linear combination of the vectors $f_1, \ldots, f_n$ by 
   $
        \barg_i = \sum_{j = 1}^n \langle \barg_i, f_j \rangle f_j$.
   Since $\hatf_i$ is a projection of $\barg_i$, we have that $\barg_i - \hatf_i$ is perpendicular to $\hatf_i$ and 
    \begin{align*}
        \norm{\barg_i - \hatf_i}^2 & = \norm{\barg_i}^2 - \norm{\hatf_i}^2  = \left(\sum_{j = 1}^n \langle \barg_i, f_j \rangle^2 \right) - \left(\sum_{j = 1}^{k} \langle \barg_i, f_j \rangle^2 \right)   = \sum_{j = k + 1}^n \langle \barg_i, f_j \rangle^2.
    \end{align*}
    Now, let us consider the quadratic form
    \begin{align}
        \barg_i^\transpose \mathcal{L}_H \barg_i & = \left(\sum_{j = 1}^n \langle \barg_i, f_j \rangle f_j^\transpose \right) \mathcal{L}_H \left(\sum_{j = 1}^n \langle \barg_i, f_j \rangle f_j\right) \nonumber \\
        & = \sum_{j = 1}^n \langle \barg_i, f_j \rangle^2 \lambda_j(\mathcal{L}_H) \nonumber  \\
        & \geq \lambda_{k + 1}(\mathcal{L}_H)  \norm{\barg_i - \hatf_i}^2 \nonumber \\
        &= \Omega\lp \lambda_{k + 1}(\mathcal{L}_G) \rp \norm{\barg_i - \hatf_i}^2, \label{eq:lbquad}
    \end{align}
     where the second to last inequality follows by the fact that $\lambda_i (\mathcal{L}_H)\geq 0$ holds for any $1\leq i\leq n$, and the last inequality follows because $H$ is a cluster preserving sparsifier of $G$. This gives us that  
     \begin{align}
        \barg_i^\transpose \mathcal{L}_H \barg_i & = \sum_{(u, v) \in E_H} w_H(u, v) \left(\frac{\barg_i(u)}{\sqrt{\deg_H(u)}} - \frac{\barg_i(v)}{\sqrt{\deg_H(v)}}\right)^2 \nonumber \\
        & = \sum_{(u, v) \in E_H} w_H(u, v) \left(\frac{\chi_i(u)}{\sqrt{\vol_H(S_i)}} - \frac{\chi_i(v)}{\sqrt{\vol_H(S_i)}}\right)^2 \nonumber \\
        & = \frac{w_H(S_i, V \setminus S_i)}{\vol_H(S_j)}\nonumber\\
        & = O \lp k \cdot \rho_G(k) \rp,  \label{eq:upquad}
    \end{align}
    where the last line holds because $H$ is a cluster preserving sparsifier of $G$.
    Combining \eqref{eq:lbquad} with \eqref{eq:upquad}, we have that
    \[
    \norm{\barg_i - \hatf_i}^2 \leq \frac{\barg_i^\transpose \mathcal{L}_H \barg_i}{\Omega\lp \lambda_{k + 1}(\mathcal{L}_G) \rp } \leq \frac{O \lp k \cdot \rho_G(k) \rp}{\Omega\lp \lambda_{k + 1}(\mathcal{L}_G) \rp } = O \lp \frac{k}{\Upsilon_G (k)} \rp,
    \]
    which proves the first statement of the theorem.

 Now we prove the second statement.
We define for any $1\leq i \leq k$ that 
 $\hatg_i = \sum_{j=1}^k \langle f_i, \barg_j\rangle \barg_j$, and  have that 
    \begin{align*}
        \sum_{i = 1}^k \norm{f_i - \hatg_i}^2
        & = \sum_{i = 1}^k \left( \norm{f_i}^2 - \norm{\hatg_i}^2 \right) \\
        & = k - \sum_{i = 1}^k \sum_{j = 1}^k \langle \barg_j, f_i \rangle^2   \\
        & = \sum_{j = 1}^k \left( 1 - \sum_{i = 1}^k \langle \barg_j, f_i \rangle^2 \right) \\
        & = \sum_{j = 1}^k \left( \norm{\barg_j}^2 - \norm{\hatf_j}^2 \right) \\
        & = \sum_{j = 1}^k \norm{\barg_j - \hatf_j}^2\\
        & = \sum_{j = 1}^k O \lp \frac{k}{\Upsilon_G(k)} \rp\\
        & = O \lp \frac{k^2}{\Upsilon_G(k)} \rp,
    \end{align*}
    where the last inequality follows by the first statement of  Theorem~\ref{thm:struc}.
\end{proof}

\begin{proof}[Proof Sketch of Lemma~\ref{lem:MS22+_GvsH}] The proof follows Theorem~1.2 of \cite{peng_partitioning_2017} and Theorem~2 of \cite{MS22}, which imply that every returned cluster $P_i~(1\leq i\leq k)$ from spectral clustering on $G$ satisfies that
\[
\vol_G(P_i\triangle S_i) = O\left( k\cdot \frac{\vol_G(S_i)}{\Upsilon_G(k)}\right) 
\]
and
\[
\Phi_{G}(P_i) = O\left( \Phi_{G}(S_i) + \frac{k}{\Upsilon_G(k)} \right),
\]
where $S_i$ is the optimal correspondence of $P_i$ in $G$.  Since $H$ is a cluster-preserving sparsifier of $G$, we know that $\rho_H(k) = O(k\cdot \rho_G(k))$ and $\lambda_{k+1}(\mathcal{L}_H) = \Omega(\lambda_{k+1} (\mathcal{L}_G))$, which implies that 
\begin{equation}\label{eq:rhoh_vs_rhog}
\Upsilon_H(k) = \frac{\lambda_{k+1} (\mathcal{L}_H)}{\rho_H(k)} = \frac{\Omega(\lambda_{k+1} (\mathcal{L}_G))}{ O(k\cdot \rho_G(k))} =\Omega\left(\frac{1}{k}\cdot \Upsilon_G(k)\right).
\end{equation}
On the other side,   compared with their work, we need to apply the bottom $k$ eigenvectors of $\mathcal{L}_H$ instead of $\mathcal{L}_G$  to run spectral clustering. As such, combining~\eqref{eq:rhoh_vs_rhog} with the adjusted structure theorem~(Theorem~\ref{thm:struc}) one can prove Lemma~\ref{lem:MS22+_GvsH} using the proof technique from~\cite{MS22} and~\cite{peng_partitioning_2017}. 
\end{proof}

\subsection{Proof of Theorem~\ref{thm:dynamic_cp_sparsifier}\label{sec:dynamic_cp_proof}}

 Let 
\[
E_\mathrm{resampled} \triangleq  e~\bigcup~ \Big\{ \{u,v\}\in E_{G_{t+1}} ~|~ u\in V_\mathrm{doubled} \Big\} 
\]be the set of all the edges that have been (re)-sampled by Algorithm~\ref{alg:dynamic_update_cp_sparsifier}, and   
\[
E_\mathrm{old} \triangleq E_{t+1} \setminus E_\mathrm{resampled}.
\]
Moreover, let 
 $$
p^{(t+1)}_u(v)  \triangleq \min \left\{\frac{\tau \cdot \log (n_{t+1})}{\deg_{G_{t+1}}(u)} , 1\right\}
$$
be the ``ideal'' sampling probability of an edge $\{u,v\}$ if one runs 
the $\mathsf{SZ}$ algorithm from the scratch on  $G_{t+1}$, and let 
\[
q^{(t+1)}(u,v) \triangleq p^{(t+1)}_u(v) + p^{(t+1)}_v(u) - p^{(t+1)}_u(v) \cdot p^{(t+1)}_v(u)
\]be the probability that edge $e$ is sampled if one runs the $\mathsf{SZ}$ algorithm from scratch at time $t+1$.
 For any edge $\{u,v\}$, we use 
\begin{itemize}
    \item $\tilde{q}(u,v) \triangleq q^{(r)}(u,v) $
    \item $\tilde{p}_u(v) \triangleq p^{(r)}_u(v)$
    \item $\tilde{p}_v(u) \triangleq p^{(r)}_v(u)$
\end{itemize}for some $1 \leq r \leq t+1$ to denote the   sampling probability last  used for edge $\{u,v\}$ throughout the sequence of edge updates. Hence, we have $\tilde{q}(u,v) = q^{(t+1)}(u,v)$ if  $\{u,v\} \in E_\mathrm{resampled}$, and  $\tilde{q}(u,v) = q^{(r)}(u,v)$ for some $1 \leq r \leq t+1$ if edge $\{u,v\} \in E_\mathrm{old}$. By the algorithm description~(Line~\ref{alg:sz_dynamic:line:doubled} in Algorithm~\ref{alg:dynamic_update_cp_sparsifier}), we   know that 
 \begin{equation}\label{eq:SZrelation sampling probability}
    \frac{\tau \cdot \log(n_{t+1})}{2 \cdot \deg_{G_{t+1}}(u)} \leq \frac{\tau \cdot \log(n_r)}{\deg_{G_r}(u)} \leq \frac{2 \cdot \tau \cdot \log(n_{t+1})}{\deg_{G_{t+1}}(u)}.
\end{equation} The following two concentration inequalities will be used in our analysis.

\begin{lemma}[Bernstein's Inequality~\cite{chung2006concentration}]\label{lem:bernstein}
Let $X_1, \ldots, X_n$ be independent random variables such that $|X_i| \leq M$ for any $i \in \{1, \ldots, n\}$.
Let $X = \sum_{i = 1}^n X_i$, and  $R = \sum_{i = 1}^n \mathbb{E}\left[ X_i^2 \right]$.
Then, it holds that
\[
    \mathbb{P}\left[|X - \mathbb{E}\left[ X \right] | \geq t\right] \leq 2 \exp\left(-\frac{t^2}{2(R + Mt/3)}\right).
\]
\end{lemma}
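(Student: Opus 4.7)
The plan is to apply the standard Chernoff--Laplace transform method. I would first center the variables by setting $Y_i = X_i - \mathbb{E}[X_i]$, so that the $Y_i$ are independent and mean-zero, with $|Y_i| \leq 2M$ and $\mathbb{E}[Y_i^2] = \mathrm{Var}(X_i) \leq \mathbb{E}[X_i^2]$. For any $\lambda > 0$, Markov's inequality applied to the exponential $e^{\lambda(X - \mathbb{E}[X])}$, combined with independence of the $Y_i$, yields
\[
\mathbb{P}[\, X - \mathbb{E}[X] \geq t \,] \;\leq\; e^{-\lambda t} \prod_{i=1}^n \mathbb{E}\bigl[e^{\lambda Y_i}\bigr].
\]

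The heart of the proof is a per-coordinate bound on $\mathbb{E}[e^{\lambda Y_i}]$. I would expand the exponential as a power series and take expectations: the constant term contributes $1$, the linear term vanishes because $\mathbb{E}[Y_i] = 0$, and for each $k \geq 2$ the boundedness gives $\mathbb{E}[Y_i^k] \leq \mathbb{E}[Y_i^2] \cdot (2M)^{k-2} \leq \mathbb{E}[X_i^2] \cdot (2M)^{k-2}$. The elementary inequality $k! \geq 2 \cdot 3^{k-2}$ for $k \geq 2$ then lets the tail of the series be summed as a geometric progression to yield
\[
\mathbb{E}\bigl[e^{\lambda Y_i}\bigr] \;\leq\; 1 + \frac{\lambda^2 \, \mathbb{E}[X_i^2]}{2\,(1 - \lambda M / 3)} \;\leq\; \exp\!\left(\frac{\lambda^2 \, \mathbb{E}[X_i^2]}{2\,(1 - \lambda M / 3)}\right),
\]
valid for $\lambda M < 3$, where the last step uses $1 + u \leq e^u$. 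Multiplying over $i$ collapses the sum of second moments into $R$.

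Substituting this product bound into the Chernoff inequality gives
\[
\mathbb{P}[\, X - \mathbb{E}[X] \geq t \,] \;\leq\; \exp\!\left(-\lambda t + \frac{\lambda^2 R}{2\,(1 - \lambda M / 3)}\right),
\]
and I would then optimize by choosing $\lambda = t / (R + Mt/3)$, which lies in $(0, 3/M)$ and makes the exponent simplify to $-t^2 / (2(R + Mt/3))$. The lower-tail bound $\mathbb{P}[X - \mathbb{E}[X] \leq -t]$ follows by running the identical argument on the random variables $-X_i$, which satisfy the same moment hypotheses, and a union bound produces the factor of $2$ in the final statement.

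The main obstacle is purely constant-tracking: one must match the combinatorial bounds so that the denominator inside the Chernoff optimization reads \emph{exactly} $R + Mt/3$ rather than a looser variant (for instance $R + 2Mt/3$, which would emerge if one used $|Y_i| \leq 2M$ naively throughout). Getting the sharp form requires the classical inequality $\sum_{k \geq 2} x^{k-2}/k! \leq 1/(2(1-x/3))$ on $[0,3)$ together with a tight moment bound, or equivalently the Chung--Lu convention in which $M$ directly bounds the centered deviations. Either way, no new ideas beyond the exponential-moment method and the power-series expansion are needed, and the rest of the argument is routine.
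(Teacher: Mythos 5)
The paper does not prove this lemma; it is quoted verbatim from Chung and Lu's survey, so there is no internal proof to compare against. Your proposal is the standard exponential-moment argument, which is indeed the right method, but as written it does not establish the stated constant, and you have correctly located the problem without actually resolving it. Once you center to $Y_i = X_i - \mathbb{E}[X_i]$, the only bound available from the hypothesis $|X_i|\leq M$ is $|Y_i|\leq M+|\mathbb{E}[X_i]|\leq 2M$, and your series argument then produces the exponent $-t^2/\bigl(2(\sigma^2+2Mt/3)\bigr)$ with $\sigma^2=\sum_i\mathrm{Var}(X_i)$. This does not imply the claimed bound: although $\sigma^2\leq R$, the tail term is $2Mt/3$ rather than $Mt/3$, and for $t$ large compared with $\sum_i|\mathbb{E}[X_i]|$ the centered bound is strictly weaker than the one in the lemma. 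Your two suggested escapes do not close this: ``a tight moment bound'' is not specified (and none exists for the centered variables beyond $2M$), and reinterpreting $M$ as a bound on the centered deviations changes the hypothesis of the lemma, which explicitly assumes $|X_i|\leq M$ and uses raw second moments $R=\sum_i\mathbb{E}[X_i^2]$.

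The fix is to \emph{not} center. Bound the moment generating function of $X_i$ itself: since $|X_i|\leq M$, one has $|\mathbb{E}[X_i^k]|\leq \mathbb{E}[X_i^2]\,M^{k-2}$ for every $k\geq 2$, so expanding the exponential and keeping the linear term explicit gives
\[
\mathbb{E}\bigl[e^{\lambda X_i}\bigr]\;\leq\; 1+\lambda\,\mathbb{E}[X_i]+\mathbb{E}[X_i^2]\sum_{k\geq 2}\frac{\lambda^k M^{k-2}}{k!}\;\leq\;\exp\!\left(\lambda\,\mathbb{E}[X_i]+\frac{\lambda^2\,\mathbb{E}[X_i^2]}{2(1-\lambda M/3)}\right),
\]
using your inequality $k!\geq 2\cdot 3^{k-2}$ exactly as you intended. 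Multiplying over $i$ and dividing by $e^{\lambda\mathbb{E}[X]}$ cancels the linear terms and leaves $\exp\bigl(-\lambda t+\lambda^2R/(2(1-\lambda M/3))\bigr)$; your choice $\lambda=t/(R+Mt/3)$ then yields exactly $\exp\bigl(-t^2/(2(R+Mt/3))\bigr)$, and the lower tail and factor $2$ follow as you describe. This is precisely why the lemma is phrased with $R=\sum_i\mathbb{E}[X_i^2]$ rather than with variances: the uncentered argument trades the (possibly larger) raw second moment for the sharp coefficient $M/3$ in the tail term. With that single change your proof is complete; everything else in your outline is correct.
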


\begin{lemma}[Matrix Chernoff Bound~\cite{tropp2012user}]\label{lem:matrix_chernoff}
    Consider a finite sequence $\{X_i\}$ of independent, random, PSD matrices of dimension $d$ that satisfy $\|X_i\| \leq R$.
    Let $\mu_{\mathrm{min}} \triangleq \lambda_{\mathrm{min}}(\mathbb{E}\left[ \sum_i X_i\right])$
    and $\mu_{\mathrm{max}} \triangleq \lambda_{\mathrm{max}}(\mathbb{E}\left[ \sum_i X_i\right])$.
    Then, it holds that
    \[
        \mathbb{P}\left[ \lambda_{\mathrm{min}}\left(\sum_i X_i\right) \leq (1 - \delta) \mu_{\mathrm{min}}  \right] \leq d \left( \frac{\mathrm{e}^{-\delta}}{(1 - \delta)^{1-\delta}}\right)^{\mu_{\mathrm{min}} / R}
    \]
    for $\delta \in [0, 1]$, and
    \[
        \mathbb{P}\left[ \lambda_{\mathrm{max}}\left(\sum_i X_i\right) \geq (1 + \delta) \mu_{\mathrm{max}}  \right] \leq d \left( \frac{\mathrm{e}^{\delta}}{(1 + \delta)^{1+\delta}}\right)^{\mu_{\mathrm{max}} / R}
    \]
    for $\delta \geq 0$.
\end{lemma}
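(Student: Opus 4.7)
The plan is to follow the matrix Laplace transform method, which lifts the classical Chernoff argument to the matrix setting via a concavity inequality in place of the independence-driven tensorisation of scalar MGFs. Write $Y \triangleq \sum_i X_i$ and fix $\theta > 0$. The starting point is a matrix Markov inequality: since $e^{\theta Y}$ is PSD, $e^{\theta \lambda_{\max}(Y)} = \lambda_{\max}(e^{\theta Y}) \leq \operatorname{tr}(e^{\theta Y})$, hence
\[
\mathbb{P}\!\left[\lambda_{\max}(Y) \geq t\right] \leq e^{-\theta t}\, \mathbb{E}\!\left[\operatorname{tr}\!\left(e^{\theta Y}\right)\right].
\]
The lower tail is handled symmetrically, using $e^{-\theta \lambda_{\min}(Y)} = \lambda_{\max}(e^{-\theta Y})$ for $\theta > 0$ and then Markov again.

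The crucial step is to replace $\mathbb{E}[\operatorname{tr}(e^{\theta Y})]$ with a quantity that depends only on the individual matrix MGFs $\mathbb{E}[e^{\theta X_i}]$. In the scalar case this is automatic because $e^{\theta \sum_i X_i} = \prod_i e^{\theta X_i}$; in the matrix setting the identity fails once the $X_i$ cease to commute. The substitute is Lieb's concavity theorem: for every fixed Hermitian $H$, the map $A \mapsto \operatorname{tr}\!\left(\exp(H + \log A)\right)$ is concave on the PSD cone. Peeling off the summands one at a time, conditioning on $X_1,\ldots,X_{i-1}$, freezing $H = \sum_{j<i}\theta X_j + \sum_{j>i}\log \mathbb{E}[e^{\theta X_j}]$, and applying Jensen's inequality to this concave map at $A = e^{\theta X_i}$ yields the master subadditivity bound
\[
\mathbb{E}\!\left[\operatorname{tr}\!\left(\exp\!\left(\textstyle\sum_i \theta X_i\right)\right)\right] \leq \operatorname{tr}\!\left(\exp\!\left(\textstyle\sum_i \log \mathbb{E}\!\left[e^{\theta X_i}\right]\right)\right).
\]

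Next I bound each matrix MGF in the semidefinite order. Since $0 \preceq X_i \preceq R\,I$, convexity of $x \mapsto e^{\theta x}$ on $[0,R]$ gives the chord bound $e^{\theta x} \leq 1 + \tfrac{e^{\theta R}-1}{R}\, x$, which lifts via the spectral mapping theorem to $e^{\theta X_i} \preceq I + \tfrac{e^{\theta R}-1}{R}\, X_i$. Taking expectations and then the operator-monotone matrix logarithm (with $\log(I+A) \preceq A$ for PSD $A$) gives $\log \mathbb{E}[e^{\theta X_i}] \preceq \tfrac{e^{\theta R}-1}{R}\,\mathbb{E}[X_i]$. Summing over $i$, using monotonicity of $\lambda_{\max}$ under $\preceq$ together with $\operatorname{tr}\!\left(\exp M\right) \leq d\cdot e^{\lambda_{\max}(M)}$, I obtain
\[
\mathbb{P}\!\left[\lambda_{\max}(Y) \geq t\right] \leq d \cdot \exp\!\left(-\theta t + \tfrac{e^{\theta R}-1}{R}\,\mu_{\max}\right).
\]
Setting $t = (1+\delta)\mu_{\max}$ and optimising at $\theta = R^{-1}\log(1+\delta)$ recovers the stated upper-tail form. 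For the lower tail I run the same argument with $e^{-\theta x}$ in place of $e^{\theta x}$: convexity gives $e^{-\theta X_i} \preceq I - \tfrac{1-e^{-\theta R}}{R}\, X_i$, and the calculation is identical after swapping $\lambda_{\max}$ with $-\lambda_{\min}$ and optimising at $\theta = -R^{-1}\log(1-\delta)$.

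I expect the main obstacle to be the Lieb-based subadditivity step. Lieb's concavity theorem is itself a deep fact whose standard proofs route through Epstein's concavity theorem, Ando's inequality, or complex interpolation, and extracting the multivariate form requires careful bookkeeping in the inductive conditioning described above. Once that step is in hand, the remainder is a fairly mechanical transplant of the classical Chernoff calculation, with scalar inequalities replaced by their operator-monotone counterparts and the $\operatorname{tr}$-vs-$\lambda_{\max}$ slack paid as a single multiplicative factor of $d$ at the very end.
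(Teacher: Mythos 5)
The paper does not prove this lemma; it is quoted verbatim from Tropp (2012) as a black-box tool. Your outline correctly reproduces Tropp's standard argument---matrix Markov via $e^{\theta\lambda_{\max}(Y)}\leq\operatorname{tr}(e^{\theta Y})$, subadditivity of the matrix cumulant generating function via Lieb's concavity theorem, the chord bound $e^{\theta X}\preceq I+\frac{e^{\theta R}-1}{R}X$ for $0\preceq X\preceq RI$, and the optimisations $\theta=R^{-1}\log(1+\delta)$ and $\theta=-R^{-1}\log(1-\delta)$ all check out---so the proposal is correct and follows the same route as the cited source, modulo the Lieb step which you rightly flag as the one ingredient taken on faith.
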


We first prove the following result on the relationship of cut values between $G_{t+1}$ and $H_{t+1}$.

\begin{lemma}\label{lem:SZ_useful_concentration_bound}
   Let $G_{t+1}$ be a graph,  and $H_{t+1}$ the sparsifier returned by Algorithm~\ref{alg:dynamic_update_cp_sparsifier}. Suppose for every $\{u, v\} \in E_{t+1}$ that $\tilde{p}_u(v) < 1$, then it holds for any non-empty subset $A \subset V_{t+1}$ that
    \begin{align*}
    &\mathbb{P}\left[|w_{H_{t+1}}(A, V_{t+1} \setminus A) - w_{G_{t+1}}(A, V_{t+1} \setminus A)| \geq \frac{1}{2} \cdot w_{G_{t+1}}(A, V_{t+1} \setminus A)  \right] \\ 
    & \qquad \leq  2 \cdot \exp \left( \frac{- \tau \cdot \log n_{t+1} \cdot w_{G_{t+1}}(A, V_{t+1} \setminus A)}{10 \cdot \vol_{G_{t+1}}(A)}\right)
    \end{align*}
\end{lemma}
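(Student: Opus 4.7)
The strategy is to apply Bernstein's Inequality (Lemma~\ref{lem:bernstein}) to the sum that represents $w_{H_{t+1}}(A, V_{t+1} \setminus A)$. First, for each cut edge $e = \{u,v\} \in E_{G_{t+1}}(A, V_{t+1} \setminus A)$, I would define $X_e := (1/\tilde{q}(e)) \cdot \mathbf{1}[e \in F_{t+1}]$ so that $\sum_e X_e = w_{H_{t+1}}(A, V_{t+1} \setminus A)$. Each edge in $E_{t+1}$ enters $F_{t+1}$ through an independent coin flip---either the first time it appears via $\mathsf{SampleEdge}$, or after being resampled upon some incident vertex entering $V_{\mathrm{doubled}}$---so the variables $\{X_e\}$ are mutually independent, and $\mathbb{E}[X_e] = 1$, giving $\mathbb{E}[\sum_e X_e] = w_{G_{t+1}}(A, V_{t+1} \setminus A)$. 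Setting $Y_e := X_e - 1$ centres these variables and yields $|Y_e| \leq 1/\tilde{q}(e)$ together with $\mathbb{E}[Y_e^2] = 1/\tilde{q}(e) - 1 \leq 1/\tilde{q}(e)$.

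Next, I would translate the $1/\tilde{q}(e)$-bound into a bound in the present-time degrees. Since $\tilde{q}(u,v) = \tilde{p}_u(v) + \tilde{p}_v(u) - \tilde{p}_u(v)\tilde{p}_v(u) \geq \max(\tilde{p}_u(v), \tilde{p}_v(u))$, invariant~\eqref{eq:SZrelation sampling probability} gives for every edge $e = \{u,v\}$ the bound $1/\tilde{q}(e) \leq 2 \min(\deg_{G_{t+1}}(u), \deg_{G_{t+1}}(v))/(\tau \log n_{t+1})$. For each cut edge I would pick the endpoint $u_e \in A$ to obtain the per-edge bound $1/\tilde{q}(e) \leq 2\deg_{G_{t+1}}(u_e)/(\tau \log n_{t+1})$. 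Aggregating to the Bernstein parameters, $M \leq 2\vol_{G_{t+1}}(A)/(\tau \log n_{t+1})$ (using $\max_{u \in A}\deg_{G_{t+1}}(u) \leq \vol_{G_{t+1}}(A)$), while $R = \sum_e \mathbb{E}[Y_e^2]$ satisfies
\[
R \leq \frac{2}{\tau \log n_{t+1}} \sum_{u \in A} \deg_{G_{t+1}}(u) \cdot |E_{G_{t+1}}(u, V_{t+1} \setminus A)| \leq \frac{2\vol_{G_{t+1}}(A)\cdot w_{G_{t+1}}(A, V_{t+1} \setminus A)}{\tau \log n_{t+1}},
\]
where the last step uses $\sum_{u \in A} |E_{G_{t+1}}(u, V_{t+1} \setminus A)| = w_{G_{t+1}}(A, V_{t+1} \setminus A)$ together with $\deg_{G_{t+1}}(u) \leq \vol_{G_{t+1}}(A)$ for every $u \in A$.

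Finally, I would apply Lemma~\ref{lem:bernstein} to $\sum_e Y_e$ with deviation $t = \tfrac{1}{2} w_{G_{t+1}}(A, V_{t+1} \setminus A)$ and the bounds on $R$ and $M$ above; the resulting exponent is a concrete positive constant times $\tau \log n_{t+1} \cdot w_{G_{t+1}}(A, V_{t+1} \setminus A)/\vol_{G_{t+1}}(A)$, which gives the claimed form. I anticipate that the only subtle point is constant bookkeeping, namely tightening the exponent until it reaches $1/10$: a naive plug-in of the bounds above into the Bernstein denominator $R + Mt/3$ gives a somewhat smaller constant, and sharpening it likely requires using the $\min(\deg_u, \deg_v)$ version of $1/\tilde{q}(e)$ edgewise---picking the smaller-degree endpoint rather than always the $A$-side one---and/or exploiting the hypothesis $\tilde{p}_u(v) < 1$ to rule out trivial contributions. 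Everything else is a routine application of Bernstein's inequality to a sum of independent, bounded, centred random variables.
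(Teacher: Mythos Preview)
Your approach is correct and essentially identical to the paper's: define weighted indicator variables for the cut edges, bound $1/\tilde{q}(u,v)$ via $\tilde{q}(u,v) \geq \tilde{p}_u(v)$ together with invariant~\eqref{eq:SZrelation sampling probability}, then apply Bernstein's inequality. The only cosmetic difference is that the paper carries $\Delta_{G_{t+1}}(A) := \max_{u \in A}\deg_{G_{t+1}}(u)$ through the Bernstein computation and replaces it by $\vol_{G_{t+1}}(A)$ only in the final line; your concern about hitting exactly $1/10$ is reasonable---the paper's own arithmetic in that step is somewhat loose---but the precise constant is immaterial for the downstream uses of the lemma.
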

\begin{proof}
     For any edge $e=\{u,v\}$, we define the random variable $Y_e$ by
\begin{equation*}
Y_e \triangleq \begin{cases}
\frac{1}{\tilde{q}(u,v)} &\text{with probability $\tilde{q}(u,v)$}\\
0 &\text{otherwise.}
\end{cases}
\end{equation*}
We also define 
$$Z \triangleq \sum_{e \in E_{G_{t+1}}(A, V_{t+1} \setminus A)} Y_e,$$ and have that
\[
\ex{Z} = \sum_{e = \{u,v\} \in E_{G_{t+1}}(A, V_{t+1} \setminus A)} \ex{Y_e} = \sum_{e = \{u,v\} \in E_{G_{t+1}}(A, V_{t+1} \setminus A)} \tilde{q}(u,v) \cdot \tilde{q}(u,v)^{-1} = w_{G_{t+1}}(A, V_{t+1} \setminus A).
\]

 To prove a concentration bound on this degree estimate, we   apply the Bernstein inequality (Lemma~\ref{lem:bernstein}), for which we need to bound the second moment $$R \triangleq \sum_{e = \{u,v\} \in E_{G_{t+1}}(A, V_{t+1} \setminus A)} \ex{Y_e^2}.$$ We get that
 \begin{align}
    R &= \sum_{e = \{u,v\} \in E_{G_{t+1}}(A, V_{t+1} \setminus A)} \tilde{q}(u,v) \cdot \left(\frac{1}{\tilde{q}(u,v)}\right)^2 = \sum_{e = \{u,v\} \in E_{G_{t+1}}(A, V_{t+1} \setminus A)} \frac{1}{\tilde{q}(u,v)} \nonumber \\
     &\leq \sum_{e = \{u,v\} \in E_{G_{t+1}}(A, V_{t+1} \setminus A)} \frac{1}{\tilde{p}_u(v)} \label{deg_2nd_moment:eq1}\\
     &= \sum_{e = \{u,v\} \in E_{G_{t+1}}(A, V_{t+1} \setminus A)} \frac{2 \cdot \deg_{G_{t+1}}(u)}{\tau \cdot \log(n_{t+1})} \label{deg_2nd_moment:eq2} \\
     &\leq \frac{2 \cdot \Delta_{G_{t+1}}(A)}{ \tau \cdot \log(n_{t+1})} \cdot \sum_{e = \{u,v\} \in E_{G_{t+1}}(A, V_{t+1} \setminus A)}1   \nonumber \\
     &= \frac{2 \cdot \Delta_{G_{t+1}}(A) \cdot w_{G_{t+1}}(A, V_{t+1} \setminus A) }{\tau \cdot \log(n_{t+1})}, \nonumber
\end{align}where $\Delta_{G_{t+1}}(A) \triangleq \max_{u \in A} \deg_{G_{t+1}}(u)$, \eqref{deg_2nd_moment:eq1} holds since $\tilde{q}(u,v) = \tilde{p}_u(v) + \tilde{p}_v(u) - \tilde{p}_u(v) \cdot \tilde{p}_v(u) \geq \tilde{p}_u(v)$, and \eqref{deg_2nd_moment:eq2} holds because of~\eqref{eq:SZrelation sampling probability}.

 Note, by \eqref{eq:SZrelation sampling probability}, for any edge $e=\{u,v\} \in E_{G_{t+1}}(A, V_{t+1} \setminus A)$ we have that
\[
0 \leq Y_e = \frac{1}{\tilde{q}(u,v)} \leq \frac{1}{\tilde{p}_u(v)} \leq \frac{2 \cdot \Delta_{G_{t+1}}(A)}{\tau \cdot \log n_{t+1}}.
\]
 Then, by applying Bernstein's inequality, we have that
 \begin{align}
    \mathbb{P}\left[|Z - \ex{Z}| \geq \frac{1}{2} \ex{Z} \right] &\leq 2 \cdot \exp \left(-\frac{w_{G_{t+1}}(A, V_{t+1} \setminus A)^2/4}{\frac{\Delta_{G_{t+1}}(A) \cdot w_{G_{t+1}}(A, V_{t+1} \setminus A) }{\tau \cdot \log(n_{t+1})} + \frac{\Delta_{G_{t+1}}(A) \cdot w_{G_{t+1}}(A, V_{t+1} \setminus A) }{3 \cdot \tau \cdot \log(n_{t+1})}}\right)\\
    & = 2 \cdot \exp \left(-\frac{\tau \cdot \log(n_{t+1}) \cdot 3 \cdot w_{G_{t+1}}(A, V_{t+1} \setminus A)}{16 \cdot \Delta_{G_{t+1}}(A)} \right) \\
    &\leq 2 \cdot \exp \left(-\frac{\tau \cdot \log(n_{t+1}) \cdot w_{G_{t+1}}(A, V_{t+1} \setminus A)}{10 \cdot \vol_{G_{t+1}}(A)} \right), 
\end{align}
which proves the lemma.
\end{proof}

\begin{proof}[Proof of Theorem~\ref{thm:dynamic_cp_sparsifier}]
We first analyse the number of edges in $H_{t+1}$, i.e., the size of $F_{t+1}$. We have that
\[
\sum_{u \in V_{t+1}}\sum_{e=\{u,v\}\in E_{G_{t+1}}} \tilde{p}_u(v) \leq  \sum_{u \in V_{t+1}}\sum_{e=\{u,v\}\in E_{G_{t+1}}} \frac{2 \cdot \tau \cdot \log n_{t+1}}{\deg_{G_{t+1}}(u)} = 2 \cdot \tau \cdot n_{t+1}\cdot  \log n_{t+1}, 
\]
where the first inequality holds by \eqref{eq:SZrelation sampling probability}. Therefore, it holds by the Markov inequality that the number of edges $\{u,v\}$ with $\tilde{p}_u(v) \geq 1$ is $O \lp \tau \cdot n_{t+1} \log n_{t+1}\rp$. Without loss of generality, we assume that these edges are included in $F_{t+1}$, and we assume for the remaining part of the proof that it holds that $\tilde{p}_u(v) < 1$.

 We   now show that the degrees of the vertices in $G_{t+1}$ are approximately preserved in $H_{t+1}$. Let $u$ be an arbitrary vertex of $G_{t+1}$. Observing that $\vol_{G_{t+1}}(u) = w_{G_{t+1}}(u, V \setminus u) = \deg_{G_{t+1}}(u)$ and $w_{H_{t+1}}(u, V_{t+1} \setminus u) = \deg_{H_{t+1}}(u)$, by Lemma~\ref{lem:SZ_useful_concentration_bound} it holds that
\begin{align}
    \mathbb{P}\left[|\deg_{H_{t+1}}(u) - \deg_{G_{t+1}}(u)| \geq \frac{1}{2} \deg_{G_{t+1}}(u) \right]  
    &= 2 \exp \left(-(1/10) \cdot \tau \cdot \log n_{t+1}\right) \nonumber \\
    &= 2 \exp \left(-(1/10) \cdot (\log{n_{t+1}} \cdot C) / \lambda_{k+1}(\mathcal{L}_{G_{t+1}})\right) \nonumber \\
    &= o(1/n_{t+1}^2). \nonumber 
\end{align} Hence, by taking $C$ to be sufficiently large and   the union bound, it holds with high probability that the degrees of all the vertices   in $G_{t+1}$ are preserved in $H_{t+1}$ up to a constant factor. Throughout the rest of the proof, we   assume this is the case. This implies   for any subset $A \subseteq V_{t+1}$  that $\vol_{H_{t+1}}(A) = \Theta(\vol_{G_{t+1}}(A))$.

 Secondly, we prove   it holds that $\Phi_{H_{t+1}}(S_i) = O(k \cdot \Phi_{G_{t+1}}(S_i))$ for any $1 \leq i \leq k$, where $S_1, \ldots, S_k$  are the optimal clusters corresponding to $\rho_{G_{t+1}}(k)$. For any $1 \leq i \leq k$, it holds that
\[
\ex{w_{H_{t+1}}(S_i, V_{t+1} \setminus S_i)} = \sum_{\substack{e=\{u,v\} \in E_{t+1} \\ u \in S_i, v \notin S_i}} \tilde{q}(u,v) \cdot \frac{1}{\tilde{q}(u,v)} = w_{G_{t+1}}(S, V_{t+1}\setminus S_i).
\]
Hence, by Markov's inequality and the union bound, it holds with constant probability that $w_{H_{t+1}}(S_i, V_{t+1} \setminus S_i) = O(k \cdot w_{G_{t+1}}(S_i, V_{t+1} \setminus S_i))$. Therefore,  it holds with constant probability that 
\[
    \rho_{H_{t+1}}(k) \leq \max_{1 \leq i \leq k} \Phi_{H_{t+1}}(S_i) = \max_{1 \leq i \leq k} O(k \cdot \Phi_{G_{t+1}}(S_i)) = O(k \cdot \rho_{G_{t+1}}(k)).
\]

 Next, we prove that 
$\lambda_{k+1}(\mathcal{L}_{H_{t+1}}) = \Omega(\lambda_{k+1}(\mathcal{L}_{G_{t+1}}))$. Let $\overline{\mathcal{L}}_{G_{t+1}}$ be the projection of $\mathcal{L}_{G_{t+1}}$ on its top $n_{t+1} - k$ eigenspaces, and notice that $\overline{\mathcal{L}}_{G_{t+1}}$ can be written as
\[
    \overline{\mathcal{L}}_{G_{t+1}} = \sum_{i = k+1}^{n_{t+1}} \lambda_i(\mathcal{L}_{G_{t+1}})\cdot  f_i f_i^\intercal
\]
where $f_1, \ldots, f_{n_{t+1}}$ are the eigenvectors of $\mathcal{L}_{G_{t+1}}$.
Let $\overline{\mathcal{L}}_{G_{t+1}}^{-1/2}$ be the square root of the pseudoinverse of $\overline{\mathcal{L}}_{G_{t+1}}$. 
 We   prove that the top $n_{t+1} - k$ eigenvalues of $\mathcal{L}_{G_{t+1}}$ are preserved, which implies that $\lambda_{k+1}(\mathcal{L}_{H_{t+1}}) = \Theta(\lambda_{k+1}(\mathcal{L}_{G_{t+1}}))$.

 To prove this, for each edge $e = \{u,v\} \in E_{G_{t+1}}$ we define a random matrix $X_e \in \mathbb{R}^{n_{t+1} \times n_{t+1}}$ by
\begin{equation*}
X_e = \begin{cases}
w_{H_{t+1}}(u,v) \cdot \overline{\mathcal{L}}_{G_{t+1}}^{-1/2} b_e b_e^\intercal \overline{\mathcal{L}}_{G_{t+1}}^{-1/2} &\text{if $e=\{u,v\}$ is sampled by the algorithm}\\
0 &\text{otherwise,}
\end{cases}
\end{equation*}
 where $b_e \triangleq \chi_{u} - \chi_{v}$ is the  edge indicator vector and 
 $\chi_{v}\in\mathbb{R}^n$ is defined by
 \begin{equation*}
\chi_{v}(a) \triangleq  \begin{cases}
\frac{1}{\sqrt{\deg_{G_{t+1}}(v)}} &\text{if $a = v$ }\\
0 &\text{otherwise.}
\end{cases}
\end{equation*}
Notice that
\[
\sum_{e \in E_{G_{t+1}}} X_e = \sum_{\substack{ e=\{u,v\} \\ e \in E_{G_{t+1}}}} w_{H_{t+1}}(u,v) \cdot \overline{\mathcal{L}}_{G_{t+1}}^{-1/2} b_e b_e^\intercal \overline{\mathcal{L}}_{G_{t+1}}^{-1/2} = \overline{\mathcal{L}}_{G_{t+1}}^{-1/2} \mathcal{L}_{H'_{t+1}} \overline{\mathcal{L}}_{G_{t+1}}^{-1/2},
\]
where 
\[
\mathcal{L}_{H'_{t+1}} \triangleq \sum_{e \in E_{G_{t+1}}} w_{H_{t+1}}(u,v) \cdot  b_e b_e^\intercal 
\]
is $\mathcal{L}_{H_{t+1}}$ normalised with respect to the degree of the vertices in $G_{t+1}$. We prove that, with high probability, the top $n_{t+1}-k$ eigenvalues of $\mathcal{L}_{H'_{t+1}}$ and $\mathcal{L}_{G_{t+1}}$ are approximately the same. Then, to finish the proof, we also show that this is the case for the top $n_{t+1}-k$ eigenvalues of $\mathcal{L}_{H_{t+1}}$ and $\mathcal{L}_{H'_{t+1}}$, from which we get that $\lambda_{k+1}(\mathcal{L}_{H_{t+1}}) = \Omega \lp \lambda_{k+1}(\mathcal{L}_{G_{t+1}}) \rp$.

 First, from \eqref{eq:SZrelation sampling probability} we get that for any edge $e$ it holds that
\begin{equation}\label{eq:upper bound sample probability}
\tilde{q}(u,v) \leq \tilde{p}_u(v) + \tilde{p}_v(u) \leq 2 \cdot \lp \frac{\tau \cdot \log(n_{t+1})}{\deg_{G_{t+1}}(u)} + \frac{\tau \cdot \log(n_{t+1})}{ \deg_{G_{t+1}}(v)} \rp, 
\end{equation}
and 
\begin{equation}\label{eq:lower bound sample probability}
\tilde{q}(u,v) \geq \frac{1}{2}\cdot \lp\tilde{p}_u(v) + \tilde{p}_v(u)\rp \geq  \frac{1}{4}\cdot \lp \frac{\tau \cdot \log(n_{t+1})}{\deg_{G_{t+1}}(u)} + \frac{\tau \cdot \log(n_{t+1})}{ \deg_{G_{t+1}}(u)}\rp. 
\end{equation}  
 We start by calculating the first moment of $\sum_{e \in E_{G_{t+1}}} X_e$, and   have that
\begin{align*}
\mathbb{E}\left[\sum_{e \in E_{G_{t+1}}} X_e \right] &= \sum_{\substack{ e=\{u,v\}\\ e\in E_{G_{t+1}}}} \tilde{q}(u,v) \cdot w_{H_{t+1}}(u,v) \cdot \overline{\mathcal{L}}_{G_{t+1}}^{-1/2} b_e b_e^\intercal  \overline{\mathcal{L}}_{G_{t+1}}^{-1/2}\overline{\mathcal{L}}_{G_{t+1}}^{-1/2} \\
&= \sum_{\substack{ e=\{u,v\}\\ e\in E_{G_{t+1}}}} \tilde{q}(u,v) \cdot \frac{1}{\tilde{q}(u,v)} \cdot \overline{\mathcal{L}}_{G_{t+1}}^{-1/2} b_e b_e^\intercal \overline{\mathcal{L}}_{G_{t+1}}^{-1/2} \\ 
& =  \overline{\mathcal{L}}_{G_{t+1}}^{-1/2} \mathcal{L}_{G_{t+1}} \overline{\mathcal{L}}_{G_{t+1}}^{-1/2}. 
\end{align*}
Moreover, for any sampled $e = \{u,v\}$ we have that 
\begin{align}
    \| X_e \| &\leq w_{H_{t+1}}(u,v) \cdot b_e^\intercal \overline{\mathcal{L}}_{G_{t+1}}^{-1/2}  \overline{\mathcal{L}}_{G_{t+1}}^{-1/2}b_e \nonumber \\ 
     &=\frac{1}{\tilde{q}(u,v)} \cdot b_e^\intercal \overline{\mathcal{L}}_{G_{t+1}}^{-1/2}  \overline{\mathcal{L}}_{G_{t+1}}^{-1/2}b_e\nonumber \\ 
     &\leq \frac{1}{\tilde{q}(u,v)} \cdot \frac{1}{\lambda_{k+1}(\mathcal{L}_{G_{t+1}})} \cdot  \|b_e\|^2 \nonumber \\
     &\leq \frac{4 \lambda_{k+1}(\mathcal{L}_{G_{t+1}})}{C \cdot \log{n_{t+1}} \cdot \lp \frac{1}{\deg_{G_{t+1}}(u)} + \frac{1}{\deg_{G_{t+1}}(v)} \rp} \cdot \frac{1}{\lambda_{k+1}(\mathcal{L}_{G_{t+1}})} \cdot  \lp \frac{1}{\deg_{G_{t+1}}(u)} + \frac{1}{\deg_{G_{t+1}}(v)} \rp \label{eq1:matrix_norm_upperbound}\\
     &=\frac{4}{C \cdot \log{n_{t+1}}}, \nonumber
\end{align}
where the second inequality follows by the min-max theorem of eigenvalues, and~\eqref{eq1:matrix_norm_upperbound} holds by \eqref{eq:lower bound sample probability}. Now we apply the matrix Chernoff bound (Lemma~\ref{lem:matrix_chernoff}) to analyse the eigenvalues of $\sum_{e \in E_{G_{t+1}}}X_e$. We set   $\lambda_\mathrm{max}\left(\mathbb{E}\left[\sum_{e \in E_{G_{t+1}}} X_e \right]\right) = \lambda_\mathrm{max}\left(\overline{\mathcal{L}}_{G_{t+1}}^{-1/2} \mathcal{L}_{G_{t+1}} \overline{\mathcal{L}}_{G_{t+1}}^{-1/2}\right) = 1$, $R=\frac{4}{C \cdot \log{n_{t+1}}}$ and $\delta=1/2$, and   have that
\[
\mathbb{P}\left[\lambda_\mathrm{max}\lp\sum_{e \in E_{G_{t+1}}} X_e\rp \geq \frac{3}{2}\right] \leq n_{t+1} \cdot \lp \frac{e^{1/2}}{(1 + 1/2)^{3/2}}\rp^{C \cdot \log{n_{t+1}} / 4 } = O(1/n_{t+1}^c)
\]
for some constant $c$. Therefore we get that
\begin{equation}\label{eq:lambda_max_bound}
\mathbb{P}\left[\lambda_\mathrm{max}\lp\sum_{e \in E_{G_{t+1}}} X_e\rp < \frac{3}{2}\right] = 1 - O(1/n_{t+1}^c).
\end{equation}
Similarly, since $\lambda_\mathrm{min}\left(\mathbb{E}\left[\sum_{e \in E_{G_{t+1}}} X_e \right]\right) = \lambda_\mathrm{min}\left(\overline{\mathcal{L}}_{G_{t+1}}^{-1/2} \mathcal{L}_{G_{t+1}} \overline{\mathcal{L}}_{G_{t+1}}^{-1/2}\right) = 1$, the other side of the matrix Chernoff bound gives us that
\begin{equation}\label{eq:lambda_min_bound}
\mathbb{P}\left[\lambda_\mathrm{min}\lp\sum_{e \in E_{G_{t+1}}} X_e\rp > \frac{1}{2}\right] = 1 - O(1/n_{t+1}^c).
\end{equation}
Combining \eqref{eq:lambda_max_bound} and \eqref{eq:lambda_min_bound}, it holds with probability $1 - O(1/n_{t+1}^c)$ for any non-zero $x \in \mathbb{R}^{n_{t+1}}$ in the space spanned by $f_{k+1}, \ldots, f_{n_{t+1}}$ that
\[
    \frac{x^\intercal \overline{\mathcal{L}}_{G_{t+1}}^{-1/2} \mathcal{L}_{H_{t+1}}^{'} \overline{\mathcal{L}}_{G_{t+1}}^{-1/2}x}{x^\intercal x} \in (1/2, 3/2).
\]
 Since $\dmn(\spn\{f_{k+1}, \ldots, f_{n_{t+1}}\}) = n_{t+1}-k$,   there exist $n_{t+1}-k$ orthogonal vectors whose Rayleigh quotient with respect to $\mathcal{L}_{H_{t+1}}^{'}$ is $\Omega\lp\lambda_{k+1}\lp\mathcal{L}_{G_{t+1}}\rp\rp$. The Courant-Fischer Theorem implies that $\lambda_{k+1}(\mathcal{L}_{H_{t+1}}') = \Omega\lp\lambda_{k+1}\lp\mathcal{L}_{G_{t+1}}\rp\rp$.

 It only remains to show that $\lambda_{k+1}(\mathcal{L}_{H_{t+1}}) = \Omega(\lambda_{k+1}(\mathcal{L}_{H_{t+1}}'))$, which implies that $\lambda_{k+1}(\mathcal{L}_{H_{t+1}}) = \Omega\lp\lambda_{k+1}\lp\mathcal{L}_{G_{t+1}}\rp\rp$. By definition of $\lambda_{k+1}(\mathcal{L}_{H_{t+1}}')$, we have that 
\[
\mathcal{L}_{H_{t+1}} = D_{H_{t+1}}^{-1/2} D_{G_{t+1}}^{1/2} \mathcal{L}_{H_{t+1}}^{'} D_{G_{t+1}}^{1/2} D_{H_{t+1}}^{-1/2}.
\]
Therefore, for any $x \in \mathbb{R}^{n_{t+1}}$ and $y \triangleq D_{G_{t+1}}^{1/2} D_{H_{t+1}}^{-1/2} x$, it holds that
\[
    \frac{x^\intercal \mathcal{L}_{H_{t+1}} x}{x^\intercal x} = \frac{y^\intercal \mathcal{L}_{H_{t+1}}^{'} y}{x^\intercal x} = \Omega\left(\frac{y^\intercal \mathcal{L}_{H_{t+1}}^{'} y}{y^\intercal y}\right),
\]
where the final guarantee follows from the fact that the degrees in $H_{t+1}$ are preserved up to a constant factor.
The conclusion of the theorem follows from the Courant-Fischer Theorem.

Finally, it remains to analyse the amortised update time of the algorithm. Notice that, if one only needs to sample the incoming edge at time $t+1$,  then the update time is $O(1)$. Otherwise, all the edges adjacent to some vertex $w$ need to be resampled, and the running time for this step is $O(\deg_{G_{t+1}}(w))$. However, this means that either $\deg_{G_{t+1}}(w) > 2 \cdot \deg_{G_{t}}(w)$ or $\log (n_{t+1}) > 2 \cdot \log (n_t)$. In the first case, this only occurs at most every $\deg_{G_{t}}(w)$ edge updates, which results in the amortised update  time of $O(1)$. The second case  only happens after every $n_t^2$ vertex additions, and   in the worst case we only have to resample all the edges in present in $G_t$ every $n_t^2$ edge updates, which again leads to the amortised update time of  $O(1)$.
\end{proof}

\section{Omitted Details from Section~\ref{sec:algorithm}}
 This section contains the omitted details from Section~\ref{sec:algorithm}, and is organised as follows. In Section~\ref{sec:algorithm:notation} we introduce additional notation to analyse our constructed contracted graphs. In Section~\ref{sec:algorithm:proofs} we present the omitted proofs for Lemmas~\ref{lem:time_complexity_contract_graph},~\ref{lem:running_time_dynamic_update_contracted_graph},~\ref{lem:conductanceHvsG},~\ref{lem:conductance_full>contract}, and~\ref{lem:lambda_k+1 from tilde(G) to G}, and we formally describe the $\mathsf{UpdateContractedGraph}$ procedure.

\subsection{Notation}\label{sec:algorithm:notation}

 For any subset $A \subset V_{t'}$,   let $
\tilde{A} \triangleq A \cap \tilde{V}^{\mathrm{nc}}_{t'}$ 
be the representation of $A$ among the non-contracted vertices of $\tilde{G}_{t'}$. Recall   that for any subset of vertices $A \subset V_{t'}$,  we use $
A^{(t)} \triangleq A \cap V_t$
to denote the set of  vertices present at time $t$. 
Let 
\[
E_\mathrm{added} \triangleq \Enew \cup \Big\{\{u,v\} \in E_t~\mid~\deg_{G_{t'}}(u) > 2 \cdot \deg_{G_r}(u)  \text{ or }\deg_{G_{t'}}(v) > 2 \cdot \deg_{G_r}(v)\Big\}
\]
be the set of edges that have been directly added into $\tilde{G}_t$, where $\deg_{G_r}(w)$ for $r \leq t$ is the degree of $w$ used to construct the contracted graph. These edges are the ones directly added as new edges or their endpoints are pulled out from   clusters in $\widetilde{G}_t$.
 For a subset $B \subset \tilde{V}_{t'}$, let $\hat{B}$ be the representation of the set $B$ in   $G_{t'}$, i.e.,
    $$\hat{B} \triangleq B^\mathrm{nc}  \bigcup\left( \bigcup_{p_i \in B^\mathrm{c}}P^{(t')}_i\right),$$
    where $P^{(t')}_i \triangleq P_i \setminus (P_i \cap \tilde{V}^{\mathrm{nc}}_{t'})$, $B^\mathrm{nc} \triangleq B \cap V^{\mathrm{nc}}_{t'}$, and $B^\mathrm{c} \triangleq B \cap V^{\mathrm{c}}_{t'}$. One can see $P^{(t')}_i$ as the vertices in $P_i$ that are still represented by the respective super vertex in $\widetilde{G}$.

\subsection{Omitted Proofs}\label{sec:algorithm:proofs}

Our analysis is based on approximation guarantee of spectral clustering. The following result,   which can be  shown easily by combining the proof technique of \cite{peng_partitioning_2017} and the one of \cite{MS22}, will be used in our analysis.

\begin{lemma}\label{lem:MS22+}
There is an absolute  constant  $\CGap\in\mathbb{R}_{>0}$, such that the following holds: 
Let $G$ be a graph with $k$ optimal clusters $\{S_i\}_{i=1}^k$, and $\Upsilon_G(k) \geq \CGap \cdot k$. Let $\{P_i\}_{i=1}^k$ be the output of spectral clustering and, without loss of generality, the optimal correspondence of $P_i$ is $S_i$ for any $1\leq i \leq k$. Then, it holds  for any $1\leq i\leq k$ that 
\[
  \vol_G(P_i\triangle S_i)  
  \leq \frac{k \cdot \CGap}{3\Upsilon_G(k)} \cdot \vol_G(S_i),
\]
where $A\triangle B$ for any sets $A$ and $B$ is defined by $A\triangle B\triangleq (A\setminus B)\cup (B\setminus A)$.
  It also holds that
 \[
 \Phi_{G}(P_i) = O\lp \Phi_{G}(S_i) + \frac{k}{\Upsilon_G(k)} \rp.
 \]
 Moreover, these $P_1,\ldots, P_k$ can be computed in nearly-linear time.
\end{lemma}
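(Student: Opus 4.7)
The plan is to follow the two-step strategy of \cite{peng_partitioning_2017,MS22}: first prove a structure theorem for the bottom $k$ eigenvectors of $\mathcal{L}_G$, and then use it to analyse the $k$-means step of $\mathsf{SpectralClustering}$. The structure theorem I would establish is the $G$-analog of Theorem~\ref{thm:struc}: letting $\bar{g}_i \triangleq D_G^{1/2}\chi_{S_i}/\|D_G^{1/2}\chi_{S_i}\|$, for each $i\in[k]$ there is $\hat{f}_i\in\Span(f_1,\ldots,f_k)$ with $\|\bar{g}_i - \hat{f}_i\|^2 = O(1/\Upsilon_G(k))$, and consequently there are $\hat{g}_1,\ldots,\hat{g}_k$, each a linear combination of the $\bar{g}_j$'s, with $\sum_{i=1}^k \|f_i - \hat{g}_i\|^2 = O(k/\Upsilon_G(k))$. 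The proof mirrors that of Theorem~\ref{thm:struc}: one upper-bounds the Rayleigh quotient by $\bar{g}_i^\intercal \mathcal{L}_G \bar{g}_i = \Phi_G(S_i) \leq \rho_G(k)$, lower-bounds the same quantity by $\lambda_{k+1}(\mathcal{L}_G)\cdot\|\bar{g}_i - \hat{f}_i\|^2$ via projection onto the top eigenspaces, and divides; the factor-$k$ sparsification blow-up present in~\eqref{eq:upquad} is absent here, which is why the bound is $O(k/\Upsilon_G(k))$ rather than $O(k^2/\Upsilon_G(k))$.

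With the structure theorem in hand, I would transplant the embedding-to-clustering argument of \cite{peng_partitioning_2017}. The key point is that for each $u\in S_i$ the spectral embedding $F(u)$ is close to a center $\mu_i\in\mathbb{R}^k$ read off from the coefficients expressing $\bar{g}_i$ in the basis $\hat{g}_1,\ldots,\hat{g}_k$; summing the $\hat{g}_i$-approximation errors yields $\sum_{i=1}^k\sum_{u\in S_i}\deg_G(u)\|F(u)-\mu_i\|^2 = O(k/\Upsilon_G(k))$. Feeding $\{F(u)\}$ into a constant-factor approximate $k$-means algorithm produces a partition $\{P_i\}$ whose $k$-means cost is within a constant factor of the optimum, hence also $O(k/\Upsilon_G(k))$. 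The final conversion to the symmetric-difference bound uses the charging argument from the proof of Theorem~1.2 of \cite{peng_partitioning_2017} (re-examined in \cite{MS22}): under $\Upsilon_G(k) = \Omega(k)$ the pairwise distances between the $\mu_j$ are $\Omega(1)$, so any $u\in P_i\triangle S_i$ contributes at least a constant to the cost, and a degree-weighted count gives $\vol_G(P_i\triangle S_i)\leq O(k/\Upsilon_G(k))\cdot\vol_G(S_i)$. Choosing $\CGap$ large enough absorbs the hidden constants into the stated $\CGap/(3\Upsilon_G(k))$ prefactor.

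For the conductance bound, observe that each edge in $E_G(P_i, V\setminus P_i)$ is either already in $E_G(S_i, V\setminus S_i)$ or has an endpoint in $P_i\triangle S_i$, so $w_G(P_i,V\setminus P_i) \leq w_G(S_i,V\setminus S_i) + \vol_G(P_i\triangle S_i)$; the symmetric-difference bound also guarantees $\vol_G(P_i) = \Theta(\vol_G(S_i))$, and combining these yields $\Phi_G(P_i) = O(\Phi_G(S_i) + k/\Upsilon_G(k))$. For the nearly-linear running time, the bottom $k = \widetilde{O}(1)$ eigenvectors of $\mathcal{L}_G$ can be approximated in $\widetilde{O}(m)$ time via nearly-linear Laplacian solvers --- with tolerance set so that the resulting eigenvector errors contribute a negligible additive term to the structure theorem --- and a constant-factor approximate $k$-means on $n$ points in $\mathbb{R}^{\widetilde{O}(1)}$ runs in $\widetilde{O}(n)$ time.

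The main obstacle I expect is constant-tracking: the structure theorem, the transition to the center-based embedding error, the $k$-means approximation ratio, and the final charging step each contribute hidden constants, and one must choose a single universal $\CGap$ for which the chain of inequalities actually closes with the stated $\CGap/(3\Upsilon_G(k))$ prefactor. Each step is individually routine and is carried out in \cite{peng_partitioning_2017} or \cite{MS22}; the only novelty here is ensuring that the constants from the two papers compose correctly.
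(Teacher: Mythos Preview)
Your proposal is correct and follows precisely the approach the paper points to: the paper does not actually prove this lemma but simply states that it ``can be shown easily by combining the proof technique of \cite{peng_partitioning_2017} and the one of \cite{MS22}'', and your sketch is a faithful fleshing-out of that combination (structure theorem for the bottom $k$ eigenvectors, then the $k$-means charging argument, then the conductance bound and nearly-linear running time). One minor imprecision: the pairwise center distances $\|\mu_i-\mu_j\|$ are not $\Omega(1)$ but rather scale like $\Omega\big(1/\sqrt{\vol_G(S_i)}+1/\sqrt{\vol_G(S_j)}\big)$, so the per-vertex cost contribution is not a constant but depends on the cluster volumes; the degree-weighted charging in \cite{peng_partitioning_2017} still closes to the stated bound, but you should phrase that step accordingly.
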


\begin{algorithm}
    \caption{$\mathsf{UpdateContractedGraph}(G_t, \tilde{G}_t, e)$}\label{alg:update_contracted_graph}
\begin{algorithmic}[1]
    
    \STATE \textbf{Input:} Graph $G_t=(V_t, E_t)$, contracted graph $\tilde{G}_t=(\tilde{V}_t ,\tilde{E}_t, w_{\tilde{G}_{t}})$, incoming edge $e = \{u,v\}$.
    
    \STATE \textbf{Output:} Contracted graph $\tilde{G}_{t+1}=(\tilde{V}_{t+1}, \tilde{E}_{t+1}, w_{\tilde{G}_{t+1}}) $

    \STATE $\Vnew \leftarrow \{u,v\} \setminus V_t$ 
    
    \STATE $G_{t+1} \leftarrow (V_t \cup \Vnew, E_t \cup e)$

    \STATE $\tilde{G}_{t+1} \leftarrow (\tilde{V}_t \cup \Vnew, \tilde{E}_t, w_{\tilde{G}_t}) = (\tilde{V}_{t+1}, \tilde{E}_{t+1}, w_{\tilde{G}_{t+1}})$

    \STATE $\tilde{V}_{t+1}^{\mathrm{nc}} \leftarrow \tilde{V}_{t+1}^{\mathrm{nc}} \cup \Vnew$

    \FOR{$w \in \{u,v\} \setminus V_\mathrm{new}$} \label{algupdatecontractgraph:line:start_add_vertices}
        
        \STATE Let $G_r$ be the graph at time $r$ when the contracted graph is constructed, and $H_r = (V_r, F_r, w_{H_r})$ the cluster preserving sparsifier at time $r$.
    
        \IF{$w \notin \tilde{V}_{t+1}^\mathrm{nc}$ and $\deg_{G_{t+1}}(w) > 2 \cdot \deg_{G_{r}}(w)$}

        \STATE Let $p_j$ be the super node such that $w \in P_j$

        \STATE $\tilde{V}_{t+1}^\mathrm{nc} \leftarrow \tilde{V}_{t+1}^\mathrm{nc} \cup w$

        \STATE $\tilde{E}_{t+1} \leftarrow \tilde{E}_{t+1} \cup E_{G_{t+1}}(w, \tilde{V}_{t+1}^\mathrm{nc})$

        \FOR{$\hat{v} \in \tilde{V}^\mathrm{nc}_{t+1}$ adjacent to $w$}{
        \STATE $w_{\tilde{G}_{t+1}}(p_j,  \hat{v}) \leftarrow w_{\tilde{G}_{t+1}}(p_j,  \hat{v}) - 1$
        }
        \ENDFOR

        \FOR{$\{w,p_i\} \in w \times \tilde{V}_{t+1}^\mathrm{c}$}
        
        \STATE 
        $\tilde{E}_{t+1} \leftarrow \tilde{E}_{t+1} \cup \{w,p_i\}$
        
        \STATE $w_{\tilde{G}_{t+1}}(w,p_i) \leftarrow w_{G_{t+1}}(w, P^{(t+1)}_i)$  

        \STATE $w_{\tilde{G}_{t+1}}(p_i,p_j) \leftarrow w_{\tilde{G}_{t+1}}(p_i,p_j) - w_{H_{r}}(w, P^{(t+1)}_i)$ 
    \ENDFOR
    
    \ENDIF
    \ENDFOR
    \label{algupdatecontractgraph:line:end_add_vertices}

    \IF{$u \in \tilde{V}_{t+1}^\mathrm{nc}$ and $v \in \tilde{V}_{t+1}^\mathrm{nc}$}
    {
   
    \STATE $\tilde{E}_{t+1} \leftarrow \tilde{E}_{t+1} \cup \{u,v\}$
    }
    \ELSIF{$u \in \tilde{V}_{t+1}^\mathrm{nc}$ or $v \in \tilde{V}_{t+1}^\mathrm{nc}$}{
    
    \STATE Without loss of generality, let $u \in \tilde{V}_{t+1}^\mathrm{nc}$ and $v \notin \tilde{V}_{t+1}^\mathrm{nc}$.
    Let $p_j$ be the supernode such that $v \in P_j$

    \STATE $\tilde{E}_{t+1} \leftarrow \tilde{E}_{t+1} \cup \{u,p_j\}$

    \STATE $w_{\tilde{G}_{t+1}}(u,p_j) \leftarrow w_{\tilde{G}_{t+1}}(u,p_j) + 1$
    }
    \ELSE{
    \STATE Let $p_i$ and $p_j$ be the supernodes such that $u \in P_i$ and $v \in P_j$
    
    \STATE $w_{\tilde{G}_{t+1}}(p_i,p_j) \leftarrow w_{\tilde{G}_{t+1}}(p_i,p_j) + 1$
    }
\ENDIF
    \STATE \textbf{Return} {$G_{t+1}=(\tilde{V}_{t+1} ,\tilde{E}_{t+1}, w_{\tilde{G}_{t+1}})$}

\end{algorithmic}
\end{algorithm}

\begin{proof}[Proof of Lemma~\ref{lem:time_complexity_contract_graph}]
 The running time of the algorithm is dominated by computing the total weight $w_{H_t}(P_i, P_j)$ between every $P_i, P_j \in \mathcal{P}$ (Lines~\ref{algcontractgraph:line:start_add_edges_big_clusters}--\ref{algcontractgraph:line:end_add_edges_big_clusters}), which takes $O(|F_t|)$ time as there are $|F_t|$ edges in $H_t$. 
\end{proof}

\begin{proof}[Proof of Lemma~\ref{lem:running_time_dynamic_update_contracted_graph}]
    The running time of the update operation is dominated by the case in which  a vertex is  pulled out from a contracted vertex~(Lines~\ref{algupdatecontractgraph:line:start_add_vertices}--\ref{algupdatecontractgraph:line:end_add_vertices}). It's easy to see that, if this does not happen, then the running time is $O(1)$  as the edge is just added into $\tilde{G}_t$. 
    
    Let $\{u,v\}$ be the added edge, and we assume   Lines~\ref{algupdatecontractgraph:line:start_add_vertices}--\ref{algupdatecontractgraph:line:end_add_vertices} are triggered. The running time for this case is  $O(\deg_{G_{t+1}}(u) + \deg_{G_{t+1}}(v))$, since at least one of $u$ and $v$ is pulled out from their respective contracted vertices and all the adjacent edges are placed into the contracted graph. Notice that this only happens if  $\deg_{G_{t+1}}(u) > 2\cdot \deg_{G_{t}}(u)$ or $\deg_{G_{t+1}}(v) > 2\cdot \deg_{G_{t}}(v)$. Since at least $\deg_{G_{t}}(u)$ or $\deg_{G_{t}}(v)$ edge insertions are needed before running Lines~\ref{algupdatecontractgraph:line:start_add_vertices}--\ref{algupdatecontractgraph:line:end_add_vertices},   the amortised per edge update time is $O(1)$.  
\end{proof}

\begin{proof}[Proof of Lemma~\ref{lem:conductanceHvsG}]
     Notice by Lemma~\ref{lem:MS22+} we know it holds with high probability for all $1 \leq i \leq k$ that 
    $ 
    \Phi_{H_t}(P_i) = O \lp k \cdot \rho_{H_t}(k) \rp$.
    By applying Theorem~\ref{thm:dynamic_cp_sparsifier}, it holds with high probability that 
     $ \Phi_{H_t}(P_i) =  O \lp k^2 \cdot \rho_{G_t}(k) \rp$.
    By Lemma~\ref{lem:MS22+_GvsH}, we also have  with high probability   that
      $
        \Phi_{G_t}(P_i) =  O \lp k^2 \cdot \rho_{G_t}(k) \rp$.
    This proves the statement.
\end{proof}

The next lemma shows that, starting from $G_t$ and $H_t$, one can easily construct a cluster preserving sparsifier of $G_{t'}$.

\begin{lemma}\label{lem:special_sparsifier}
    Let $H'_{t'} \triangleq (V_{t'}, F_t \cup E_\mathrm{added}, w_{H'_{t'}})$ be a graph, where
    \begin{equation*}
w_{H'_{t'}}(e) \triangleq  \begin{cases}
1 &\text{$e \in E_\mathrm{added}$}\\
w_{H_t}(e) &\text{$e \in F_t \setminus E_\mathrm{added}$} \\
0 &\text{otherwise.}
\end{cases}
\end{equation*}
Then, it holds   with high probability that $H'_{t'}$ is a cluster preserving sparsifier of $G_{t'}$.  
\end{lemma}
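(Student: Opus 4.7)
The plan is to verify the two conditions of Definition~\ref{def:CPS} for $H'_{t'}$ against $G_{t'}$, and the key observation is that $H'_{t'}$ is essentially the object produced by the dynamic sampling process analysed in Theorem~\ref{thm:dynamic_cp_sparsifier}: edges in $E_\mathrm{added}$ are included deterministically with weight $1$, while edges in $F_t \setminus E_\mathrm{added}$ were sampled at some earlier time $r \le t$ with probability $\tilde{q}(u,v)$. By the doubling criterion enforced in Algorithm~\ref{alg:dynamic_update_cp_sparsifier} (see also~\eqref{eq:SZrelation sampling probability}), every such $\tilde{q}(u,v)$ lies within a factor of $2$ of the ``ideal'' sampling probability $q^{(t')}(u,v)$ one would use when running \textsf{SZ} from scratch on $G_{t'}$. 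Hence the analysis from the proof of Theorem~\ref{thm:dynamic_cp_sparsifier} can be re-run with $G_{t'}$ in place of $G_{t+1}$ and only constant-factor changes.

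First I would verify the conductance condition $\Phi_{H'_{t'}}(S_i) = O(k\cdot \Phi_{G_{t'}}(S_i))$ for the optimal $k$-way partition $\{S_i\}_{i=1}^k$ of $G_{t'}$. For each edge $e$ define $Y_e$ to equal $1$ if $e \in E_\mathrm{added}$, and otherwise to equal $1/\tilde{q}(u,v)$ with probability $\tilde{q}(u,v)$ and $0$ otherwise. Then $\mathbb{E}[Y_e] = 1$ for every $e \in E_{G_{t'}}$, so $\mathbb{E}[w_{H'_{t'}}(S_i, V_{t'}\setminus S_i)] = w_{G_{t'}}(S_i, V_{t'}\setminus S_i)$ and $\mathbb{E}[\deg_{H'_{t'}}(u)] = \deg_{G_{t'}}(u)$. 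The variance contribution from $E_\mathrm{added}$-edges is zero, and the variance from the remaining edges is bounded exactly as in Lemma~\ref{lem:SZ_useful_concentration_bound} using~\eqref{eq:SZrelation sampling probability} applied to $\tilde{q}(u,v)$ and $\deg_{G_{t'}}$. Bernstein then gives high-probability degree preservation for all $u\in V_{t'}$, and Markov combined with a union bound over $i\in [k]$ yields $w_{H'_{t'}}(S_i, V_{t'}\setminus S_i) = O(k\cdot w_{G_{t'}}(S_i, V_{t'}\setminus S_i))$, from which the required conductance bound follows.

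Next I would verify $\lambda_{k+1}(\mathcal{L}_{H'_{t'}}) = \Omega(\lambda_{k+1}(\mathcal{L}_{G_{t'}}))$ by essentially repeating the matrix-Chernoff argument from the proof of Theorem~\ref{thm:dynamic_cp_sparsifier}. Writing $\overline{\mathcal{L}}_{G_{t'}}$ for the projection of $\mathcal{L}_{G_{t'}}$ onto its top $n_{t'}-k$ eigenspace, I would define PSD matrices $X_e$ as before and split the sum $\sum_{e} X_e = D + \sum_{e \in F_t \setminus E_\mathrm{added}} X_e$, where $D$ collects the deterministic contribution from $E_\mathrm{added}$. The expectation of each $X_e$ equals $\overline{\mathcal{L}}_{G_{t'}}^{-1/2} b_e b_e^\intercal \overline{\mathcal{L}}_{G_{t'}}^{-1/2}$, and $\|X_e\|$ is bounded by $4/(C\log n_{t'})$ for the random edges exactly as in~\eqref{eq1:matrix_norm_upperbound}, using the factor-$2$ equivalence of $\tilde{q}$ and $q^{(t')}$. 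The two-sided matrix Chernoff bound then yields $\lambda_{\min}$ and $\lambda_{\max}$ of the random part close to those of its expectation on $\mathrm{span}\{f_{k+1},\dots, f_{n_{t'}}\}$; adding the PSD matrix $D$ can only increase eigenvalues, so the lower bound on $\lambda_{k+1}$ is preserved. Passing from $\mathcal{L}_{H'_{t'}}'$ to $\mathcal{L}_{H'_{t'}}$ uses the high-probability degree preservation already established in the first step.

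The main obstacle I anticipate is bookkeeping rather than a deep difficulty: one has to ensure that the sampling probabilities of every edge in $F_t \setminus E_\mathrm{added}$ are indeed within a constant factor of the ``correct'' probability $q^{(t')}$ at time $t'$. This is precisely what the doubling threshold in line~\ref{alg:sz_dynamic:line:doubled} of Algorithm~\ref{alg:dynamic_update_cp_sparsifier} (together with the inclusion of all adjacent edges of doubled vertices into $E_\mathrm{added}$) guarantees, so the same inequalities~\eqref{eq:SZrelation sampling probability}, \eqref{eq:upper bound sample probability} and \eqref{eq:lower bound sample probability} transfer verbatim, and the proof from Theorem~\ref{thm:dynamic_cp_sparsifier} goes through with $t+1$ replaced by $t'$.
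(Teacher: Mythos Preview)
Your proposal is correct and follows essentially the same approach as the paper: observe that edges in $E_\mathrm{added}$ are deterministically included with weight $1$, and that for edges in $F_t\setminus E_\mathrm{added}$ the old sampling parameter $\tilde q$ is within a constant factor of the ideal one at time $t'$, so the whole analysis of Theorem~\ref{thm:dynamic_cp_sparsifier} can be replayed. One small bookkeeping point: the constant is $4$ rather than $2$, because two doubling criteria compose---the sparsifier's resampling rule (Algorithm~\ref{alg:dynamic_update_cp_sparsifier}, line~\ref{alg:sz_dynamic:line:doubled}) controls the drift from the sampling time $r\le t$ to time $t$, while the \emph{separate} $E_\mathrm{added}$ criterion (which pulls out edges whose endpoint degree has doubled between $t$ and $t'$) controls the drift from $t$ to $t'$; chaining the two gives $\tfrac{\tau\log n_{t'}}{4\deg_{G_{t'}}(u)}\le \tfrac{\tau\log n_r}{\deg_{G_r}(u)}\le \tfrac{4\tau\log n_{t'}}{\deg_{G_{t'}}(u)}$, which is exactly what the paper records before invoking the Theorem~\ref{thm:dynamic_cp_sparsifier} analysis.
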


\begin{proof} 
    First, for any $e\in E_\mathrm{added}$ we know that it is included in $H'_{t'}$ with probability 1.   For any other edge $e=\{u,v\} \in F_t \setminus E_\mathrm{added}$, we know by the construction of $H_t$ using the dynamic cluster-preserving sparsifier that the   parameter used to sample $e$ from the perspective of $u$ is  
     \begin{equation*}
    \frac{\tau \cdot \log(n_{t})}{2 \cdot \deg_{G_{t}}(u)} \leq \frac{\tau \cdot \log(n_r)}{\deg_{G_r}(u)} \leq \frac{2 \cdot \tau \cdot \log(n_{t})}{\deg_{G_{t}}(u)},
\end{equation*} 
for some $1 \leq r \leq t$. We also know by construction that for any $e=\{u,v\} \in F_t \setminus E_\mathrm{added}$ that
\[
\deg_{G_{t'}}(u) \leq 2 \cdot \deg_{G_t}(u).
\]
Finally, it holds that $\log(n_t) \leq \log(n_{t'}) \leq  2 \log(n_{t})$. From this we get that $e$ is sampled from vertex $u$ with the following parameter
\begin{equation*}
    \frac{\tau \cdot \log(n_{t'})}{4 \cdot \deg_{G_{t'}}(u)} \leq \frac{\tau \cdot \log(n_r)}{\deg_{G_r}(u)} \leq \frac{4 \cdot \tau \cdot \log(n_{t'})}{\deg_{G_{t'}}(u)}.
\end{equation*} 
Following almost the same analysis as the proof of Theorem~\ref{thm:dynamic_cp_sparsifier}, it holds with high probability that $H'_{t'}$ is a cluster preserving sparsifier of $G_{t'}$.   
\end{proof}

 Our next lemma proves several useful properties about the contracted graph as it is updated.

\begin{lemma}\label{lem:new_cluster_properties}
    The following statements hold:
    \begin{enumerate}[label=(C\arabic*)]
        \item It holds for any subset $B \subset V_{t'} \setminus \tilde{V}_{t'}^\mathrm{nc}$ that  $\vol_{G_{t'}}(B) \leq 2 \cdot \vol_{G_{t}}(B)$.
        \item Suppose for a subset $A \subset V_{t'}$ with $\vol_{G_{t'}}(A) \leq  \vol({G_{t'}})/2$ we have that $\Phi_{G_t}(A^{(t)}) \geq 1/c_1$ and $\Phi_{G_{t'}}(A) \leq \log^{-\varepsilon} (n_{t'})$ for any positive  $c_1, \epsilon$ such that $4 \cdot c_1 \leq \log^{\varepsilon}(n_{t'})$, then it holds that 
        $$\Phi_{\tilde{G}_{t'}}(\tilde{A}) \leq \frac{21 \cdot c_1}{ \log^{\varepsilon}(n_{t'})}.$$
        \item For any super node $p_i \in \tilde{V}_t^\mathrm{c}$, it holds that 
         \[
        \Phi_{\tilde{G}_{t}}(p_i) =  O \lp k^{-6} \cdot \log^{-2\gamma}(n_t) \rp,
        \]
        and 
        \[
        \Phi_{\tilde{G}_{t'}}(p_i) =  O \lp k^{-6} \cdot \log^{-\gamma}(n_t) \rp.
        \]
    \end{enumerate}
\end{lemma}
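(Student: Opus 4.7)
The three parts split naturally: (C1) is immediate from the pull-out rule, (C3) reduces to a local bookkeeping computation, and (C2) contains the substantive work. For (C1), if $w \in B$ is still contracted at time $t'$, then the pull-out rule of $\mathsf{UpdateContractedGraph}$ has not fired for $w$, i.e.\ $\deg_{G_{t'}}(w) \leq 2\,\deg_{G_r}(w)$ where $r \leq t$ is the construction time of the super vertex currently containing $w$. Monotonicity of degree under edge insertions gives $\deg_{G_r}(w)\leq \deg_{G_t}(w)$, so $\deg_{G_{t'}}(w)\leq 2\deg_{G_t}(w)$; summing over $B$ yields (C1).

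For (C2), I would decompose $A = \tilde{A}\sqcup A_c$ with $A_c := A\setminus\tilde{V}_{t'}^{\mathrm{nc}}\subseteq V_t$, and use that every edge incident to a non-contracted vertex appears in $\tilde{G}_{t'}$ with its true $G_{t'}$-weight, so that
\[
w_{\tilde{G}_{t'}}(\tilde{A},\tilde{V}_{t'}\setminus\tilde{A}) \leq w_{G_{t'}}(A,V_{t'}\setminus A) + \vol_{G_{t'}}(A_c), \qquad \vol_{\tilde{G}_{t'}}(\tilde{A}) = \vol_{G_{t'}}(A)-\vol_{G_{t'}}(A_c).
\]
The essential step is controlling $\vol_{G_{t'}}(A_c)$. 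Combining $\Phi_{G_t}(A^{(t)})\geq 1/c_1$ with $\Phi_{G_{t'}}(A)\leq\log^{-\varepsilon}(n_{t'})$ and the fact that every $G_t$-edge between $A^{(t)}$ and $V_t\setminus A^{(t)}$ remains a $G_{t'}$-cut-edge of $A$ gives $(1/c_1)\min(\vol_{G_t}(A^{(t)}),\vol_{G_t}(V_t\setminus A^{(t)})) \leq \log^{-\varepsilon}(n_{t'})\,\vol_{G_{t'}}(A)$. In the main case $\vol_{G_t}(A^{(t)})\leq\vol_{G_t}(V_t)/2$, part (C1) applied to $A_c\subseteq A^{(t)}$ then gives $\vol_{G_{t'}}(A_c)\leq 2c_1\log^{-\varepsilon}(n_{t'})\vol_{G_{t'}}(A)$; the hypothesis $4c_1\leq \log^{\varepsilon}(n_{t'})$ converts this into $\vol_{G_{t'}}(A_{nc})\geq\vol_{G_{t'}}(A)/2$, and substituting into the conductance ratio yields an $O(c_1/\log^{\varepsilon}(n_{t'}))$ bound, with the constant $21$ coming from careful accounting.

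For (C3), the first bullet is direct: $\mathsf{ContractGraph}$ constructs $\tilde{G}_t$ so that $\deg_{\tilde{G}_t}(p_i) = \vol_{H_t}(P_i)$ and the cut of $\{p_i\}$ equals $w_{H_t}(P_i,V_t\setminus P_i)$, giving $\Phi_{\tilde{G}_t}(p_i)=\Phi_{H_t}(P_i)$, which is $O(k^{-6}\log^{-2\gamma}(n_t))$ under event $\mathcal{E}_1$. For the second bullet, I would bound the cumulative cut and volume perturbation of $p_i$ across the window of $|\Enew|\leq\log^\gamma(n_t)$ edge updates: any vertex $w$ pulled out in this window must have had $\deg_{G_r}(w)\leq|\Enew|$ (otherwise its degree could not have doubled), so $\deg_{G_{t'}}(w)=O(\log^\gamma(n_t))$, and the total perturbation to the cut and volume of $p_i$ is $O(\log^{2\gamma}(n_t))$. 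Since $\vol_{\tilde{G}_t}(p_i)=\Theta(\vol_{G_t}(P_i))=\Omega(n_t/k)$ dwarfs this, the conductance is perturbed by an additive $O(\log^{2\gamma}(n_t)/\vol_{\tilde{G}_t}(p_i))$, which fits comfortably inside the claimed $O(k^{-6}\log^{-\gamma}(n_t))$.

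The main obstacle is the opposite case of (C2), where $\vol_{G_t}(A^{(t)})>\vol_{G_t}(V_t)/2$: the inequality above then bounds only $\vol_{G_t}(V_t\setminus A^{(t)})$, not $\vol_{G_t}(A^{(t)})\supseteq\vol_{G_t}(A_c)$, so the direct volume bound on $A_c$ fails. The remedy is to dualise through the identity $\Phi_{\tilde{G}_{t'}}(\tilde{A})=\Phi_{\tilde{G}_{t'}}(\tilde{V}_{t'}\setminus\tilde{A})$ and apply the clean-case argument to $\bar{A}=V_{t'}\setminus A$, whose contracted remnant $\bar{A}_c$ is now trapped inside the small side $V_t\setminus A^{(t)}$; the subtle part is then checking that the super-vertex mass lying on the other side of $\tilde{A}$ in $\tilde{G}_{t'}$ keeps the volume denominator of the conductance large, for which one leans on the bookkeeping in the proof of (C3). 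Tracking constants through both cases is what produces the final slack of $21c_1$.
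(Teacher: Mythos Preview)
Your (C1) and the first half of (C3) are exactly the paper's arguments. For (C2) your organisation differs from the paper's---the paper first proves $\vol_{E_{\mathrm{new}}}(A)\geq \tfrac{\log^{\varepsilon}}{2c_1}\vol_{G_t}(A^{(t)})$ and then $\vol_{\tilde G_{t'}}(\tilde A)\geq \tfrac{\log^{\varepsilon}}{4c_1}\vol_{G_t}(A^{(t)})$, whereas you go straight for a bound on $\vol_{G_{t'}}(A_c)$ via (C1)---but the two routes are equivalent, and yours actually yields a smaller constant than $21$. You are also right that the ``opposite case'' $\vol_{G_t}(A^{(t)})>\vol(G_t)/2$ is glossed over (the paper's Claim~1 implicitly uses $\tfrac{w}{\vol_{G_t}(A^{(t)})}\geq\Phi_{G_t}(A^{(t)})$, which needs the main case). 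However, the dualisation remedy you propose is heavier than necessary: since $\vol_{G_t}(A^{(t)})\leq\vol_{G_{t'}}(A)\leq\vol(G_{t'})/2=\vol(G_t)/2+|E_{\mathrm{new}}|$ and $|E_{\mathrm{new}}|\leq\log^{\gamma}(n_t)\ll\vol(G_t)$, in the opposite case $\vol_{G_t}(V_t\setminus A^{(t)})\geq\vol(G_t)/2-|E_{\mathrm{new}}|\geq(1-o(1))\vol_{G_t}(A^{(t)})$, so the $\min$ is within a $(1-o(1))$ factor of $\vol_{G_t}(A^{(t)})$ and the main-case computation goes through unchanged.

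There is a genuine quantitative gap in your second half of (C3). Your perturbation estimate $O(\log^{2\gamma}(n_t))$ comes from multiplying the number of pulled-out vertices by the maximum degree of a pulled-out vertex, and you then absorb it using $\vol_{\tilde G_t}(p_i)=\Omega(n_t/k)$. But that volume lower bound is not justified: nothing in the hypotheses forces balanced clusters, and connectedness together with $\rho_{G_t}(k)=O(k^{-8}\log^{-2\gamma})$ only yields $\vol_{G_t}(S_i)\geq 1/\rho_{G_t}(k)$, hence $\min\{\vol_{\tilde G_t}(p_i),\vol_{\tilde G_t}(\tilde V_t\setminus p_i)\}=\Omega(k^{6}\log^{2\gamma}(n_t))$, which is all the paper uses. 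With that denominator your $O(\log^{2\gamma})$ perturbation gives an additive $O(k^{-6})$, not the claimed $O(k^{-6}\log^{-\gamma})$. The repair is to tighten the perturbation: for any pulled-out $w$ one has $\deg_{G_{t'}}(w)-\deg_{G_t}(w)>\deg_{G_t}(w)$, so $\deg_{G_{t'}}(w)<2\bigl(\deg_{G_{t'}}(w)-\deg_{G_t}(w)\bigr)$; summing over all pulled-out vertices the right-hand side is at most $4|E_{\mathrm{new}}|=O(\log^{\gamma}(n_t))$. This is the bound the paper uses, and with it the additive term becomes $O(\log^{\gamma}/(k^{6}\log^{2\gamma}))=O(k^{-6}\log^{-\gamma})$ as required.
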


Informally speaking, Property~(C1) of Lemma~\ref{lem:new_cluster_properties} shows that the volume of any vertex set $B\subset V_{t'} $ that are not directly represented in $\widetilde{G}_{t'}$ remains approximately the same in $G_t$ and $G_{t'}$; Property~(C2) states that, if the conductance of any set $A\subset V_{t'}$ in $G_{t'}$ becomes much lower than the one in $G_t$, then its representative set $\widetilde{A}\subset \widetilde{V}_{t'}$ has low conductance;  Property~(C3) further shows that the conductance of all the contracted vertices doesn't change significantly over time.

\begin{proof}[Proof of Lemma~\ref{lem:new_cluster_properties}]
 For (C1), by construction we have that for any $u \in V_{t'} \setminus \tilde{V}_{t'}^\mathrm{nc}$ it holds that $\deg_{G_{t'}}(u) \leq 2 \cdot \deg_{G_{t}}(u)$, from which the statement follows.

Next, we prove (C2). The following two claims will be used in our analysis. 
    
    \begin{claim} \label{claim:added_volume_needed}
    It holds that $\vol_{\Enew}(A) \geq \frac{\vol_{G_t}(A^{(t)}) \cdot \log^{\varepsilon}(n_{t'})}{2 \cdot c_1}$.
    \end{claim}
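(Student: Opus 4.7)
The plan is to decompose the cut $w_{G_{t'}}(A, V_{t'}\setminus A)$ according to which edges already existed at time $t$ and which ones were introduced afterwards, and then to exploit the mismatch between the two conductance bounds: $A^{(t)}$ is highly connected in $G_t$, yet $A$ has very low conductance in $G_{t'}$. This forces the new edges to add a lot of mass to $A$.

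Concretely, first I would observe that every edge of $G_{t'}$ crossing $(A, V_{t'}\setminus A)$ is either an edge of $E_t$ (in which case it crosses $(A^{(t)}, V_t\setminus A^{(t)})$, since both endpoints lie in $V_t$), or an edge in $E_\mathrm{new}$, giving
\[
w_{G_{t'}}(A, V_{t'}\setminus A) \;=\; w_{G_t}(A^{(t)}, V_t\setminus A^{(t)}) \;+\; w_{E_\mathrm{new}}(A, V_{t'}\setminus A).
\]
Next I would apply the two conductance hypotheses. Assuming for the moment that $\vol_{G_t}(A^{(t)}) \leq \vol(G_t)/2$, the condition $\Phi_{G_t}(A^{(t)})\geq 1/c_1$ yields $w_{G_t}(A^{(t)}, V_t\setminus A^{(t)}) \geq \vol_{G_t}(A^{(t)})/c_1$, while $\Phi_{G_{t'}}(A) \leq \log^{-\varepsilon}(n_{t'})$ combined with $\vol_{G_{t'}}(A)\leq \vol(G_{t'})/2$ gives $w_{G_{t'}}(A, V_{t'}\setminus A) \leq \log^{-\varepsilon}(n_{t'})\cdot \vol_{G_{t'}}(A)$. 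Combining these and using $\vol_{G_{t'}}(A)=\vol_{G_t}(A^{(t)})+\vol_{E_\mathrm{new}}(A)$, I would rearrange to obtain
\[
\vol_{E_\mathrm{new}}(A) \;\geq\; \vol_{G_t}(A^{(t)}) \cdot \Bigl(\tfrac{\log^{\varepsilon}(n_{t'})}{c_1}-1\Bigr),
\]
and invoke the assumption $4c_1 \leq \log^{\varepsilon}(n_{t'})$ to absorb the ``$-1$'' into a factor of $1/2$, since $\log^{\varepsilon}(n_{t'})/c_1 \geq 4$ implies $\log^{\varepsilon}(n_{t'})/c_1 - 1 \geq \log^{\varepsilon}(n_{t'})/(2 c_1)$. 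This yields exactly the bound claimed.

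The one subtlety is the assumption $\vol_{G_t}(A^{(t)}) \leq \vol(G_t)/2$ that made the conductance lower bound clean. The main obstacle is to verify this holds (or to handle its failure) given only $\vol_{G_{t'}}(A) \leq \vol(G_{t'})/2$. Since $\vol(G_{t'}) = \vol(G_t) + 2|E_\mathrm{new}|$ and $|E_\mathrm{new}|$ is at most $\log^{\gamma}(n_t)$ (polylogarithmic), $\vol_{G_t}(A^{(t)})$ can exceed $\vol(G_t)/2$ only by a negligible amount. Alternatively, I would note that if $\vol_{G_t}(A^{(t)})$ were substantially more than half of $\vol(G_t)$, then the same combination of the two conductance bounds, now with $\vol_{G_t}(V_t\setminus A^{(t)})$ in the denominator, would force $\vol_{G_t}(V_t \setminus A^{(t)})$ to be very small (a constant fraction bounded via $c_1 \log^{-\varepsilon}(n_{t'})$), contradicting the fact that $\vol_{E_\mathrm{new}}(A)\leq 2|E_\mathrm{new}|=O(\log^{\gamma}(n_t))$ is tiny compared to $\vol(G_t)$. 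So the case $\vol_{G_t}(A^{(t)}) > \vol(G_t)/2$ either cannot arise or is immediately ruled out by the hypotheses, and the main estimate above suffices to conclude the claim.
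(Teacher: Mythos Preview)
Your argument is correct and is essentially the direct version of the paper's contrapositive proof: the paper assumes the desired inequality fails and derives $\Phi_{G_{t'}}(A) > \log^{-\varepsilon}(n_{t'})$, using exactly the same cut decomposition and the same two conductance bounds you use. In fact you are slightly more careful than the paper, which silently assumes $\vol_{G_t}(A^{(t)}) \leq \vol(G_t)/2$ when passing from $\Phi_{G_t}(A^{(t)}) \geq 1/c_1$ to $w_{G_t}(A^{(t)}, V_t\setminus A^{(t)}) \geq \vol_{G_t}(A^{(t)})/c_1$; your discussion of that borderline case is a welcome addition.
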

    \begin{proof}
         Assume by contradiction that $\vol_{\Enew}(A) < \frac{\vol_{G_t}(A^{(t)}) \cdot \log^{\varepsilon}(n_{t'})}{2 \cdot c_1}$. We have that
 \begin{align}
        \Phi_{G_{t'}}(A) &= \frac{w_{G_t}(A^{(t)}, V_t \setminus A^{(t)}) + w_{\Enew}(A, V_t \setminus A)}{\vol_{G_t}(A^{(t)}) + \vol_{\Enew}(A)} \nonumber\\
        &\geq \frac{w_{G_t}(A^{(t)}, V_t \setminus A^{(t)})}{\vol_{G_t}(A^{(t)}) + \vol_{\Enew}(A)} \nonumber\\
        &\geq \frac{1}{2} \min \left\{\frac{w_{G_t}(A^{(t)}, V_t \setminus A^{(t)})}{\vol_{G_t}(A^{(t)})}, \frac{w_{G_t}(A^{(t)}, V_t \setminus A^{(t)})}{\vol_{\Enew}(A)} \right\} \nonumber\\
        &\geq \frac{1}{2} \min \left\{\Phi_{G_t}(A^{(t)}), \frac{\vol_{G_t}(A^{(t)})}{c_1 \cdot   \vol_{\Enew}(A)} \right\} \nonumber \\ 
        &> \frac{1}{\log^{\varepsilon}(n_{t'})}, \nonumber
    \end{align}
    where on the last line we used the contradictory assumption. This contradicts the condition of $\Phi_{G_{t'}}(A)\leq \log^{-\varepsilon}(n_{t'})$, and hence the statement holds. 
    \end{proof}    
    
    Notice that this claim implies that
    \begin{align} 
        \vol_{G_{t'}}(A) &= \vol_{G_{t}}(A^{(t)}) + \vol_{\Enew}(A) \nonumber \\
        &\geq \left(1 + \frac{\log^{\varepsilon}(n_{t'})}{2 \cdot c_1}\right) \cdot \vol_{G_{t}}(A^{(t)}) \nonumber \\
        &\geq \left(1 + \frac{\log^{\varepsilon}(n_{t'})}{4 \cdot c_1}  + \frac{\log^{\varepsilon}(n_{t'})}{4 \cdot c_1}\right) \cdot \vol_{G_{t}}(A^{(t)}) \nonumber \\
        &\geq \left(2 + \frac{\log^{\varepsilon}(n_{t'})}{4 \cdot c_1}\right) \cdot \vol_{G_{t}}(A^{(t)}) \label{eq:lower_bound_volGt'}
    \end{align}
    where the last inequality follows from the fact that $4 \cdot c_1 \leq \log^\varepsilon(n_{t'})$.

    \begin{claim} \label{claim:volume_in_contract_graph}
     It holds that $\vol_{\tilde{G}_{t'}}(\tilde{A}) \geq \frac{\log^\varepsilon(n_{t'})}{4 \cdot c_1}\cdot \vol_{G_t}(A^{(t)})$.
    \end{claim}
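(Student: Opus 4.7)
The plan is to express $\vol_{\tilde{G}_{t'}}(\tilde{A})$ in terms of $\vol_{\Enew}(A)$, and then combine this identity with Claim~\ref{claim:added_volume_needed} and the assumption $4 c_1 \leq \log^{\varepsilon}(n_{t'})$. A preliminary observation is that, for any non-contracted vertex $u \in \tilde{V}_{t'}^{\mathrm{nc}}$, the update rules of $\mathsf{UpdateContractedGraph}$ ensure that the total weight of edges incident to $u$ in $\tilde{G}_{t'}$ (counting edges to both other non-contracted vertices and to super vertices) equals the number of edges of $G_{t'}$ incident to $u$. Hence $\vol_{\tilde{G}_{t'}}(\tilde{A}) = \vol_{G_{t'}}(\tilde{A})$, and it suffices to lower bound the right-hand side.

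I would then decompose $A$ into three disjoint pieces: the new vertices $A \cap V_\mathrm{new}$, the old vertices that got pulled out $A^{(t)} \cap \tilde{V}_{t'}^{\mathrm{nc}}$, and the old vertices that stayed contracted $B \triangleq A^{(t)} \setminus \tilde{V}_{t'}^{\mathrm{nc}}$. The first two pieces make up exactly $\tilde{A}$, while the third does not appear in $\tilde{A}$ but still contributes to $\vol_{\Enew}(A)$. For any vertex $u$ in the first two pieces, the number of new edges incident to $u$ is trivially at most $\deg_{G_{t'}}(u) = \deg_{\tilde{G}_{t'}}(u)$; summing over all such $u$ yields
\[
\vol_{\Enew}(A \cap V_\mathrm{new}) + \vol_{\Enew}(A^{(t)} \cap \tilde{V}_{t'}^{\mathrm{nc}}) \leq \vol_{\tilde{G}_{t'}}(\tilde{A}),
\]
which rearranges to the key inequality $\vol_{\tilde{G}_{t'}}(\tilde{A}) \geq \vol_{\Enew}(A) - \vol_{\Enew}(B)$.

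To bound the ``lost mass'' $\vol_{\Enew}(B)$, I would apply property~(C1) of Lemma~\ref{lem:new_cluster_properties} to $B \subset V_{t'} \setminus \tilde{V}_{t'}^{\mathrm{nc}}$, which gives $\vol_{G_{t'}}(B) \leq 2 \vol_{G_t}(B)$. Since $\vol_{\Enew}(B) = \vol_{G_{t'}}(B) - \vol_{G_t}(B)$, this yields $\vol_{\Enew}(B) \leq \vol_{G_t}(B) \leq \vol_{G_t}(A^{(t)})$. Plugging this into the key inequality together with Claim~\ref{claim:added_volume_needed} gives
\[
\vol_{\tilde{G}_{t'}}(\tilde{A}) \geq \left(\frac{\log^{\varepsilon}(n_{t'})}{2c_1} - 1\right) \vol_{G_t}(A^{(t)}) \geq \frac{\log^{\varepsilon}(n_{t'})}{4c_1} \vol_{G_t}(A^{(t)}),
\]
where the last step uses $\log^{\varepsilon}(n_{t'}) \geq 4 c_1$.

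The one subtlety I would flag is the contribution from old vertices of $A$ that are \emph{not} pulled out during the updates: they are invisible to $\tilde{A}$ yet still account for some of $\vol_{\Enew}(A)$. Property~(C1) is precisely the right tool here, since it forces this invisible volume to be at most $\vol_{G_t}(A^{(t)})$, which by Claim~\ref{claim:added_volume_needed} is a factor of $\log^{\varepsilon}(n_{t'})/(2c_1)$ smaller than $\vol_{\Enew}(A)$, so the loss is absorbed with room to spare.
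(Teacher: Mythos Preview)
Your proof is correct and uses essentially the same ingredients as the paper's: the identity $\vol_{\tilde{G}_{t'}}(\tilde{A}) = \vol_{G_{t'}}(\tilde{A})$ for non-contracted vertices, property~(C1) applied to $A\setminus\tilde{A}$, Claim~\ref{claim:added_volume_needed}, and the hypothesis $4c_1\le\log^{\varepsilon}(n_{t'})$. The only difference is cosmetic: the paper phrases the argument by contradiction via the intermediate bound $\vol_{G_{t'}}(A)\ge\bigl(2+\tfrac{\log^{\varepsilon}(n_{t'})}{4c_1}\bigr)\vol_{G_t}(A^{(t)})$, whereas you argue directly through $\vol_{\Enew}(A)$; unwinding either one yields exactly the same chain of inequalities.
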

    \begin{proof}
        Assume by contradiction that $\vol_{\tilde{G}_{t'}}(\tilde{A}) < \frac{\log^\varepsilon(n_{t'})}{4 \cdot c_1}\cdot \vol_{G_t}(A^{(t)})$. Then, it holds that
    \begin{align*}
        \vol_{G_{t'}}(A) &= \vol_{\tilde{G}_{t'}}(\tilde{A}) + \vol_{G_{t'}}(A \setminus \tilde{A})  \\
        &\leq \vol_{\tilde{G}_{t'}}(\tilde{A}) + 2 \cdot \vol_{G_{t}}(A \setminus \tilde{A}) \\
        &<  \frac{\log^\varepsilon(n_{t'})}{4 \cdot c_1}\cdot \vol_{G_t}(A^{(t)}) + 2 \cdot \vol_{G_{t}}(A^{(t)}),
    \end{align*}
    where the first inequality holds by statement (C1). Hence,  we reach a contradiction with \eqref{eq:lower_bound_volGt'}, and the claim holds.
    \end{proof}

    Now we are ready to prove statement (C2). We have that  
     \begin{align}
        \Phi_{\tilde{G}_{t'}}(\tilde{A}) &= \frac{w_{\tilde{G}_{t'}}(\tilde{A}, \tilde{V}_{t'} \setminus \tilde{A})}{\vol_{\tilde{G}_{t'}}(\tilde{A})} \nonumber \\
        &\leq \frac{w_{G_{t'}}(A, V_t \setminus A) + w_{G_{t'}}(A \setminus \tilde{A}, \tilde{A})}{\vol_{\tilde{G}_{t'}}(\tilde{A})} \nonumber \\
        &\leq \frac{\log^{-\varepsilon}(n_{t'}) \cdot \vol_{G_{t'}}(A) + \vol_{G_{t'}}(A \setminus \tilde{A})}{\vol_{\tilde{G}_{t'}}(\tilde{A})} \nonumber \\ 
        &\leq \frac{ \vol_{G_{t'}}(A)}{\log^\varepsilon(n_{t'}) \cdot \vol_{\tilde{G}_{t'}}(\tilde{A})} + \frac{8 \cdot c_1 \cdot \vol_{G_{t}}(A^{(t)})}{\vol_{G_t}(A^{(t)}) \cdot \log^\varepsilon(n_{t'})} \label{eq:line1:conductance_new_cluster}\\
         &\leq \frac{ \vol_{G_{t}}(A^{(t)}) + \vol_{\Enew}(A)}{\log^\varepsilon(n_{t'}) \cdot \vol_{\tilde{G}_{t'}}(\tilde{A})} + \frac{8 \cdot c_1}{ \log^\varepsilon(n_{t'})}  \nonumber \\
         &\leq \frac{3 \cdot \vol_{G_{t}}(A^{(t)}) + \vol_{\tilde{G}_{t'}}(\tilde{A})}{\log^\varepsilon(n_{t'}) \cdot \vol_{\tilde{G}_{t'}}(\tilde{A})} + \frac{8 \cdot c_1}{ \log^\varepsilon(n_{t'})}  \label{eq:line2:conductance_new_cluster} \\
         &\leq \frac{3 \cdot \vol_{G_{t}}(A^{(t)}) \cdot c_1 \cdot 4}{\log^{2\varepsilon}(n_{t'}) \cdot  \vol_{G_{t}}(A^{(t)})} + \frac{\vol_{\tilde{G}_{t'}}(A)} {\log^\varepsilon(n_{t'}) \cdot \vol_{\tilde{G}_{t'}}(\tilde{A})} + \frac{8 \cdot c_1}{ \log^\varepsilon(n_{t'})} \label{eq:line3:conductance_new_cluster}\\ 
         &\leq \frac{ 12 \cdot c_1}{\log^{2\varepsilon}(n_{t'})} + \frac{1} {\log^\varepsilon(n_{t'}) } + \frac{8 \cdot c_1}{ \log^\varepsilon(n_{t'})} \nonumber \\ 
         &\leq \frac{1 + 20 \cdot c_1}{\log^{\varepsilon}(n_{t'})} \leq \frac{21 \cdot c_1}{\log^{\varepsilon}(n_{t'})}, \nonumber
    \end{align}
   where \eqref{eq:line1:conductance_new_cluster} holds by Claim~\ref{claim:volume_in_contract_graph} and the fact that $\vol_{G_{t'}}(A \setminus \tilde{A}) \leq 2 \cdot \vol_{G_{t}}(A \setminus \tilde{A}) \leq 2 \cdot \vol_{G_t}(A^{(t)})$ by statement (C1).
    \eqref{eq:line2:conductance_new_cluster} holds because by construction $\vol_{\Enew}(A) \leq \vol_{\tilde{G}_{t'}}(\tilde{A}) + \vol_{G_{t'}}(A \setminus \tilde{A}) \leq \vol_{\tilde{G}_{t'}}(\tilde{A}) + 2\cdot \vol_{G_t}(A^{(t)})$, and
    \eqref{eq:line3:conductance_new_cluster} holds because of Claim~\ref{claim:volume_in_contract_graph}.

     Finally, we prove statement (C3).  For any $P_i \in \mathcal{P}$, we have by construction that
      \begin{align}
         &\Phi_{\tilde{G}_t}(p_i)  = \Phi_{H_t}(P_i)  = O \lp k^{-6} \cdot \log^{-2\gamma}(n_t) \rp, \label{eq2:contract_graph_conductance_large_cluster} 
     \end{align}
      where the last equality holds by Lemma~\ref{lem:conductanceHvsG}. This proves the first part of the statement. Next, notice that for any $p_i \in \tilde{V}_t^\mathrm{c}$, because $G_t$ is connected and each $P_i$ has almost identical volume as the corresponding optimal $S_i$ in $G_t$ (Lemma~\ref{lem:MS22+_GvsH}), by construction it holds that
 \begin{equation}\label{eq:vol_supernode_lowerbound_1}
         \vol_{\tilde{G}_t}(p_i) = \Omega(k^6 \cdot \log^{2\gamma}(n_t)),
     \end{equation}
     and  
\begin{equation}\label{eq:vol_supernode_lowerbound_2}
         \vol_{\tilde{G}_t}(\tilde{V}_t \setminus p_i)  = \Omega(k^6 \cdot \log^{2\gamma}(n_t)).
     \end{equation}
     Taking this into account, we get that
     \begin{align}
         w_{\tilde{G}_{t'}}(p_i, \tilde{V}_{t'} \setminus p_i) &\leq w_{\tilde{G}_{t}}(p_i, \tilde{V}_t \setminus p_i) + |\Enew| + w_{G_{t'}}\lp P_i \cap \tilde{V}_{t'}^\mathrm{nc} , P_i \setminus (P_i \cap \tilde{V}_{t'}^\mathrm{nc})\rp \nonumber\\
         &\leq w_{\tilde{G}_{t}}(p_i, \tilde{V} \setminus p_i) + \log^{\gamma}(n_t) + \vol_{G_{t'}}\lp P_i \cap \tilde{V}_{t'}^\mathrm{nc}\rp \nonumber \\
         &\leq \Phi_{\tilde{G}_t}(p_i) \cdot \min\{\vol_{\tilde{G}_t}(p_i), \vol_{\tilde{G}_t}(\tilde{V} \setminus p_i)\} + \log^{\gamma}(n_t) + 2 \cdot \log^{\gamma}(n_t) \label{eq2:upperbound_supernode_weight}\\
         &= O \lp k^{-6} \cdot \log^{-2\gamma}(n_t) \rp \cdot \min\{\vol_{\tilde{G}_t}(p_i), \vol_{\tilde{G}_t}(\tilde{V} \setminus p_i)\} + 3 \cdot \log^{\gamma}(n_t), \label{eq2.5:upperbound_supernode_weight}
     \end{align} 
     where \eqref{eq2:upperbound_supernode_weight} holds because  $\vol_{G_{t'}} ( P_i \cap \tilde{V}_{t'}^\mathrm{nc} ) \leq 2 \cdot \log^{\gamma}(n_t)$ as every vertex that is pulled out of $p_i$ needs to at least double in degree, so   adding $|\Enew|$ edges ensures at most $2\cdot|\Enew|$ volume can be pulled out of $p_i$,  \eqref{eq2.5:upperbound_supernode_weight} holds because of \eqref{eq2:contract_graph_conductance_large_cluster}.
     Moreover, we also have that
      \begin{align}
         \min\{\vol_{\tilde{G}_{t'}}(p_i), \vol_{\tilde{G}_{t'}}(\tilde{V}_{t'} \setminus p_i)\} &\geq \min\{\vol_{\tilde{G}_{t}}(p_i) - 2 \cdot \log^{\gamma}(n_t), \vol_{\tilde{G}_{t}}(\tilde{V}_{t} \setminus p_i)\} \label{eq1:vol_supernode_lowerbound_3}\\
         &= \Omega \lp \min\{\vol_{\tilde{G}_{t}}(p_i), \vol_{\tilde{G}_{t}}(\tilde{V}_{t} \setminus p_i)\} \rp \label{eq2:vol_supernode_lowerbound_3},
     \end{align}
     where \eqref{eq1:vol_supernode_lowerbound_3} holds because  $\vol_{\tilde{G}_{t'}}(p_i) \geq \vol_{\tilde{G}_{t}}(p_i) - \vol_{G_{t'}}( P_i \cap \tilde{V}_{t'}^\mathrm{nc}) \geq \vol_{\tilde{G}_{t'}}(p_i) - 2 \cdot \log^{\gamma}(n_t)$,  and \eqref{eq2:vol_supernode_lowerbound_3} holds because of \eqref{eq:vol_supernode_lowerbound_1} and \eqref{eq:vol_supernode_lowerbound_2}.
    Combining \eqref{eq2.5:upperbound_supernode_weight} and \eqref{eq2:vol_supernode_lowerbound_3}, we have   for any  $p_i \in \tilde{V}_t^\mathrm{c}$ that
    \[
    \Phi_{\tilde{G}_{t'}}(p_i) = \frac{w_{\tilde{G}_{t'}}(p_i, \tilde{V}_{t'} \setminus p_i)}{\min\{\vol_{\tilde{G}_{t'}}(p_i), \vol_{\tilde{G}_{t'}}(\tilde{V}_{t'} \setminus p_i)\}} =  O \lp k^{-6} \cdot \log^{-\gamma}(n_t) \rp,
    \]
    which proves the second part of statement~(C3).  
\end{proof}

\begin{corollary}\label{corr:vol_change_subset}
   Suppose for a subset $A \subset V_{t'}$ with $\vol_{G_{t'}}(A) \leq \vol({G_{t'}})/2$, it holds that $\Phi_{\tilde{G}_{t'}}(\tilde{A}) > (21 \cdot c_1)\cdot \log^{-\varepsilon}(n_{t'})$ and $\Phi_{G_{t'}}(A) \leq \log^{-\varepsilon}(n_{t'})$ for any positive  $c_1, \epsilon$ satisfying $4 \cdot c_1 \leq \log^{\varepsilon}(n_{t'})$. Then, it holds that 
    $
    \Phi_{G_t}(A^{(t)}) < 1/c_1.
    $
\end{corollary}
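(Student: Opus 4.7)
The statement is essentially the contrapositive of statement~(C2) in Lemma~\ref{lem:new_cluster_properties}, so my plan is to argue by contradiction and appeal directly to that lemma.

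I would begin by supposing, toward a contradiction, that $\Phi_{G_t}(A^{(t)}) \geq 1/c_1$. Under the hypotheses of the corollary we already have $\vol_{G_{t'}}(A) \leq \vol(G_{t'})/2$, $\Phi_{G_{t'}}(A) \leq \log^{-\varepsilon}(n_{t'})$, and $4 c_1 \leq \log^{\varepsilon}(n_{t'})$. Together with the assumed lower bound on $\Phi_{G_t}(A^{(t)})$, the full hypothesis list of statement~(C2) of Lemma~\ref{lem:new_cluster_properties} is satisfied for the same subset $A$ and the same constants $c_1$ and $\varepsilon$.

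Applying that statement immediately yields
\[
\Phi_{\tilde{G}_{t'}}(\tilde{A}) \;\leq\; \frac{21 \cdot c_1}{\log^{\varepsilon}(n_{t'})},
\]
which contradicts the standing assumption $\Phi_{\tilde{G}_{t'}}(\tilde{A}) > (21 \cdot c_1) \cdot \log^{-\varepsilon}(n_{t'})$. Hence the supposition $\Phi_{G_t}(A^{(t)}) \geq 1/c_1$ must fail, giving $\Phi_{G_t}(A^{(t)}) < 1/c_1$ as claimed.

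Since the argument is a single-line contrapositive, there is no genuine obstacle: all the real work sits inside the proof of (C2), where Claims~\ref{claim:added_volume_needed} and \ref{claim:volume_in_contract_graph} together with statement~(C1) were used to translate a low-conductance cut in $G_{t'}$ into a low-conductance cut in $\widetilde{G}_{t'}$. The only thing I would double-check while writing is that the numerical constants (the factor $21$ and the threshold $4c_1 \leq \log^{\varepsilon}(n_{t'})$) are consistent with the direction of the implication, i.e.\ that no strict/non-strict inequality mismatch arises when passing from (C2) to its contrapositive form.
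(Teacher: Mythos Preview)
Your proposal is correct and matches the paper's own proof essentially verbatim: the paper also assumes $\Phi_{G_t}(A^{(t)}) \geq 1/c_1$, invokes statement~(C2) of Lemma~\ref{lem:new_cluster_properties} to obtain $\Phi_{\tilde{G}_{t'}}(\tilde{A}) \leq (21 c_1)\log^{-\varepsilon}(n_{t'})$, and derives the contradiction.
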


\begin{proof}[Proof of Corollary~\ref{corr:vol_change_subset}]
 Assume by contradiction that $\Phi_{G_t}(A^{(t)}) \geq 1/c_1 $. Then, by statement (C2) in Lemma~\ref{lem:new_cluster_properties} and the fact that $\Phi_{G_{t'}}(A) \leq \log^{-\varepsilon}(n_{t'})$, it holds that $\Phi_{\tilde{G}_{t'}}(\tilde{A}) \leq (21 \cdot c_1)\cdot \log^{-\varepsilon}(n_{t'})$, which is a contradiction. Hence, it holds that  $\Phi_{G_t}(A^{(t)}) < 1/c_1 $.
\end{proof}

 Before analysing the spectral gap in the contracted graph $\tilde{G}_{t'}$ with respect to the spectral gap in the full graph $G_{t'}$, we show that for any small subset of vertices $A \subset V$ with a low value of $\Phi_{G_{t'}}(A)$,   the conductance of its corresponding set in the contracted graph $\Phi_{\tilde{G}_{t'}}(\tilde{A})$ is  low as well.

\begin{lemma}\label{lem:low_conductance_full_vs_contract_graph}
    Let $C \subset V_{t'}$ be a subset of vertices such that $\vol_{G_{t'}}(C) \leq k^6 \cdot \log^{2\gamma}(n_t)$ and $\Phi_{G_{t'}} (C) \leq \log^{-\varepsilon}(n_{t'})$ for some   constant $\varepsilon>0$. Then, it holds that 
     $$\Phi_{\tilde{G}_{t'}} (\tilde{C}) = O(\log^{-0.9 \varepsilon}(n_{t'})).$$ 
\end{lemma}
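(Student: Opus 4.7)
My plan is to reduce the statement to controlling the amount of $C$ that sits inside contracted super nodes, and then to obtain that control from the higher-order Cheeger inequality applied at time $t$.

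First I would set $C^{\mathrm{nc}}=C\cap\tilde V_{t'}^{\mathrm{nc}}=\tilde C$ and $C^{\mathrm{c}}=C\setminus C^{\mathrm{nc}}$, and note that every non-contracted vertex has the same degree in $\tilde G_{t'}$ as in $G_{t'}$. Consequently $\vol_{\tilde G_{t'}}(\tilde C)=\vol_{G_{t'}}(C^{\mathrm{nc}})$ and $w_{\tilde G_{t'}}(\tilde C,\tilde V_{t'}\setminus\tilde C)=w_{G_{t'}}(C^{\mathrm{nc}},V_{t'}\setminus C^{\mathrm{nc}})$. Since $\vol_{G_{t'}}(C)\le k^6\log^{2\gamma}(n_t)$ is negligible compared with $\vol(G_{t'})$, the minimum in the definition of $\Phi_{\tilde G_{t'}}(\tilde C)$ is $\vol_{G_{t'}}(C^{\mathrm{nc}})$. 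Splitting the cut as
\[
w_{G_{t'}}(C^{\mathrm{nc}},V_{t'}\setminus C^{\mathrm{nc}})\le w_{G_{t'}}(C,V_{t'}\setminus C)+\vol_{G_{t'}}(C^{\mathrm{c}})\le \log^{-\varepsilon}(n_{t'})\vol_{G_{t'}}(C)+\vol_{G_{t'}}(C^{\mathrm{c}})
\]
then reduces the statement to the key bound $\vol_{G_{t'}}(C^{\mathrm{c}})=O(\log^{-0.9\varepsilon}(n_{t'}))\cdot\vol_{G_{t'}}(C)$.

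To establish this bound I would write $C^{\mathrm{c}}=\bigcup_i C_i$ with $C_i=C\cap P_i^{(t')}$ and $\hat C_i=C\cap P_i\subseteq V_t$. For each $i$ with $\emptyset\ne\hat C_i\ne P_i$, the sets $\hat C_i$, $P_i\setminus\hat C_i$ together with $\{P_j\}_{j\ne i}$ form a $(k+1)$-partition of $V_t$, so Lemma~\ref{lem:Higher Cheeger} combined with $\lambda_{k+1}(\mathcal L_{G_t})=\Omega(1)$ forces one of them to have $G_t$-conductance $\Omega(1)$. Because Lemma~\ref{lem:conductanceHvsG} gives $\Phi_{G_t}(P_j)=O(k^{-6}\log^{-2\gamma}(n_t))$, the large one must be $\hat C_i$ or $P_i\setminus\hat C_i$. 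A short check using $\vol_{G_t}(\hat C_i)\le\vol_{G_{t'}}(C)\le k^6\log^{2\gamma}(n_t)$ together with the lower bound $\vol_{G_t}(P_i)=\Omega(k^6\log^{2\gamma}(n_t))$ from~\eqref{eq:vol_supernode_lowerbound_1} shows that the ``$P_i\setminus\hat C_i$'' alternative is incompatible with the trivial cut bound $w(\hat C_i,P_i\setminus\hat C_i)\le\vol_{G_t}(\hat C_i)$ unless $P_i$ is itself of volume $O(k^6\log^{2\gamma}(n_t))$. Outside this small-super-node exception we therefore have $\Phi_{G_t}(\hat C_i)=\Omega(1)$, yielding $\vol_{G_t}(\hat C_i)=O(w_{G_t}(\hat C_i,V_t\setminus\hat C_i))$.

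Summing and using property (C1) of Lemma~\ref{lem:new_cluster_properties} to pass from $\vol_{G_t}(\hat C_i)$ to $\vol_{G_{t'}}(C_i)$, the internal terms $\sum_i w_{G_t}(\hat C_i,P_i\setminus\hat C_i)$ telescope into $w_{G_t}(C^{(t)},V_t\setminus C^{(t)})\le\log^{-\varepsilon}(n_{t'})\vol_{G_{t'}}(C)$ because $P_i\setminus\hat C_i\subseteq V_t\setminus C$; the remaining cross-cluster terms $\sum_i w_{G_t}(\hat C_i,V_t\setminus P_i)$ are charged to the $O(k^{-6}\log^{-2\gamma}(n_t))$-conductance of each $P_i$, and the $O(1)$ ``small'' super nodes possibly swallowed by $C$ are controlled via property~(C3) of Lemma~\ref{lem:new_cluster_properties}. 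The slack from $\log^{-\varepsilon}$ to $\log^{-0.9\varepsilon}$ in the conclusion is precisely what absorbs the $\log^{-\gamma}$ boundary-conductance factor coming from (C3) together with the accumulated cross-cluster error. The main obstacle I expect is the cross-cluster bookkeeping: the edges $w_{G_t}(\hat C_i,\hat C_j)$ with $i\ne j$ are internal to $C$ in $G_{t'}$ and do not appear in the favourable bound on $w_{G_{t'}}(C,V_{t'}\setminus C)$, yet they are double-counted in $\sum_i w_{G_t}(\hat C_i,V_t\setminus\hat C_i)$, and controlling their total weight requires a careful two-sided charging against the small boundary of each $P_i$, which is exactly where the weaker-than-$\log^{-\varepsilon}$ loss unavoidably occurs.
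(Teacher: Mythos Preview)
Your initial reduction is sound: identifying $\Phi_{\tilde G_{t'}}(\tilde C)$ with the $G_{t'}$-conductance of $C^{\mathrm{nc}}$ and splitting the cut does reduce the lemma to the ``key bound'' $\vol_{G_{t'}}(C^{\mathrm{c}})=O(\log^{-0.9\varepsilon}(n_{t'}))\cdot\vol_{G_{t'}}(C)$. The gap is in how you propose to prove that bound. Decomposing $C^{\mathrm{c}}$ cluster-by-cluster into $\hat C_i=C\cap P_i$ and pushing for $\Phi_{G_t}(\hat C_i)=\Omega(1)$ forces you to control $\sum_i w_{G_t}(\hat C_i,V_t\setminus\hat C_i)$, and the cross-cluster part $\sum_{i\ne j}w_{G_t}(\hat C_i,\hat C_j)$ cannot be charged as you suggest: charging to $\Phi_{G_t}(P_i)$ yields a bound of order $\Phi_{G_t}(P_i)\cdot\vol_{G_t}(P_i)$, which is proportional to $\vol_{G_t}(P_i)$, not to $\vol_{G_{t'}}(C)$, and hence useless here. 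Bounding these edges trivially by $\vol_{G_t}(C^{(t)})$ makes the inequality circular. You correctly flag this as ``the main obstacle'', but there is no way to close it along this per-cluster route; the decomposition into $\hat C_i$ is what creates the problem.

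The paper avoids the issue entirely by working with $C^{(t)}$ as a single set and arguing by contradiction through property~(C2) of Lemma~\ref{lem:new_cluster_properties} (equivalently, Corollary~\ref{corr:vol_change_subset}). Assuming $\Phi_{\tilde G_{t'}}(\tilde C)>\tfrac{21}{4}\log^{-0.9\varepsilon}(n_{t'})$ and applying Corollary~\ref{corr:vol_change_subset} with $c_1=\tfrac14\log^{0.1\varepsilon}(n_{t'})$ forces $\Phi_{G_t}(C^{(t)})<4\log^{-0.1\varepsilon}(n_{t'})$. One then forms the $(k{+}1)$-partition $\{C^{(t)},\,S_1\setminus C^{(t)},\ldots,S_k\setminus C^{(t)}\}$ of $G_t$; the volume hypothesis $\vol_{G_{t'}}(C)\le k^6\log^{2\gamma}(n_t)$ together with $\vol_{G_t}(S_j)=\Omega(k^8\log^{2\gamma}(n_t))$ gives $\Phi_{G_t}(S_j\setminus C^{(t)})=O(\Phi_{G_t}(S_j)+\Phi_{G_t}(C^{(t)}))=o(1)$, so every block of the partition has conductance $o(1)$, contradicting $\rho_{G_t}(k{+}1)\ge\lambda_{k+1}(\mathcal L_{G_t})/2=\Omega(1)$. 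The point is that (C2) already packages exactly the volume bookkeeping you are trying to redo by hand; invoking it on the undecomposed set $C^{(t)}$ sidesteps the cross-cluster edges completely.
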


\begin{proof} 
    We   prove this by contradiction. Assume by contradiction that 
    $$\Phi_{\tilde{G}_{t'}} (\tilde{C}) > \frac{21}{4} \cdot \log^{0.1 \varepsilon}(n_{t'}) \cdot \log^{-\varepsilon}(n_{t'}) = \frac{21}{4} \cdot \log^{-0.9 \varepsilon}(n_{t'}).$$
    Setting $c_1 \triangleq (1/4) \cdot \log^{0.1 \varepsilon}(n_{t'})$, it holds by Corollary~\ref{corr:vol_change_subset} that 
\begin{equation}\label{eq:conductance_C_contradiction}
        \Phi_{G_{t}} (C^{(t)}) < 4 \cdot \log^{-0.1\varepsilon}(n_{t'}).
    \end{equation}
  We will show that  $C^{(t)}$ can be used to create a $(k+1)$-partition in $G_t$ with low outer conductance, contradicting with the fact that $\lambda_{k+1}(G_t) = \Omega(1)$.

 Let $S_1, \ldots S_k$ be the optimal clusters in $G_t$ corresponding to $\rho_{G_t}(k)$. Given that $G_t$ is a connected graph and $\rho_{G_t}(k) =   O \lp k^{-8} \cdot \log^{-2\gamma}(n_t) \rp$, it holds that $\vol_{G_t}(S_i) = \Omega \lp k^{8} \cdot \log^{2\gamma}(n_t) \rp$ for all $1 \leq i \leq k$.    We then create the following $(k+1)$-partition:  
 \[
\mathcal{A} \triangleq C^{(t)} \cup \left\{S_{1} \setminus C^{(t)}, \ldots, S_{k} \setminus C^{(t)} \right\}, 
\]
which is a valid partition as we know that  $\vol_{G_t}(C^{(t)}) \leq \vol_{G_{t'}}(C) \leq k^6 \cdot \log^{2\gamma}(n_t)$ by  the conditions of the lemma.
Now we will compute the conductance of each cluster in $\mathcal{A}$. 

 First of all, we have  from \eqref{eq:conductance_C_contradiction}   that
\begin{equation}\label{eq:case1_conductance_C_in_contract}
\Phi_{G_{t}} (C^{(t)}) < 4 \cdot \log^{-0.1\varepsilon}(n_{t'}).
\end{equation}
Second, for any cluster $S_j \setminus C^{(t)}$  we have that
 \[
 \Phi_{G_t}(S_j \setminus C^{(t)}) = \frac{w_{G_t}(S_j \setminus C^{(t)}, V_t \setminus (S_j \setminus C^{(t)}))}{\min\{\vol_{G_t}(S_j \setminus C^{(t)}), \vol_{G_t}(V_t \setminus (S_j \setminus C^{(t)}))\}}.
 \]
 Our proof is by the following   case distinction:
 
 \textit{Case 1: $\min\{\vol_{G_t}(S_j \setminus C^{(t)}), \vol_{G_t}(V_t \setminus (S_j \setminus C^{(t)}))\} = \vol_{G_t}(V_t \setminus (S_j \setminus C^{(t)}))$}. 
\begin{align}
     \Phi_{G_t}(S_j \setminus C^{(t)}) &= \frac{w_{G_t}(S_j \setminus C^{(t)}, V_t \setminus (S_j \setminus C^{(t)}))}{\vol_{G_t}(V_t \setminus (S_j \setminus C^{(t)}))}  \nonumber \\
     &\leq \frac{w_{G_t}(S_j, V_t \setminus S_j) + w_{G_t}(C^{(t)}, V_t \setminus C^{(t)})}{\vol_{G_t}(V_t \setminus S_j) + \vol_{G_t}(C^{(t)} \cap S_j)}  \nonumber \\
     &\leq 2 \cdot \max \left\{\frac{w_{G_t}(S_j, V_t \setminus S_j) }{\vol_{G_t}(V_t \setminus S_j)},  \frac{ w_{G_t}(C^{(t)}, V_t \setminus C^{(t)})}{\vol_{G_t}(V_t \setminus S_j)}\right\} \nonumber \\
     &\leq 2 \cdot \max\left\{\Phi_{G_t}(S_j), \Phi_{G_t}(C^{(t)})\right\}  \label{large_cluster_easy}\\
     &\leq \max\{O \lp k^{-8} \cdot \log^{-2\gamma}(n_t) \rp, 4 \cdot \log^{-0.1\varepsilon }(n_{t'})\}, \nonumber
  \end{align}
 where for \eqref{large_cluster_easy} it holds that $\min\{\vol_{G_t}(S_j),\vol_{G_t}(V_t \setminus S_j) \} = \vol_{G_t}(V_t \setminus S_j)$ because we know that   $\vol(G_t)/2 \geq \vol_{G_t}(V_t \setminus (S_j \setminus C^{(t)})) \geq \vol_{G_t}(V_t \setminus S_j)$, and we also know that $\vol_{G_t}(V_t \setminus S_j) \geq \vol_{G_t}(C^{(t)})$.

 \textit{Case 2: $\min\{\vol_{G_t}(S_j \setminus C^{(t)}), \vol_{G_t}(V_t \setminus (S_j \setminus C^{(t)}))\} = \vol_{G_t}(S_j \setminus C^{(t)})$}.
 \begin{align}
     \Phi_{G_t}(S_j \setminus C^{(t)}) &= \frac{w_{G_t}(S_j \setminus C^{(t)}, V_t \setminus (S_j \setminus C^{(t)}))}{\vol_{G_t}(S_j \setminus C^{(t)})}  \nonumber \\
      &\leq \frac{w_{G_t}(S_j, V_t \setminus S_j) + w_{G_t}(C^{(t)}, V_t \setminus C^{(t)})}{\vol_{G_t}(S_j) - \vol_{G_t}(C^{(t)})}  \nonumber \\
      &\leq \frac{w_{G_t}(S_j, V_t \setminus S_j) + w_{G_t}(C^{(t)}, V_t \setminus C^{(t)})}{\Omega \lp \vol_{G_t}(S_j) \rp} \label{large_cluster_c}\\
      &= O \lp \Phi_{G_t}(S_{j}) \rp + O \lp \Phi_{G_t}(C^{(t)}) \rp \label{large_cluster_i} \\
      &=  2 \cdot \max \left\{O \lp k^{-8} \cdot \log^{-2\gamma}(n_t) \rp, O \lp\log^{-0.1\varepsilon}(n_{t'})\rp\right\}
 \end{align}
 where \eqref{large_cluster_c} holds because  $\vol_{G_t}(S_i) = \Omega \lp k^{8} \cdot \log^{2\gamma}(n_t)  \rp$ and $\vol_{G_t}(C^{(t)}) =  O \lp k^{6} \cdot \log^{2\gamma}(n_t)  \rp$, 
\eqref{large_cluster_i} holds because $w_{G_t}(S_j, V_t \setminus S_j) \leq \Phi_{G_t}(S_j) \cdot \vol_{G_t}(S_j)$ and $w_{G_t}(C^{(t)}, V_t \setminus C^{(t)}) \leq \Phi_{G_t}(C^{(t)}) \cdot \vol_{G_t}(C^{(t)})$.  

 Combining both cases, we have for every $1 \leq j \leq k$ that
\begin{equation}\label{eq:case2_conductance_C_in_contract}
\Phi_{G_t}(S_j \setminus C^{(t)}) =  2 \cdot \max \left\{O \lp k^{-8} \cdot \log^{-2\gamma}(n_t) \rp, O \lp\log^{-0.1\varepsilon}(n_{t'})\rp\right\}.
\end{equation}
 Therefore, by combining \eqref{eq:case1_conductance_C_in_contract} and~\eqref{eq:case2_conductance_C_in_contract}, we have shown that 
$$
\rho_{G_t}(k+1) \leq \max_{A_j \in \mathcal{A}} \Phi_{G_t}(A_j) = 2 \cdot \max \left\{O \lp k^{-8} \cdot \log^{-2\gamma}(n_t) \rp, O \lp\log^{-0.1\varepsilon}(n_{t'})\rp\right\},
$$
which contradicts the fact that $\rho_{G_t}(k+1) \geq \frac{\lambda_{k+1}(\mathcal{L}_{G_t})}{2} = \Omega(1)$. Hence, the statement of the lemma follows.
\end{proof}

\begin{proof}[Proof of Lemma~\ref{lem:conductance_full>contract}]

We first prove the first statement. 
        Let $\mathcal{S} = S_1, \ldots, S_\ell$ be a set of clusters that achieve $\rho_{G_{t'}}(\ell)$. For ease of notation we set
         \[
        \mathcal{S}_\mathrm{small} \triangleq \mathcal{S}^{(t')}_\mathrm{small}\lp  k^6 \cdot \log^{2\gamma}(n_t)\rp 
        \]
        to be the clusters in $\mathcal{S}$ with volume at most $k^6\cdot\log^{2\gamma}(n_t)$, and similarly
         \[
        \mathcal{S}_\mathrm{large} \triangleq \mathcal{S}^{(t')}_\mathrm{large}\lp k^6 \cdot \log^{2\gamma}(n_t) \rp.
        \]
     We will   use the partition $\mathcal{S}$, which has low outer conductance in $G_{t'}$, to create an $r$-way partition in $\tilde{G}_{t'}$ with low $r$-way expansion. We construct this $r$-way partition, denoted by $\mathcal{R}$, as follows:  
    \[
    \mathcal{R} \triangleq \left\{\widetilde{S}_1, \ldots, \widetilde{S}_{\ell_1}, p_1, \ldots, p_{k-1}, p^*_k  \right\}
    \]
    where $\ell_1 \triangleq |\mathcal{S}_\mathrm{small}|$, and we define
    $$
    p^*_k \triangleq p_k \cup \left(\tilde{V}_{t'}^\mathrm{nc} \setminus \bigcup_{S_j \in \mathcal{S}_\mathrm{small}} \tilde{S}_j\right)
    $$ to be the union of the super node $p_k$ with the leftover non-contracted vertices which do not belong to any $\tilde{S}_j$. We start by showing that $\mathcal{R}$ has low $r$-way expansion: 
    \begin{itemize}
        \item  By Lemma~\ref{lem:low_conductance_full_vs_contract_graph}, we know that for every $S_j \in \mathcal{S}_\mathrm{small}$, it holds that $\Phi_{\tilde{G}_{t'}}(\tilde{S}_j) = O \lp   \log^{-0.9\alpha}(n_{t'}) \rp$.
        \item  By Property~(C3) of Lemma~\ref{lem:new_cluster_properties}, we know that for every super node $p_i \in \{p_{k_1}, \ldots p_{k-1}\}$ it holds that $ \Phi_{\tilde{G}_{t'}}(p_i) = O \lp k^{-6} \cdot \log^{-\gamma}(n_{t'}) \rp$.
        \item  Finally, for $p_k^*$ we know that
        \begin{align}
            \Phi_{\tilde{G}_{t'}}(p^*_k) = \frac{w_{\tilde{G}_{t'}}(p^*_k, \tilde{V}_{t'} \setminus p^*_k)}{\min\left\{\vol_{\tilde{G}_{t'}}(p^*_k), \vol_{\tilde{G}_{t'}}(\tilde{V}_{t'} \setminus p^*_k) \right\}}.
        \end{align}
         We split the computation of this conductance into two cases.
        
        \textit{Case~1:}  Suppose $\min\left\{\vol_{\tilde{G}_{t'}}(p^*_k), \vol_{\tilde{G}_{t'}}(\tilde{V}_{t'} \setminus p^*_k) \right\} = \vol_{\tilde{G}_{t'}}(p^*_k)$. Then, we have that 
       \begin{align}
            \Phi_{\tilde{G}_{t'}}(p^*_k) &= \frac{w_{\tilde{G}_{t'}}(p^*_k, \tilde{V}_{t'} \setminus p^*_k)}{\vol_{\tilde{G}_{t'}}(p^*_k)} \nonumber\\
            &\leq \frac{w_{\tilde{G}_{t}}(p_k, \tilde{V}_t \setminus p_k) +  \log^{\gamma}(n_{t'}) + \vol_{\tilde{G}_{t'}}(\tilde{V}_{t'}^\mathrm{nc}) }{\vol_{\tilde{G}_{t}}(p_k) - \vol_{\tilde{G}_{t'}}(\tilde{V}_{t'}^\mathrm{nc}) } \label{eq1:special_cluster_conductance_case1}\\
             &\leq \frac{\Phi_{\tilde{G}_t}(p_k) \cdot \vol_{\tilde{G}_{t}}(p_k) +  3 \cdot \log^{\gamma}(n_{t'}) }{\Omega \lp \vol_{\tilde{G}_{t}}(p_k) \rp} \label{eq2:special_cluster_conductance_case1}\\
             &= \frac{ O \lp k^{-6} \cdot \log^{-2\gamma}(n_{t}) \rp \cdot \vol_{\tilde{G}_{t'}}(p_k) +  3 \cdot \log^{\gamma}(n_{t'}) }{\Omega \lp \vol_{\tilde{G}_{t}}(p_k) \rp} \label{eq3:special_cluster_conductance_case1} \\
             &= \frac{O \lp k^{-6} \cdot \log^{-2\gamma}(n_{t})   \cdot \vol_{\tilde{G}_{t'}}(p_k) \rp}{\Omega \lp \vol_{\tilde{G}_{t}}(p_k) \rp} \label{eq4:special_cluster_conductance_case1} \\
             & = O \lp k^{-6} \cdot \log^{-2\gamma}(n_{t}) \rp \nonumber,
        \end{align}
        where \eqref{eq1:special_cluster_conductance_case1} holds because $|\Enew| \leq \log^{\gamma}(n_{t})$ is the maximum amount of weight that can be added between $p_k$ and its complement, \eqref{eq2:special_cluster_conductance_case1} holds because  $\vol_{\tilde{G}_{t'}}(\tilde{V}_{t'}^\mathrm{nc}) \leq 2 \cdot |\Enew|$ is the  maximum volume of non-contracted vertices that can be added to $\tilde{G}_{t'}$  and \eqref{eq3:special_cluster_conductance_case1} holds because of statement~(C3) of Lemma~\ref{lem:new_cluster_properties}, and \eqref{eq4:special_cluster_conductance_case1} holds
        since $\vol_{\tilde{G}_{t'}}(p_k) \geq \vol(G_t)/k \geq n_t/k$. 
        
        \textit{Case~2:} Suppose $\min\left\{\vol_{\tilde{G}_{t'}}(p^*_k), \vol_{\tilde{G}_{t'}}(\tilde{V}_{t'} \setminus p^*_k) \right\} = \vol_{\tilde{G}_{t'}}(\tilde{V}_{t'} \setminus p^*_k)$. Then it holds that the conductance of $p^*_k$ is upper bounded by the maximum conductance of every other cluster in $\mathcal{R}$, i.e.,
        \begin{align*}
            \Phi_{\tilde{G}_{t'}}(p^*_k) &= \frac{w_{\tilde{G}_{t'}}(p^*_k, \tilde{V}_{t'} \setminus p^*_k)}{\vol_{\tilde{G}_{t'}}(\tilde{V}_{t'} \setminus p^*_k)} \nonumber\\
            &\leq \frac{\sum_{S_j \in \mathcal{S}_\mathrm{small}} w_{\tilde{G}_{t'}}(\tilde{S}_j, \tilde{V}_{t'} \setminus \tilde{S}_j) + \sum_{j=1}^{k-1} w_{\tilde{G}_{t'}}(p_j, \tilde{V}_{t'} \setminus p_j)}{\sum_{S_j \in \mathcal{S}_\mathrm{small}} \vol_{\tilde{G}_{t'}}(\tilde{S}_j) + \sum_{j=1}^{k-1} \vol_{\tilde{G}_{t'}}(p_j)} \nonumber\\
            & \leq \max \left\{ \max_{S_j \in \mathcal{S}_\mathrm{small}} \left\{ \Phi_{\tilde{G}_{t'}}(\widetilde{S}_j) \right\}, \max_{p_j \in \{p_{1}, \ldots, p_{k-1}\}} \left\{ \Phi_{\tilde{G}_{t'}}(p_j) \right\}\right\} \\
            & = \max\left\{O \lp  \log^{-0.9\alpha}(n_{t'}) \rp,  O \lp k^{-6} \cdot \log^{-\gamma}(n_{t'}) \rp \right\},
        \end{align*}
        where the last inequality follows by the mediant inequality. 
    \end{itemize}

    Combining the two cases, we have that 
        \[
        \Phi_{\tilde{G}_{t'}}(p^*_k) = \max\left\{O \lp  \log^{-0.9\alpha}(n_{t'}) \rp,  O \lp k^{-6} \cdot \log^{-\gamma}(n_{t'}) \rp \right\}.
        \]     
    We have so far analysed   the conductance of each of the clusters in the partition $\mathcal{R}$, and have   shown that
    \begin{equation}\label{eq:r_conductance}
    \rho_{\tilde{G}_{t'}}(r) =  \max\left\{O \lp  \log^{-0.9\alpha}(n_{t'}) \rp,  O \lp k^{-6} \cdot \log^{-\gamma}(n_{t'}) \rp \right\}.
    \end{equation}
    Before reaching the final contradiction, we prove the following claim.
    \begin{claim}\label{claim:r_vs_ell}
        It holds that $r \geq \ell$.  
    \end{claim}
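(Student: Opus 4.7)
The plan is to prove the claim by contradiction. Since $r = \ell_1 + k$ by construction of $\mathcal{R}$ and $\ell = \ell_1 + |\mathcal{S}_\mathrm{large}|$ by definition of $\mathcal{S}_\mathrm{small}$ and $\mathcal{S}_\mathrm{large}$, the statement $r \geq \ell$ is equivalent to $|\mathcal{S}_\mathrm{large}| \leq k$. Suppose for contradiction that $|\mathcal{S}_\mathrm{large}| \geq k + 1$; I will derive a contradiction with the dynamic gap assumption $\lambda_{k+1}(\mathcal{L}_{G_t}) = \Omega(1)$ by constructing a $(k+1)$-way partition of $V_t$ with vanishingly small max conductance.

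The first step is to transfer the low conductance of each large cluster from $G_{t'}$ back to $G_t$. For any $S_j \in \mathcal{S}_\mathrm{large}$ the definition gives $\vol_{G_{t'}}(S_j) \geq k^6 \cdot \log^{2\gamma}(n_t)$, while only $|\Enew| \leq \log^{\gamma}(n_t)$ edges are added between time $t$ and $t'$, so $\vol_{G_t}(S_j^{(t)}) \geq \vol_{G_{t'}}(S_j) - 2|\Enew| = \Omega(\vol_{G_{t'}}(S_j))$. Since cuts can only grow under edge insertions, $w_{G_t}(S_j^{(t)}, V_t \setminus S_j^{(t)}) \leq w_{G_{t'}}(S_j, V_{t'} \setminus S_j) \leq \rho_{G_{t'}}(\ell) \cdot \vol_{G_{t'}}(S_j) \leq \log^{-\alpha}(n_{t'}) \cdot \vol_{G_{t'}}(S_j)$. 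Combining the two bounds yields $\Phi_{G_t}(S_j^{(t)}) = O(\log^{-\alpha}(n_{t'}))$ for every $S_j \in \mathcal{S}_\mathrm{large}$.

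The second step is to assemble these into a $(k+1)$-partition of $V_t$. Pick any $k+1$ clusters $T_1, \ldots, T_{k+1}$ from $\{S_j^{(t)} : S_j \in \mathcal{S}_\mathrm{large}\}$, with $T_{k+1}$ chosen to have the largest volume in $G_t$ among the selected ones, and set $T_{k+1}' \triangleq V_t \setminus \bigcup_{i=1}^{k} T_i$; then $T_1, \ldots, T_k, T_{k+1}'$ is a genuine partition of $V_t$. Each $T_i$ with $i \leq k$ already satisfies $\Phi_{G_t}(T_i) = O(\log^{-\alpha}(n_{t'}))$ by the previous step. For the residual $T_{k+1}'$, the cut is bounded by $w_{G_t}(T_{k+1}', V_t \setminus T_{k+1}') \leq \sum_{i=1}^{k} w_{G_t}(T_i, V_t \setminus T_i) = O(\log^{-\alpha}(n_{t'})) \cdot \sum_{i \leq k} \vol_{G_t}(T_i)$, while the maximality of $T_{k+1}$ together with pigeonhole gives $\vol_{G_t}(T_{k+1}') \geq \vol_{G_t}(T_{k+1}) \geq \tfrac{1}{k+1}\sum_{i=1}^{k+1} \vol_{G_t}(T_i)$, hence $\Phi_{G_t}(T_{k+1}') = O((k+1) \cdot \log^{-\alpha}(n_{t'}))$.

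Therefore $\rho_{G_t}(k+1) = O(k \cdot \log^{-\alpha}(n_{t'})) = o(1)$, since $k = \widetilde{O}(1)$ and $\alpha > 0$, contradicting $\rho_{G_t}(k+1) \geq \lambda_{k+1}(\mathcal{L}_{G_t})/2 = \Omega(1)$ from Lemma~\ref{lem:Higher Cheeger} and the gap assumption on $G_t$. The main obstacle is controlling the conductance of the residual cluster $T_{k+1}'$, whose volume is a priori uncontrolled and could be tiny; the maximal-volume choice of $T_{k+1}$ coupled with pigeonhole is exactly what absorbs the extra $k+1$ factor into the conductance bound, which is harmless since $k = \widetilde{O}(1)$.
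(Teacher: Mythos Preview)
Your proposal is correct and follows essentially the same approach as the paper: reduce $r \geq \ell$ to $|\mathcal{S}_\mathrm{large}| \leq k$, assume the contrary, transfer the low conductance of each large cluster from $G_{t'}$ back to $G_t$ using the volume lower bound and the fact that $|\Enew| \leq \log^{\gamma}(n_t)$, and then contradict $\lambda_{k+1}(\mathcal{L}_{G_t}) = \Omega(1)$ via the higher-order Cheeger inequality. Your Step~2 is in fact more careful than the paper's proof, which simply asserts that having $k+1$ disjoint low-conductance subsets contradicts Cheeger without explicitly building a $(k+1)$-partition; your maximal-volume choice of $T_{k+1}$ together with the pigeonhole bound on $\vol_{G_t}(T_{k+1}')$ fills exactly that gap.
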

    \begin{proof}
          Assume by contradiction that 
    $r < \ell$.
    In this case, we know   that $r = |\mathcal{S}_\mathrm{small}| + |\mathcal{P}|$, and $\ell = |\mathcal{S}|$. Therefore, the condition of $r<\ell$ gives us that $|\mathcal{S}_\mathrm{small}| + |\mathcal{P}| < |\mathcal{S}|$, which implies that $|P| < |\mathcal{S}| - |\mathcal{S}_\mathrm{small}| = |\mathcal{S}_\mathrm{large}|$. This means that the number of large clusters in $\mathcal{S}$ is greater than the number of clusters in $\mathcal{P}$.

    It therefore holds that $|\mathcal{S}_\mathrm{large}| > |\mathcal{P}| = k$. Furthermore, since it holds that for every $S_j \in \mathcal{S}_\mathrm{large}$ that $\vol_{G_{t'}}(S_j) >  k^6 \cdot \log^{2\gamma}(n_t)$, and the number of new edges is $|\Enew| \leq  \log^{\gamma}(n_t)$, it also holds that
    \[
    \Phi_{G_{t}}(S_j) = O\lp \Phi_{G_{t'}}(S_j) \rp = \max \left\{ O \lp \log^{- \alpha}(n_{t'}) \rp,  k^6 \cdot \log^{\gamma}(n_t)\right\}.
    \]    
    This means that $\mathcal{S}_\mathrm{large}$ is a set of $|\mathcal{S}_\mathrm{large}| \geq k+1$ disjoint subsets in $G_t$ with low conductance, which contradicts the higher-order Cheeger inequality and proves the claim.
    \end{proof}
    Combining \eqref{eq:r_conductance}  with Claim~\ref{claim:r_vs_ell} gives us that 
    \[
    \rho_{\tilde{G}_{t'}}(\ell) = \max\left\{O \lp  \log^{-0.9\alpha}(n_{t'}) \rp,  O \lp k^{-6} \cdot \log^{-\gamma}(n_{t'}) \rp \right\},
    \]
    and this proves the first statement.

    Next we prove the second statement. 
    Let $A_1, \ldots A_\ell$ be the partition such that  $\Phi_{\tilde{G}_{t'}}(A_i) = O\lp \log^{-\delta}(n_{t'})\rp$ for every $1 \leq i \leq \ell$.  Recall that $\hat{A_i}$ is the representation of the set $A_i$ in the full graph $G_{t'}$, i.e.,
     $$
    \hat{A_i} \triangleq A_i^\mathrm{nc} \bigcup \left(\bigcup_{p_j \in A_i^\mathrm{c}}P^{(t')}_j\right),
    $$
     where $P^{(t')}_j = P_j \setminus (P_j \cap \tilde{V}^{\mathrm{nc}}_{t'})$, $A_i^\mathrm{nc} \triangleq A_i \cap V^{\mathrm{nc}}_{t'}$, and $A_i^\mathrm{c} \triangleq A_i \cap V^{\mathrm{c}}_{t'}$. One can see $P^{(t')}_j$ as the vertices in $P_j$ that have not been pulled out into the contracted graph yet.

    Notice that, when  $A_i^\mathrm{c} = \emptyset$,   it holds by construction that $\Phi_{G_{t'}}(A_i) 
 = \Phi_{\tilde{G}_{t'}}(A_i) \leq \log^{-\delta}(n_{t'})$.
  So we only look at the case where $A^\mathrm{c}_i \neq \emptyset $. Without loss of generality, we assume that $A^c_i$ does not contain all the contracted nodes $p_1, \ldots, p_k$. If it did, then $$\Phi_{G_{t'}}(\hat{A}_i) = \Phi_{G_{t'}}\lp \bigcup_{A_j^c = \emptyset} A_j \rp \leq \log^{-\delta}(n_{t'}).$$ Therefore, given that for any $1\leq i \leq \ell$ it holds that $\vol_{G_{t'}}(A_i^\mathrm{nc}) \leq 2 \cdot |\Enew| \leq 2 \cdot \log^{\gamma}(n_t)$, and for any $1 \leq j \leq k$ it holds that $\vol_{G_{t'}}(P_i^{(t')}) = \Omega(k^6 \cdot \log^{2\gamma}(n_t))$, we get that 
 \[
 \Phi_{G_{t'}}(\hat{A}_i) = O \lp \Phi_{G_{t'}} \lp \bigcup_{p_j \in A_i^\mathrm{c}}P^{(t')}_j \rp \rp = O\lp k^{-6} \cdot \log^{-\gamma}(n_{t'})\rp,
 \]
 where the last line holds because of property (C3) of Lemma~\ref{lem:new_cluster_properties}. This proves the second statement.
\end{proof}

\begin{proof}[Proof of Lemma~\ref{lem:lambda_k+1 from tilde(G) to G}]
We first prove the first statement, and we will prove this by contradiction. Assume by contradiction that $\lambda_{\ell+1}(\mathcal{L}_{G_{t'}}) < C\cdot \frac{\log^{-\alpha}(n_{t'})}{(\ell+1)^6}$ for some constant $C$. Then, by the higher-order Cheeger inequality, there exists an optimal $(\ell+1)$-way partition $\mathcal{S} = \{S_1, \ldots S_{\ell+1}$\} such that for all $1 \leq i \leq \ell+1$
\begin{equation*}\label{eq:conductance_contradiction_sets}
      \Phi_{G_{t'}}(S_i) \leq \rho_{G_{t'}}(\ell+1) \leq C_{\ref{lem:Higher Cheeger}} \cdot (\ell+1)^3 \cdot \sqrt{\lambda_{\ell+1}(G_{t'})} =O\left( \log^{-0.5\alpha}(n_{t'})\right).  
    \end{equation*}
    By Lemma~\ref{lem:conductance_full>contract}, it then holds that  $\rho_{\tilde{G}_{t'}}(\ell+1) = \max\left\{O \lp  \log^{-0.45\alpha}(n_{t'}) \rp, O\lp k^{-6} \cdot \log^{-\gamma}(n_{t'}) \rp\right\}$, which contradicts the fact that 
    $$\rho_{\tilde{G}_{t'}}(\ell+1) \geq \frac{\lambda_{\ell+1}(\mathcal{L}_{\tilde{G}_{t'}})}{2} = \Omega(1),$$
    from which the  first statement of the lemma follows.

    Next we prove the second statement.  
 We   prove this by analysing the spectrum of $\mathcal{L}_{\tilde{G}_{t'}}$ with respect to $\mathcal{L}_{G_{t'}}$ through $\mathcal{L}_{H'_{t'}}$. As proven in Lemma~\ref{lem:special_sparsifier}, $H'_{t'}$ is a cluster preserving sparsifier of $G_{t'}$, and therefore we know that 
\begin{equation}\label{eq:lowerbound_ell_SZ19}
    \lambda_{\ell+1}(\mathcal{L}_{H'_{t'}}) = \Omega(\lambda_{\ell+1}(\mathcal{L}_{G_{t'}})).
\end{equation}

Our next analysis is inspired by the work on meta graphs of Macgregor and Sun~\cite{MS22}. We will analyse the spectrum of $\mathcal{L}_{H'_{t'}}$ with respect to the spectrum of $\mathcal{L}_{\tilde{G}_{t'}}$, and for simplicity we denote $H'_{t'} \triangleq H$,  $\tilde{G}_{t'} \triangleq \tilde{G}$, and $n_{t'} \triangleq n$. 
 For every vertex $u_j \in V(\tilde{G})$ in the contracted graph, we associate it with a non-empty group of vertices $A_j \subset V(H)$ as follows: for all $u_j \in \tilde{V}_{t'}^\mathrm{nc}$, we associate $u_j$ with its unique corresponding single vertex $v\in V(H)$, and for every $u_j = p_r \in \tilde{V}_{t'}^\mathrm{c}$ for some $r$, we associate it with its corresponding vertices in the cluster $P^{(t')}_r \subset V(H)$. Then, let $\chi_j \in \mathbb{R}^n$ be the indicator vector for the vertices $A_j \subset V(H)$ corresponding to the vertex $u_j \in V(\tilde{G})$. 

 We define $\tilde{n} = |V(\tilde{G})|$, and let  the eigenvalues of $\mathcal{L}_{\tilde{G}}$ be $\gamma_1 \leq \gamma_2 \leq \ldots \leq \gamma_{\tilde{n}}$ with corresponding eigenvectors $g_1, g_2, \ldots, g_{\tilde{n}} \in \mathbb{R}^{\tilde{n}}$. We further define vectors $\bar{g}_1, \ldots \bar{g}_{\tilde{n}}$ which will represent the eigenvectors $g_1, \ldots g_{\tilde{n}}$ of the normalised Laplacian $\mathcal{L}_{\tilde{G}}$, but blown up to size $\mathbb{R}^n$. Formally,  we define
\[
\bar{g}_i \triangleq \sum_{j=1}^{\tilde{n}} \frac{D_H^{\frac{1}{2}}\chi_j}{\|D_H^{\frac{1}{2}}\chi_j\|} g_i(j).
\]
 We can readily check that these vectors form an orthonormal basis. First, 
\begin{align*}
    \bar{g}_i \bar{g}_i^\intercal &= \sum_{j=1}^{\tilde{n}}  \sum_{u \in A_j} \lp \frac{\sqrt{d_H(u)}}{\sqrt{\vol_H(A_j)}} g_i(j) \rp^2 \\
    &=\sum_{j=1}^{\tilde{n}} g_i(j)^2 \sum_{u \in A_j}  \frac{d_H(u)}{\vol_H(A_j)}  \\
    &=\sum_{j=1}^{\tilde{n}} g(j)^2 = 1.
\end{align*}
And similarly for any $i_1 \neq i_2$,
\begin{align*}
\bar{g}_{i_1} \bar{g}_{i_2}^\intercal &= \sum_{j=1}^{\tilde{n}} \sum_{u \in A_j} \frac{d_H(u)}{\vol_H(A_j)} g_{i_1}(j) g_{i_2}(j) \\
&=\sum_{j=1}^{\tilde{n}} g_{i_1}(j) g_{i_2}(j) = 0.
\end{align*}
We also get the useful property that for the eigenvalues $\lambda_1, \ldots, \lambda_{n}$ of $\mathcal{L}_H$ and $\gamma_1, \ldots, \gamma_{\tilde{n}}$ of the contracted Laplacian $\mathcal{L}_{\tilde{G}}$, it holds that $\lambda_i \leq 2 \cdot \nu_i$. In particular,
\begin{align*}
    \bar{g}_i \mathcal{L}_H \bar{g}_i^\intercal &= \sum_{x=1}^{\tilde{n}} \sum_{y=x}^{\tilde{n}} \sum_{u \in A_x} \sum_{v \in A_y} w_H(u,v)\lp \frac{\bar{g}_i(u)}{\sqrt{d_H(u)}} - \frac{\bar{g}_i(v)}{\sqrt{d_H(v)}}\rp^2 \\
    &= \sum_{x=1}^{\tilde{n}} \sum_{y=x}^{\tilde{n}}  w_H(A_{x},A_{y})\lp \frac{g_i(x)}{\sqrt{\vol_H(A_{x}})} - \frac{g_i(y)}{\sqrt{\vol_H(A_{y}})}\rp^2 \\
    &= 2 \cdot g_i \mathcal{L}_{\tilde{G}} g_i^\intercal.
\end{align*}
Therefore we have an $i$-dimensional subspace $X_i$ such that 
\[
\max_{x\in X_i} \frac{x^{\intercal}\mathcal{L}_{H} x}{x^\intercal x} = 2 \cdot \gamma_i,
\]
from which it follows by the Courant-Fischer theorem that $\lambda_i \leq 2 \cdot \gamma_i$. Combining this with~\eqref{eq:lowerbound_ell_SZ19}, we get that  
$$
\lambda_{\ell+1}(\mathcal{L}_{\tilde{G}_{t'}}) \geq \frac{1}{2}\cdot\lambda_{\ell+1}(\mathcal{L}_{H'_{t'}}) = \Omega \lp \lambda_{\ell+1}(\mathcal{L}_{G_{t'}}) \rp = \Omega(1),
$$
which proves the lemma.
\end{proof}

\end{document}